\newmdenv[
  backgroundcolor=white,
  linecolor=black,
  linewidth=1pt,
  roundcorner=5pt,
  innerleftmargin=10pt,
  innerrightmargin=10pt,
  innertopmargin=10pt,
  innerbottommargin=10pt
]{protocolabox}
\newtcolorbox{myalgorithm}[2][]{
enhanced,
colback=white,
sharp corners,
colframe=black,
coltitle=black,
colbacktitle=white,        %
before title={\refstepcounter{algorithm}\label{#2}}, %
title={\textbf{Protocol~\thealgorithm} #1},
top=1mm,
bottom=1mm,
left=1mm,
right=1mm,
boxrule=0.5pt, %
}
\newcommand{\citeextended}{\ifdefined\isnotextended~\cite{Rose2024-UTraceFull}\fi\xspace}
        \renewenvironment{figure}[1][]{%
          \begingroup
            \def\@captype{figure}%
            \begin{minipage}{0pt}%
        }{%
            \end{minipage}%
          \endgroup}
        \renewenvironment{table}[1][]{%
          \begingroup
            \def\@captype{table}%
            \begin{minipage}{0pt}%
        }{%
            \end{minipage}%
          \endgroup}
        \renewenvironment{subfigure}[2][]{%
          \begingroup
            \def\@captype{figure}%
            \begin{minipage}{0pt}%
        }{%
            \end{minipage}%
          \endgroup}
        \renewenvironment{algorithm}[2][]{%
          \begingroup
            \def\@captype{algorithm}%
            \begin{minipage}{0pt}%
        }{%
            \end{minipage}%
          \endgroup}
        \let\origcaption\caption      %
        \renewcommand*\caption[2][]{%
          \phantomcaption             %
          \origcaption*[{##1}]{}       %
        }
        \renewcommand*{\includegraphics}[2][]{\relax}       %
   \let\xxwrite\write
   \protected\def\write{\immediate\xxwrite}%
   {\tiny XX\BODY XX}}
\newtheorem{theorem}{Theorem}[section]
\newtheorem{lemma}[theorem]{Lemma}
\newtheorem{definition}[theorem]{Definition}
\newcommand{\abs}[1]{| #1 |}
\newcommand{\bbR}{\ensuremath{\mathbb{R}}}
\newcommand{\bbE}{\ensuremath{\mathbb{E}}}
\newcommand{\Adv}{\ensuremath{\mathcal{A}}\xspace}
\newcommand{\Sim}{\ensuremath{\mathcal{S}}}
\newcommand{\ideal}{\ensuremath{\mathcal{F}}\xspace}
\newcommand{\idealTrunc}{\ensuremath{\ideal_\mathsf{trunc}}\xspace}
\newcommand{\idealSort}{\ensuremath{\ideal_\mathsf{sort}}\xspace}
\newcommand{\idealGradient}{\ensuremath{\ideal_\mathsf{gradient}}\xspace}
\newcommand{\idealDotprod}{\ensuremath{\ideal_\mathsf{dotp}}\xspace}
\newcommand{\idealDotprodNoTrunc}{\ensuremath{\ideal_\mathsf{int-dotp}}\xspace}
\newcommand{\idealRecip}{\ensuremath{\ideal_\mathsf{rsqrt}}\xspace}
\newcommand{\idealMul}{\ensuremath{\ideal_\mathsf{mul}}\xspace}
\newcommand{\idealMulNoTrunc}{\ensuremath{\ideal_\mathsf{int-mul}}\xspace}
\newcommand{\idealSGD}{\ensuremath{\ideal_\mathsf{SGD}}\xspace}
\newcommand{\idealLoss}{\ensuremath{\ideal_\mathsf{loss}}\xspace}
\newcommand{\idealCamelPreprocessing}{\ensuremath{\ideal_\mathsf{Unl-Pre}}\xspace}
\newcommand{\idealCamelOnline}{\ensuremath{\ideal_\mathsf{Unl-Online}}\xspace}
\newcommand{\idealGradPreprocessing}{\ensuremath{\ideal_\mathsf{Grad-Pre}}\xspace}
\newcommand{\idealGradOnline}{\ensuremath{\ideal_\mathsf{Grad-Online}}\xspace}
\newcommand{\protocolCamelPreprocessing}{\ensuremath{\sprotocol_\mathsf{Unl-Pre}}\xspace}
\newcommand{\protocolCamelOnline}{\ensuremath{\sprotocol_\mathsf{Unl-Online}}\xspace}
\newcommand{\smodel}{\ensuremath{f_\theta}\xspace}
\newcommand{\smodelI}{\ensuremath{f_{\theta_i}}\xspace}
\newcommand{\smodelW}{\ensuremath{\theta}\xspace}
\newcommand{\smodelWUnl}{\ensuremath{\theta_{-i}}\xspace}
\newcommand{\ssAbb}[1]{\ensuremath{[\,#1\,]}}
\newcommand{\sadversary}{\Adv\xspace}
\newcommand{\ssim}{\Sim\xspace}
\newcommand{\ssimSGD}{\ensuremath{\mathcal{S}_{\text{SGD}}}\xspace}
\newcommand{\ssimGradient}{\ensuremath{\mathcal{S}_{\text{Gradient}}}\xspace}
\newcommand{\PPT}{\textsf{PPT}\xspace}
\newcommand{\sPartyAll}{\ensuremath{\texttt{P}}\xspace}
\newcommand{\sPartyCorrupted}{\ensuremath{M_\sPartyAll}\xspace}
\newcommand{\Cset}{\mathcal{C}}
\newcommand{\real}{\textnormal{\texttt{Real}}}
\newcommand{\advAux}{\ensuremath{\textsf{aux}_{\sadversary}}}
\newcommand{\sabb}{\ensuremath{\mathcal{F}_{\text{ABB}}}\xspace}
\newcommand{\sabbid}{\ensuremath{\mathcal{F}_{\text{ABB [ID]}}}\xspace}
\newcommand{\sprotocol}{\ensuremath{\Pi}\xspace}
\newcommand{\sprotocolArc}{\ensuremath{\sprotocol_{\text{Arc}}}\xspace}
\newcommand{\sprotocolArcWithPreprocessing}{\ensuremath{\sprotocol_{\text{Arc-P}}}\xspace}
\newcommand{\sabbAAuditID}{\ensuremath{\sabbid\textsf{.Audit}}\xspace}
\newcommand{\sabbShare}{\ensuremath{\sabb\textsf{.SSShare}}\xspace}
\newcommand{\sabbOpen}{\ensuremath{\sabb\textsf{.SSOpen}}\xspace}
\newcommand{\sPartyAuditor}{\ensuremath{\texttt{C}}\xspace}
\newcommand{\sPartyAuditComputer}{\ensuremath{\texttt{AC}}\xspace}
\newcommand{\sPartyModelHolder}{\ensuremath{\texttt{M}}\xspace}
\newcommand{\strainset}{D\xspace}
\newcommand{\sauditfunction}{\ensuremath{f_\texttt{audit}}}
\newcommand{\sNumParties}{\ensuremath{{m}}\xspace}
\newcommand{\sabbAAuditPreprocessID}{\ensuremath{\sabbid\textsf{.Audit-Pre}}\xspace}
\newcommand{\sabbAAuditOnlineID}{\ensuremath{\sabbid\textsf{.Audit-Online}}\xspace}
\newcommand{\SSScheme}{\textsf{SS}}
\newcommand{\SSShare}{\textsf{\SSScheme.Share}}
\newcommand{\SSReconstruct}{\textsf{\SSScheme.Reconstruct}}
\newcommand{\sGset}{\ensuremath{g}\xspace}
\newcommand{\srecord}{\ensuremath{R}\xspace}
\newlist{hybrid}{enumerate}{1}
\setlist[hybrid]{align=left, itemsep=2pt, topsep=8pt, leftmargin=12pt, label={$\text{Hyb}_{\arabic*}$}, ref={\arabic*}}
\newlist{algos}{itemize}{2}
\setlist[algos]{align=left,itemsep=2pt,left=0pt,label=•}
\newlist{protocol_steps}{enumerate}{2}
\setlist[protocol_steps]{align=left,itemsep=0pt,topsep=0pt,left=0pt,label={\arabic*.},ref={\arabic*}}
\newacronym{cl}{CL}{Collaborative Learning}
\newacronym{ml}{ML}{Machine Learning}
\newacronym{lan}{LAN}{Local Area Network}
\newacronym{wan}{WAN}{Wide Area Network}
\newacronym{rtt}{RTT}{Round-Trip Time}
\newacronym{mpc}{MPC}{Multi-Party Computation}
\newacronym{ppml}{PPML}{Privacy-Preserving Machine Learning}
\newacronym{3pc}{3PC}{Three-Party Computation}
\newacronym{lsss}{SSS}{secret-sharing scheme}
\newacronym{knn}{kNN}{k-Nearest Neighbours}
\newcommand{\role}[1]{#1}
\newglossaryentry{r:datasource}{name={\role{data source}},description={},plural={\role{data sources}}}
\newglossaryentry{r:inputparty}{name={\role{data owner}},description={},plural={\role{data owners}}}
\newglossaryentry{r:tcomputer}{name={\role{training computer}},description={},plural={\role{training computers}}}
\newglossaryentry{r:modelowner}{name={\role{model owner}},description={},plural={\role{model owners}}}
\newglossaryentry{r:icomputer}{name={\role{inference computer}},description={},plural={\role{inference computers}}}
\newglossaryentry{r:acomputer}{name={\role{audit computer}},description={},plural={\role{audit computers}}}
\newglossaryentry{r:auditrequester}{name={\role{client}},description={}}
\newglossaryentry{r:client}{name={\role{client}},description={}}
\newglossaryentry{r:participants}{name={party},description={},plural={parties}}
\newcommand{\spartyinput}{\ensuremath{D}\xspace}
\newcommand{\spredictionX}{\ensuremath{\tilde{x}}\xspace}
\newcommand{\spredictionY}{\ensuremath{\tilde{y}}\xspace}
\newcommand{\spartyinputI}{\ensuremath{\spartyinput_1}\xspace}
\newcommand{\spartyinputN}{\ensuremath{\spartyinput_m}\xspace}
\newcommand{\Unl}{\ensuremath{\textsf{Unl}}\xspace}
\newcommand{\spartyinputs}{\ensuremath{\spartyinputI, \ldots, \spartyinputN}\xspace}
\newcommand{\spartyinputsCondensed}{\ensuremath{\mathbf{\spartyinput}}\xspace}
\newcommand{\secpartyinputs}{\ensuremath{\secs{\spartyinputI}, \ldots, \secs{\spartyinputN}}\xspace}
\newcommand{\secspartyinputsCondensed}{\ensuremath{\secs{\mathbf{\spartyinput}}}\xspace}
\newcommand{\scache}{\ensuremath{\secs{\mathbf{C}}}\xspace}
\newcommand{\scacheAbb}{\ensuremath{\ssAbb{\mathbf{C}}}\xspace}
\newcommand{\sample}{\ensuremath{\overset{{\scriptscriptstyle\$}}{\leftarrow}}}
\newcommand{\sring}{\ensuremath{\mathbb{Z}}\xspace}
\newcommand{\aux}{\ensuremath{\textsf{aux}}}
\newcommand{\secs}[1]{\ensuremath{[\![\,#1\,]\!]}}
\newcommand{\protocolOursPreprocessing}{\ensuremath{\sprotocol_\mathsf{Grad-Pre}}\xspace}
\newcommand{\protocolOursOnline}{\ensuremath{\sprotocol_\mathsf{Grad-Online}}\xspace}
\newcommand{\protocolOursHeuristicOnline}{\ensuremath{\sprotocol_\mathsf{Grad-Heuristic-Online}}\xspace}
\newcommand{\mat}[1]{\ensuremath{\mathsf{#1}}\xspace}
\newcommand{\malevent}{misclassification event\xspace}
\newcommand{\sNClasses}{\ensuremath{C}\xspace}
\newcommand{\sequal}{\ensuremath{\mathbf{u}_{\sNClasses}}\xspace}
\newsavebox{\@brx}
\newcommand{\llangle}[1][]{\savebox{\@brx}{\(\m@th{#1\langle}\)}%
	\mathopen{\copy\@brx\kern-0.6\wd\@brx\usebox{\@brx}}}
\newcommand{\rrangle}[1][]{\savebox{\@brx}{\(\m@th{#1\rangle}\)}%
	\mathclose{\copy\@brx\kern-0.6\wd\@brx\usebox{\@brx}}}
\newcommand{\Sh}{\ensuremath{\mathsf{sh}}}
\newcommand{{\piaSh}}[1]{\ensuremath{\Pi^{#1}_{\Sh}}}
\definecolor{lightmintbg}{rgb}{.53,.80,.92}
\tikzstyle{maldo} = [rectangle, minimum width=2.5cm, minimum height=0.2cm, text centered, draw=black, fill=orange!75]
\tikzstyle{hdo} = [rectangle,,minimum width=2.5cm, minimum height=0.25cm,text centered, draw=black, fill = lightmintbg!60]
\tikzstyle{malserver} = [rectangle,minimum width=0.7cm, minimum height=0.3cm, text centered, draw=black, fill=red!60]
\tikzstyle{hserver} = [rectangle,minimum width=0.7cm, minimum height=0.3cm, text centered, draw=black, fill=lightmintbg!60]
\tikzstyle{soc} = [rectangle, rounded corners,dashed,minimum width=3.6cm, minimum height=2.6cm, draw=black]
\tikzstyle{nota} = [rectangle, minimum width=8.5cm, minimum height= 0.4cm, font = \small, draw=black]
\tikzstyle{mdnota} = [rectangle, minimum width=0.4cm, minimum height=0.2cm, text centered,font = \small, draw=black, fill=orange!75]
\tikzstyle{msnota} = [rectangle,minimum width=0.4cm, minimum height=0.2cm, text centered, font = \small, draw=black, fill=red!60]
\tikzstyle{hnota} = [rectangle, minimum width=0.4cm, minimum height=0.2cm, text centered,font = \small, draw=black, fill=lightmintbg!60]
\tikzstyle{sarrow} = [ultra thin, <->,latex-latex]
\tikzstyle{darrow} = [thin,->,>=stealth]
\newcommand{\cI}{\mathcal{I}}
\newcommand{\cL}{\mathcal{L}}
\newcommand{\cN}{\mathcal{N}}
\newcommand{\cO}{\mathcal{O}}
\newcommand{\cT}{\mathcal{T}}
\newcommand{\cX}{\mathcal{X}}
\newcommand{\cY}{\mathcal{Y}}
\newcommand{\cZ}{\mathcal{Z}}
\DeclareMathOperator*{\argmin}{arg\,min}
\newlength{\maxlen}
\setlist[description]{style=unboxed,leftmargin=0cm}
\newcounter{itemcount}
\newcommand{\figlab}[1]{\label{fig:#1}}
\newenvironment{boxfig*}[2]{%
	\begin{figure*}[h!]		
		\fontsize{5}{5}\selectfont
		\newcommand{\FigCaption}{#1}
		\newcommand{\FigLabel}{#2}
		\vspace{-.05cm}
		\begin{center}
			\begin{small}			 
				\begin{adjustbox}{max width=\textwidth}
					\begin{tabular}{@{}|@{~~}l@{~~}|@{}}
						\hline
						\rule[-1ex]{0pt}{1ex}\begin{minipage}[b]{.95\linewidth}
							\vspace{1ex}	
						}{%
						\end{minipage}\\
						\hline
					\end{tabular}	
				\end{adjustbox}		
			\end{small}
			\vspace{-0.1cm}
			\caption{\FigCaption}
			\figlab{\FigLabel}
		\end{center}
		\vspace{-.38cm}
	\end{figure*}
}
\newenvironment{myboxfig*}[2]{%
	\begin{figure*}[!htb]		
		\fontsize{5}{5}\selectfont
		\newcommand{\FigCaption}{#1}
		\newcommand{\FigLabel}{#2}
		\vspace{-.10cm}
		\begin{center}
			\caption{\FigCaption}
			\begin{small}			 
				\begin{adjustbox}{max width=\textwidth}
					\begin{tabular}{@{}|@{~~}l@{~~}|@{}}
						\hline
						\rule[-1ex]{0pt}{1ex}\begin{minipage}[b]{.95\linewidth}
							\vspace{1ex}	
						}{%
						\end{minipage}\\
						\hline
					\end{tabular}	
				\end{adjustbox}		
			\end{small}
			\vspace{-0.25cm}
			\figlab{\FigLabel}
		\end{center}
		\vspace{-.38cm}
	\end{figure*}
}
\newenvironment{titlebox}[5]
{\mdfsetup{
		style=#2,
		innertopmargin=1.1\baselineskip,
		skipabove={\dimexpr0.7\baselineskip+\topskip\relax},
		skipbelow={1.5em},needspace=3\baselineskip,
		singleextra={\node[#3,right=10pt,overlay] at (P-|O){~{\sffamily\bfseries #1 }};},%
		firstextra={\node[#3,right=10pt,overlay] at (P-|O) {~{\sffamily\bfseries #1 }};},
		frametitleaboveskip=9em,
		innerrightmargin=5pt
	}
	\newcommand{\TitleCaption}{#4}
	\newcommand{\TitleLabel}{#5}
	\begin{mdframed}[font=\small]
		\setlist[itemize]{leftmargin=13pt}\setlist[enumerate]{leftmargin=13pt}\raggedright%
	}
	{\end{mdframed}
	\vspace{-1.5em}
	{\captionof{figure}{\normalfont \TitleCaption}\label{\TitleLabel}}
	\medskip
}
\tikzstyle{normal} = [thick, fill=white, text=black, draw, rounded corners, rectangle, minimum height=.7cm, inner sep=3pt]
\tikzstyle{gray} = [thick, fill=gray!90, text=white, rounded corners, rectangle, minimum height=.7cm, inner sep=3pt]
\newenvironment{systembox*}[3]
{\begin{strip}
		\vspace{\baselineskip}\begin{titlebox}{Functionality \normalfont #1}{roundbox}{normal}{#2}{#3}}
		{\end{titlebox}
\end{strip}}
\newenvironment{gsystembox*}[3]
{\begin{strip}
		\vspace{\baselineskip}\begin{titlebox}{Global Functionality \normalfont #1}{roundbox}{normal}{#2}{#3}}
		{\end{titlebox}
\end{strip}}
\newenvironment{protocolbox*}[3]
{\begin{strip}
		\begin{titlebox}{Protocol \normalfont #1}{commonbox}{normal}{#2}{#3}}
		{\end{titlebox}
\end{strip}}
\newenvironment{algobox*}[3]
{\begin{strip}
		\begin{titlebox}{Algorithm \normalfont #1}{commonbox}{normal}{#2}{#3}}
		{\end{titlebox}
\end{strip}}
\newenvironment{reductionbox*}[3]
{\begin{strip}
		\begin{titlebox}{Reduction \normalfont #1}{commonbox}{normal}{#2}{#3}}
		{\end{titlebox}
\end{strip}}
\newenvironment{gamebox*}[3]
{\begin{strip}
		\begin{titlebox}{Game \normalfont #1}{commonbox}{gray}{#2}{#3}}
		{\end{titlebox}
\end{strip}}
\newenvironment{simulatorbox*}[3]
{\begin{strip}
		\begin{titlebox}{Simulator \normalfont #1}{commonbox}{normal}{#2}{#3}}
		{\end{titlebox}
\end{strip}}
\newcounter{func}
\renewcommand{\thefunc}{F.\arabic{func}}
\newenvironment{functionalitybox*}[2]
{
\noindent
\begin{tcolorbox}[float,floatplacement=h,colframe=black, width=\columnwidth, boxrule=1pt, sharp corners=all, colback=white]
\begin{minipage}{1.0\columnwidth}
\textbf{Functionality \thefunc} #1 \\

\textbf{Input:} #2 \\
\textbf{Output:}}
{\end{minipage}\end{tcolorbox}}
\newenvironment{titlebox*}[5]
{\mdfsetup{
		style=#2,
		innertopmargin=0.3\baselineskip,
		skipabove={1.2em},
		skipbelow={1em},needspace=3\baselineskip,
		frametitleaboveskip=5em,
		innerrightmargin=5pt
	}
	\newcommand{\TitleCaption}{#4}
	\newcommand{\TitleLabel}{#5}
	\begin{mdframed}[font=\small]
		\setlist[itemize]{leftmargin=13pt}\setlist[enumerate]{leftmargin=13pt}\raggedright%
	}
	{\end{mdframed}
	\vspace{-1.5em}
	{\captionof{figure}{\normalfont \TitleCaption}\label{\TitleLabel}}
	\medskip
}
\newenvironment{mysystembox*}[3]
{\begin{strip}
		\vspace{\baselineskip}\begin{titlebox*}{Functionality \normalfont #1}{myroundbox}{normal}{#2}{#3}}
		{\end{titlebox*}
\end{strip}}
\newenvironment{mygsystembox*}[3]
{\begin{strip}
		\vspace{\baselineskip}\begin{titlebox*}{Global Functionality \normalfont #1}{myroundbox}{normal}{#2}{#3}}
		{\end{titlebox*}
\end{strip}}
\newenvironment{myprotocolbox*}[3]
{\begin{strip}
		\begin{titlebox*}{Protocol \normalfont #1}{mycommonbox}{normal}{#2}{#3}}
		{\end{titlebox*}
\end{strip}}
\newenvironment{myalgobox*}[3]
{\begin{strip}
		\begin{titlebox*}{Algorithm \normalfont #1}{mycommonbox}{normal}{#2}{#3}}
		{\end{titlebox*}
\end{strip}}
\newenvironment{myreductionbox*}[3]
{\begin{strip}
		\begin{titlebox*}{Reduction \normalfont #1}{mycommonbox}{normal}{#2}{#3}}
		{\end{titlebox*}
\end{strip}}
\newenvironment{mygamebox*}[3]
{\begin{strip}
		\begin{titlebox*}{Game \normalfont #1}{mycommonbox}{gray}{#2}{#3}}
		{\end{titlebox*}
\end{strip}}
\newenvironment{mysimulatorbox*}[3]
{\begin{strip}
		\begin{titlebox*}{Simulator \normalfont #1}{mycommonbox}{normal}{#2}{#3}}
		{\end{titlebox*}
\end{strip}}
\algnewcommand{\ExtendedState}[1]{\State
	\parbox[t]{\dimexpr\linewidth-\ALG@thistlm}{\hangindent=\algorithmicindent\strut\hangafter=3#1\strut}}
\algnewcommand\algorithmicinput{\textbf{Input:}}
\algnewcommand\Input{\item[\algorithmicinput]}
\algrenewcommand{\algorithmiccomment}[1]{{\color{gray}// #1}}
\newcommand{\lstep}[1]{\label{step:#1}}
\let\emptyset\varnothing
\DeclareMathOperator{\unl}{Unl}
\newcommand{\Icosmean}{\cI_{\cos}^{\mathrm{s}}}
\newcommand{\Icos}{\cI_{\cos}}
\newcommand{\Dtr}{D_{\mathrm{tr}}}
\newcommand{\topk}[1]{\mathrm{top\text{-}}#1}
\newcommand{\topki}[1]{\mathrm{top\text{-}}#1_I}
\newcommand{\topkl}[2]{\mathrm{top\text{-}}(#1, #2)}
\newcommand{\roberta}{\text{RoBERTa}_{\textsc{base}}}
\algnewcommand\algorithmicforeach{\textbf{for each}}
\DeclarePairedDelimiter\norm{\lVert}{\rVert}
\newcommand{\iprod}[2]{\left\langle #1,\, #2 \right\rangle}
\newcommand{\blue}[1]{{\color{black}{#1}}}
\newcommand{\red}[1]{{\color{red}{#1}}}
\newcommand{\todoCameraReady}[1]{{\color{green}{\small{TODO Camera Ready: #1}}}}
\renewcommand{\todoCameraReady}[1]{}
\newcommand{\sysname}{\text{UTrace}\xspace}
\newcommand{\myparagraph}[1]{\noindent \textbf{#1.}}
\newcommand{\syscamel}{\ensuremath{F_\text{Unl}}\xspace} %
\newcommand{\sysgradient}{\ensuremath{F_\text{Grad}}\xspace}
\begin{document}
\title{\sysname: Poisoning Forensics for Private Collaborative Learning}

\makeatletter
\newcommand{\linebreakand}{%
  \end{@IEEEauthorhalign}
  \hfill\mbox{}\par
  \mbox{}\hfill\begin{@IEEEauthorhalign}
}
\makeatother

\author{
\IEEEauthorblockN{Evan Rose\textsuperscript{*}}
\IEEEauthorblockA{Northeastern University\\
rose.ev@northeastern.edu}\and
\IEEEauthorblockN{Hidde Lycklama\textsuperscript{*}}
\IEEEauthorblockA{ETH Zurich\\
hidde.lycklama@inf.ethz.ch}\and
\IEEEauthorblockN{Harsh Chaudhari}
\IEEEauthorblockA{Northeastern University\\
chaudhari.ha@northeastern.edu}\linebreakand
\IEEEauthorblockN{Niklas Britz}
\IEEEauthorblockA{ETH Zurich\\
nbritz@student.ethz.ch}\and
\IEEEauthorblockN{Anwar Hithnawi}
\IEEEauthorblockA{University of Toronto\\
anwar.hithnawi@cs.toronto.edu}\and
\IEEEauthorblockN{Alina Oprea}
\IEEEauthorblockA{Northeastern University\\
a.oprea@northeastern.edu}
}

\maketitle

\makeatletter
\begingroup
  \renewcommand{\thefootnote}{\fnsymbol{footnote}}
  \long\def\@makefntext#1{%
    \noindent\hbox to 1.8em{\hss\@makefnmark\ }#1%
  }%
  \footnotetext[1]{Equal Contribution}%
\endgroup
\makeatother

\begin{abstract}

   Privacy-preserving machine learning (PPML) systems enable multiple data owners to collaboratively train models without revealing their raw, sensitive data by leveraging cryptographic protocols such as secure multi-party computation (MPC).
   While PPML offers strong privacy guarantees, it also introduces new attack surfaces: malicious data owners can inject poisoned data into the training process without being detected, thus undermining the integrity of the learned model.
   Although recent defenses, such as private input validation within MPC, can mitigate some specific poisoning strategies, they remain insufficient, particularly in preventing stealthy or distributed attacks.
   As the robustness of PPML remains an open challenge, strengthening trust in these systems increasingly necessitates post-hoc auditing mechanisms that instill accountability.
   In this paper we present UTrace, a framework for user-level traceback in PPML that attributes integrity failures to responsible data owners without compromising the privacy guarantees of MPC. UTrace encapsulates two mechanisms: a gradient similarity method that identifies suspicious update patterns linked to poisoning, and a user-level unlearning technique that quantifies each user's marginal influence on model behavior.
   Together, these methods allow UTrace to attribute model misbehavior to specific users with high precision.
   We implement UTrace within an MPC-compatible training and auditing pipeline and evaluate its effectiveness on four datasets spanning vision, text, and malware.
   Across ten canonical poisoning attacks, UTrace consistently achieves high detection accuracy with low false positive rates.
\end{abstract}

\IEEEpeerreviewmaketitle

\section{Introduction}
Privacy-preserving machine learning (PPML)~\cite{mohassel_secureml_2017,mohassel_aby3_2018,wagh_falcon_2020,rathee_cryptflow2_2020,patra_aby20_2021,abspoel_secure_2021} is an emerging collaborative learning paradigm that enables multiple data owners to jointly train a machine learning (ML) model on their combined datasets without revealing any party’s underlying data.
This approach is particularly beneficial in settings where data sharing across multiple sources is restricted due to privacy concerns and regulatory constraints, such as in the healthcare and finance sectors.
By enabling models to be trained on a diverse collection of datasets, collaborative learning can help unlock the potential of ML in areas where limited data availability has traditionally hindered progress.

PPML systems often rely on cryptographic techniques, such as secure multi-party computation~(MPC)~\cite{mohassel_secureml_2017,mohassel_aby3_2018,wagh_falcon_2020,rathee_cryptflow2_2020,patra_aby20_2021,abspoel_secure_2021}, to ensure the privacy of training data. 
While some protocols offer robustness against clients actively deviating from the protocol \cite{wagh_falcon_2020, dalskov_fantastic_2021}, they remain vulnerable to data poisoning attacks where adversaries introduce manipulated input data to influence the outcome of the ML model \cite{chaudhari_safenet_2022}.
Mitigating these attacks is particularly challenging in PPML, where conventional remedies, such as data sanitization~\cite{Activation_Clustering,CertifiedPoisoning,SpectralSign,SPECTRE}, are not directly applicable due to privacy constraints. Although some existing approaches in PPML, such as privacy-preserving input checks \cite{RoFL,ACORN,HOLMES} or ensemble training \cite{chaudhari_safenet_2022}, can help mitigate certain poisoning attacks, they cannot prevent all forms of data poisoning, leaving models susceptible to manipulation.

In response to growing concerns around AI safety and the realization that solely preventive defenses are not sufficient, there is increasing interest in auditing and forensic tools that hold models accountable after deployment\cite{Birhane2024-iy}.
Securing the entire ML lifecycle therefore necessitates a \emph{defense-in-depth strategy}: proactive safeguards during training, complemented by reactive mechanisms that detect failure and misconduct and attribute responsibility at deployment.

A representative class of post-hoc security 
mechanisms is poisoning traceback, which enables 
operators to identify the malicious training data 
responsible for model misbehavior once suspicious 
outputs are detected~\cite{PoisonForensics}. For
example, if a deployed model exhibits unexpected
behavior, traceback techniques can isolate specific
poisoned samples that induced the fault, facilitating
targeted forensic analysis and potential remediation.
However, such techniques inherently disclose additional
information about the system under inspection, raising
critical concerns about privacy and the risk of misuse. 
In the context of privacy-preserving machine learning
(PPML), it is essential that post hoc auditing
mechanisms maintain the underlying privacy guarantees
of the system. To address this tension, we argue that auditing at the party level, %
which shifts the granularity of the analysis from individual samples to the aggregate
behavior of each contributing party, offers a more
privacy-conscious alternative.
This mitigates the risk of exposing fine-grained or
identifiable information. For instance, in financial
applications, disclosing per-sample similarity metrics
may inadvertently reveal sensitive attributes of honest
users. However, auditing at the organization level supports
meaningful accountability while reducing the likelihood
of unintended data leakage.

In this paper, we introduce UTrace, a framework for user-level traceback of data 
poisoning in secure collaborative machine learning. UTrace integrates two general and 
distinct traceback mechanisms: one based on gradient similarity and another on approximate 
unlearning.
The gradient-based mechanism assigns a responsibility score to each user by measuring how 
closely their data gradients align with those that minimize the attacker's objective. This 
score is computed by aggregating gradient alignment across held-out probe points and 
reflects a core assumption of many poisoning attacks: poisoned points actively steer 
training toward malicious goals.
The second mechanism approximates user-level unlearning by measuring how the model's loss 
changes when a user's dataset is removed. This approximation avoids full retraining and 
proves particularly effective under distributed attacks. 
Although these two techniques capture different statistical signals, one from individual 
sample gradients and the other from user-wide removal effects, they are both general and 
attack agnostic. UTrace explicitly addresses challenging cases such as very low poisoning 
rates and coordinated attacks by focusing analysis on each user's most influential 
training points.
\blue{
We implement these traceback algorithms as privacy-preserving auditing functions within a secure multi-party computation (MPC) framework. 
To enable efficient execution, we extend this framework with support for secure preprocessing, removing the requirement for data owners to remain online at audit time.
Specifically, we introduce a secret-shared caching mechanism that stores intermediate preprocessing results, allowing audits to reuse prior computations without accessing raw data or requiring persistent input availability. This design significantly reduces the online overhead and enables audits to be conducted asynchronously, even when data owners are offline.
}
We further optimize the cryptographic protocols for secure cosine similarity, oblivious sample selection and delayed truncation.
We implement UTrace within a secure collaborative training and auditing pipeline and 
evaluate it on diverse datasets from vision, text, and malware domains. Across ten 
canonical poisoning attacks, UTrace consistently achieves high attribution accuracy and 
low false-positive rates, even under low-rate and distributed poisoning scenarios. Our 
results show that accountability after training, as enabled by UTrace, is a practical and 
necessary step toward robust and trustworthy privacy-preserving machine learning.

\section{Background and Related Work} 
We provide background on MPC-based PPML and data poisoning attacks, followed by a discussion of related work.

\subsection{Secure Multi-Party Computation}

There has been significant progress in PPML
in recent years, with secure multi-party computation (MPC) emerging as prominent approach that offers
the best performance by distributing trust among $n$ parties~\cite{goldreich_how_1987,ben-or_completeness_1988,damgard_multiparty_2011}.
In MPC, multiple parties
jointly perform training or inference without revealing their private inputs.
Currently, the most efficient MPC protocols for PPML rely on homomorphic secret sharing over a field $\mathbb{F}_p$ or ring $\mathbb{R}_{2^k}$~\cite{Keller2022-quantizedtraining}.
In the \textit{outsourced computation} paradigm, an arbitrary number of data owners generate secret shares and distribute them to a (typically small) set of servers that collaboratively perform the computation.
More recently, frameworks such as Arc~\cite{lycklama_holding_2024} have laid out the cryptographic building blocks, including lightweight proofs and efficient verification techniques, needed to audit MPC-based PPML computations.

\subsection{Data Poisoning Attacks}

A data poisoning attack is an adversarial manipulation of the training dataset in which an attacker injects malicious samples to influence the behavior of the learned model.
While poisoning has been explored across a wide range of machine learning tasks~ \cite{jagielski_manipulating_2021, carlini_poisoning_2022}, we focus on attacks targeting classification~\cite{gu_badnets_2019, geiping_witches_2021}.
Among these, \textit{backdoor attacks} are the most widely studied; here, the adversary's objective is to make the victim misclassify samples containing a particular perturbation (the ``trigger'') while leaving the model's behavior on clean inputs unaffected~\cite{gu_badnets_2019, saha_hidden-trigger_2020, souri_sleeper_2021}. Another class of attacks, known as~\textit{instance-targeted poisoning}, aims to induce specific model outputs on a preselected set of victim inputs~\cite{shafahi_poison_2018, aghakhani_bullseye_2021, geiping_witches_2021}. Finally,~\textit{indiscriminate attacks} seek to degrade overall model accuracy without targeting specific samples~\cite{lu_indiscriminate_2024, CertifiedPoisoning}.

\subsection{Related Work}
We discuss related work on poisoning defenses and forensics, especially as it relates to PPML.

\myparagraph{Poisoning Defenses}
In response to poisoning attacks, an extensive body of work has emerged to harden the training process. Existing defenses can be broadly categorized into heuristic defenses and certified defenses.
Heuristic defenses typically leverage certain assumptions about the clean training data, or the nature of the poisoning attack \cite{diakonikolas_sever_2018,wang_neural_2019, SpectralSign, liu_fine-pruning_2018, peri_deep_2019, jagielski_manipulating_2021}. Some methods directly identify and remove malicious samples, for instance using spectral analysis of sample gradients~\cite{diakonikolas_sever_2018} or intermediate network activations \cite{SpectralSign}. 
Other methods mitigate the impact of poisoning as a post-training step. Example techniques include pruning uninformative neurons \cite{liu_fine-pruning_2018} or training on randomized labels followed by clean fine-tuning \cite{heng_selective_2023}.

Certified defenses provide provable guarantees on the stability of classifier predictions subject to small changes to the training set within some perturbation model.
These defenses produce a certificate specifying a lower bound on the dataset perturbation required for a certain decrease in test accuracy.
Most certified methods work by training a large ensemble of models under a random transformation to the training data \cite{jia_intrinsic_2020,chen_framework_2020,levine_deep_2021,rezaei_run-off_2023,zhang_pecan_2024}, while other works achieve certification through a careful analysis of a particular learning algorithm \cite{rosenfeld_certified_2020, sosnin_certified_2024}.
However, certified defenses face significant limitations in secure settings, when data is encrypted.
In such cases, the underlying assumptions of the defenses, such as the expectation that large perturbations will be detected by visual inspection, may not hold, thereby reducing their practical applicability.
In the collaborative learning setting, Chaudhari et al. propose the use of model ensembles to mitigate poisoning attacks in the secure setting~\cite{chaudhari_safenet_2022}.
\blue{While this approach can mitigate certain classes of poisoning attacks, particularly those that induce significant model divergence, it remains vulnerable to more subtle or distributed attacks.}

\myparagraph{Poisoning Traceback}
Recent work has focused on forensics in various ML contexts, such as privacy violations~\cite{liu_tracing_2024}, reconstructing backdoor triggers~\cite{cheng_beagle_2023}, and tracing training data samples responsible for poisoning attacks in centralized training settings~\cite{PoisonForensics, hammoudeh_identifying_2022}.
The latter aims to identify the source of issues by identifying relevant samples responsible for a poisoning attack.
For example, Poison Forensics~\cite{PoisonForensics} identifies points by clustering candidate poisons in gradient space.
The most relevant work to ours is the gradient aggregated similarity (GAS) score \cite{hammoudeh_identifying_2022}, which uses a gradient similarity metric to attribute classification events to individual training data samples.
Other works \cite{cao_fedrecover_2022, jia_tracing_2024} focus on forensics for poisoning in the federated learning setting.
However, these techniques require access to raw client gradient updates~\cite{Boenisch2021-ee}, which can still contain sensitive information, making them incompatible with privacy-preserving settings.

\section{Problem Statement and Threat Model}

We define our scenario for user-level poisoning traceback and describe our threat model.
We then outline design goals and challenges for user-level traceback algorithms.

\subsection{Problem Statement}
We consider a PPML training scenario in which data owners, model owners, and computing parties provide a classification service to clients that offers secure model training, inference, and traceback of malicious data owners.

\myparagraph{PPML Scenario}
We consider a typical collaborative ML setting (cf. \Cref{fig:problem-setup}), where $m$ \emph{data owners} $U_1, \ldots, U_m$ provide the training data used in a privacy-preserving training process, which is realized via MPC~\cite{abspoel_secure_2021,mohassel_secureml_2017, mohassel_aby3_2018, wagh_falcon_2020}.
Neither the other data owners nor any other parties involved in the MPC computation can gain access to the underlying private training data.
At the end of the collaborative training, the MPC protocol outputs an ML model (classifier) $f_\theta$ trained on the joint dataset $D = \cup^N_{i=1} D_i$.
The resulting model $f_\theta$ is given to the model owners.
Additionally, the training procedure generates a record $R$ of the training process (that might include model checkpoints, training hyperparameters, validation metrics, or other train-time artifacts).
Note, that the model itself might reveal some information about the private training data. This can be addressed through, e.g., Differential Privacy~\cite{dp_sgd}, and such mechanisms can also be adapted to collaborative ML, but this is outside the scope of this work.
During deployment, the model is used to perform inference for clients using a secure inference protocol to protect the privacy of the client inputs and the model.
\todoCameraReady{Do we comment here on mechanics  of what things are secretely shared and what is not? E.g. is model sent in clear, what about model record}

\begin{figure*}[tp]
    \centering
    \vspace{-0.2cm}
    \includegraphics[width=0.8\textwidth]{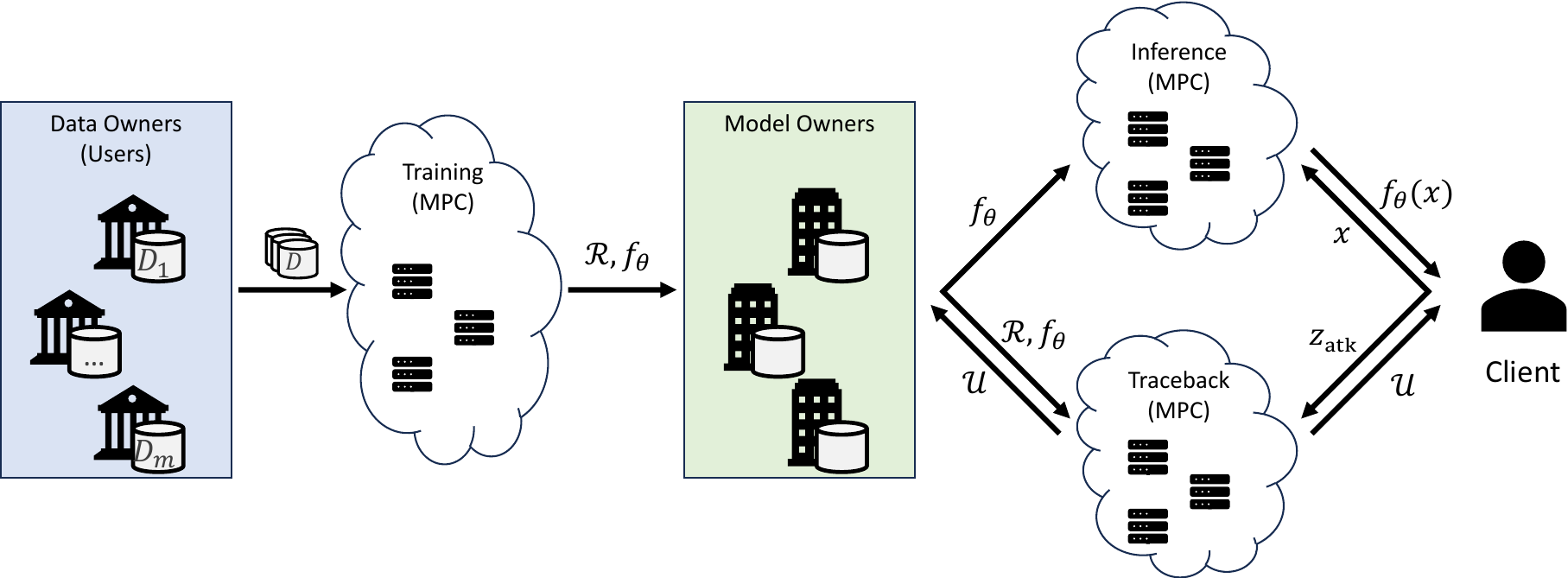}
    \caption{The PPML scenario we consider includes data owners contributing their private datasets to train an ML model $f_{\theta}$ distributively in MPC. MPC servers are ephemeral and do not store long-term state. Model owners store the model $f_\theta$ and training records $R$ that are used for private inference. A traceback service monitors suspicious activity during model deployment and launches a procedure to identify malicious data owners $\mathcal{U}$ given a suspicious input $z_{\mathrm{atk}} = (\tilde{x},\tilde{y})$.}
    \vspace{-0.2cm}
    \label{fig:problem-setup}
\end{figure*}

\myparagraph{User-level Poisoning Traceback} 
At deployment time, our traceback service receives misclassification events from clients, which is a pair of a prediction sample $\tilde{x}$ that led to an unexpected output $\tilde{y}$.
The goal of the service is to analyze these events to identify compromised data owners responsible for poisoning.
To do so, it invokes one of several available auditing functions, selected from the set of supported traceback mechanisms in the system.
The traceback procedure outputs for the model owners a ranking of all users based on raw responsibility scores, and a list of accused users identified as malicious, typically by thresholding the responsibility scores. 
While our service supports arbitrary functions, our focus is on \textit{user-level} poisoning traceback, in contrast to prior work that identifies samples~\cite{PoisonForensics}.

\subsection{Threat Model} \label{sec:threat_model}
We consider an actively malicious adversary that can compromise parties up to half of the data owners, model owners and computational parties across the training, inference, and traceback phases. 
The adversary can observe and modify all inputs, states and network traffic of the compromised parties.
For the compromised data owners, the adversary controls their training datasets to mount a data poisoning attack on the ML model. We consider an \textit{error-specific} misclassification objective~\cite{biggio_wild_2018}, that is, inducing a specific prediction $f_\theta(x) = \hat{y} \ne y$, where $\hat{y}$ is a target label and $y$ is the true label for the point $x$. This objective can be achieved through a number of methods, including backdoor attacks~\cite{gu_badnets_2019,souri_sleeper_2021,saha_hidden-trigger_2020,qi_mind_2021,carlini_poisoning_2022} and label-flipping attacks~\cite{jagielski_subpopulation_2021}. In terms of adversarial knowledge,  we assume the strongest white-box adversary with full knowledge of the training algorithm, the traceback mechanism, and training datasets of compromised clients. We assume a strong adversarial model, in which the malicious data owners might collude to achieve the same adversarial objective, for instance introduce the same backdoor pattern in the model.

Certain instantiations of secure computation might impose additional constraints on the adversary for the computational parties in the training, inference and traceback phases. For example, our system can be used with MPC protocols that assume an honest-but-curious adversary that only passively corrupts the parties \cite{araki_high-throughput_2016,chandran_simc_2022}, which are often significantly more efficient than their actively secure counterparts that consider malicious adversaries~\cite{wagh_falcon_2020, dalskov_fantastic_2021}.

\subsection{Design Goals and Challenges} \label{sec:design-goals}

A robust traceback framework for \gls{ppml} must meet several key design goals, each of which introduces specific challenges:

\noindent \textbf{High effectiveness}. The framework must accurately identify malicious users with high true positive rates (TPR) and minimal false positives (FPR), as false accusations carry legal risks. Achieving this is difficult, especially under stealthy poisoning attacks that corrupt less than 1\% of the training set~\cite{shafahi_poison_2018,severi_explanation-guided_2021}.

\noindent \textbf{Generality across attacks}. Existing defenses often assume specific attack strategies~\cite{SpectralSign, wang_neural_2019, peri_deep_2019}, data distributions~\cite{chen_robust_2013, feng_robust_2014}, or learning algorithms~\cite{CertifiedPoisoning, rosenfeld_certified_2020, levine_deep_2021}. We aim to develop deployment-level traceback mechanisms that generalize across such assumptions.

\noindent \textbf{Robustness to Coordinated Attacks}. Adversaries may distribute poisoned data across multiple users to evade detection. A robust traceback mechanism must detect such coordination.

\noindent \textbf{Privacy}.
Auditing inevitably reveals additional information about the underlying system, creating a tradeoff between accountability and confidentiality. In the context of PPML, traceback mechanisms must carefully manage this leakage to preserve user privacy. For example, sample-level traceback methods that compute and release attribution scores for each training point can inadvertently expose sensitive information about a user’s dataset, such as the types and distribution of their data samples.

\noindent \textbf{Low performance overhead}. The protocol should minimize runtime and communication rounds, scaling efficiently with the number of users $m$ and dataset sizes $\abs{D_i}$.

\section{User-Level Traceback Design} %

In this section, we construct our framework for performing user-level poisoning traceback. 
We first introduce our notion of influence score and then discuss two techniques for computing influence scores, based on gradient similarity and unlearning methods.

\subsection{Influence Scores}

Given a \malevent $(\tilde{x}, \tilde{y})$, we compute an \emph{influence score} $s_i$ for each user relative to this prediction,
where compromised parties will have a high influence score if their provided training data is poisoned in a way as to affect samples like $\tilde{x}$.
This is similar to approaches in the centralized setting that rank samples by computing per-sample influence scores for a given prediction~\cite{hammoudeh_identifying_2022,jia_towards_2023}.
We later show three different approaches to computing a per-user influence score.
First, we define a baseline instantiation of an influence score $s_i$ using kNN,
then an approach based on aggregating per-sample influence scores,
and finally a direct score computation using leave-out models.

However, simply having a high influence score does not necessarily indicate maliciousness,
as a \malevent can also arise even when no poisoning attack took place.
This is a challenge compared to existing approaches based on per-sample influence scores, as the goal is not just to rank parties by relevance but to identify potentially malicious parties.
Since accusing a user of malicious behavior could have significant negative consequences (e.g., exclusion from the system or expensive additional audits), it is essential to distinguish these scenarios in order to avoid harming honest parties.
Therefore, we consider the distribution of influence scores:
if a \malevent was significantly influenced by a small subset of outlier parties, we assume a data poisoning attack.
Conversely, if the influence is distributed evenly across all parties, we consider an attack less likely.

\myparagraph{Strawman: k-Nearest-Neighbors}
We consider using kNN, to identify the $k$ training samples closest to the prediction sample in the model's latent space as an obvious baseline.
The influence score can then be determined simply by the number of those $k$ points contributed by a given user. 
Formally, let $\Gamma$ be the set of the $k$ nearest neighbors to $\tilde{x}$, then
$s_i \leftarrow \textnormal{count}\left( \Gamma \cap D_i \right)$.
This is a simple and effective influence score if we assume that malicious training samples have small 
$l_2$-distances from $\tilde{x}$ in latent space. 
However, as we show in our evaluation, this assumption does not necessarily hold for attacks in practice.

\subsection{Gradient Similarity}
\label{sec:gradient}

A natural starting point for attributing responsibility in poisoning attacks is to leverage existing methods from model explainability, which quantify how individual training samples affect model behavior~\cite{ghorbani_data_2019, pruthi_estimating_2020, jia_towards_2023}. 
These techniques can be adapted to the user level by aggregating sample-level influence scores according to user ownership. In this section, we build on one such method, Gradient Aggregated Similarity (GAS)~\cite{hammoudeh_identifying_2022}, and extend it to efficiently compute user-level responsibility scores under poisoning.

Gradient similarity methods can be viewed as a type of counterfactual analysis: If the model $\theta$ is trained on data point $z$, in what direction does the loss on test point $\hat{z}$ move? This connects to a basic assumption we make about the malicious nature of poisoning attacks -- that benign points do not have sufficiently high gradient alignment to decrease the loss on the attack sample, and thus poisoned points must induce gradients with higher alignment in order to manipulate the model's behavior. This assumption is exceedingly general, making gradient-based scores appropriate as a general-purpose influence measure that does not rely on additional assumptions about the attack strategy.

\myparagraph{User-Level Gradient Scores}\label{sec:extending-to-gradients}
The main challenge is how to create user-level gradient similarity scores that capture a wide range of poisoning attacks, while preserving individual sample privacy in MPC. We begin by recalling the GAS score~\cite{hammoudeh_identifying_2022}. Before training, an iteration set $\cT \subseteq [T]$ is fixed, where $T$ is the number of training steps. Typically $\abs{\cT} \ll T$. During training, in iterations $t \in \cT$, the intermediate model parameters $\theta_t$ are recorded along with the current learning rate $\eta_t$. The sample-level influence score $\Icos$ between training point $z$ and test point $\hat{z}$ is then defined as
\begin{equation}\label{eq:gas}
    \Icos(z, \hat{z}) = \sum_{t \in \cT} \eta_t \frac{\iprod{\nabla_\theta \ell(\theta_t; z)} {\nabla_\theta \ell(\theta_t; \hat{z})}}{\norm{\nabla_\theta \ell(\theta_t; z)}_2 \norm{\nabla_\theta \ell(\theta_t; \hat{z})}_2}.
\end{equation}

We define a user-level event responsibility metric based on weighted sums of gradient similarities.
For a dataset $D$ and model parameters $\theta$, let $\cL(\theta; D) = \sum_{z \in D} \ell(\theta; z)$. Given a training record $R$ consisting of model checkpoints $\theta_t$ and learning rates $\eta_t$, we define the $\Icosmean$  influence score between a user's dataset $D_i$ and test point $\hat{z} $ to be
\begin{align}\label{eq:Icosmean}
    &\Icosmean(D_i, \hat{z}, R) \nonumber \\
    &\quad:= \frac{1}{\abs{D_i}} \sum_{z \in D_i} \sum_{t \in \cT} \eta_t \frac{\iprod{\nabla_\theta \ell(\theta_t; z)} {\nabla_\theta \ell(\theta_t; \hat{z})}}{\norm{\nabla_\theta \ell(\theta_t; z)}_2 \norm{\nabla_\theta \ell(\theta_t; \hat{z})}_2} \nonumber \\
    &\quad= \frac{1}{\abs{D_i}} \sum_{z \in D_i} \Icos(z, \hat{z}) 
\end{align}
Restated, $\Icosmean$ is a weighted sum of the mean gradient cosine similarity with $\hat{z}$, taken over the samples of $D_i$.

\myparagraph{Top-$k$ selection}
Score $\Icosmean$  assumes that the gradient signal from the poisoned data is strong enough to outweigh contributions from other data points in the adversarial user dataset $D_i$. However, when poisoning occurs at exceptionally low poisoning rates (e.g., 1\% of user data), the relevant terms become diluted and fail to indicate malicious behavior. To address this challenge, our influence score should primarily target the malicious portions of a corrupted dataset.

Our main insight is to identify the most influential samples within each user's data and include them in the  aggregated user score.  As $\Icosmean$ aggregates GAS scores across samples, we propose to identify first the top-k samples with highest GAS scores and then aggregate only scores for those samples. If we define:
\begin{align}
    S_i = \{\Icosmean(z, \hat{z}) : z \in D_i\}
\end{align}
then the user responsibility score is:
\begin{equation}\label{eq:topk}
    s_i = \tfrac{1}{k} \sum_{j=1}^k \topk{k}(S_i)_j
\end{equation}
where $\topk{k}(S_i)$ is a multiset containing the largest $k$ elements of $S_i$.

The hyperparameter $k$ determines the sensitivity of the traceback tool to low poisoning rates. At $k=1$, user scores are determined according to the single most suspicious data point in a user's dataset. As $k$ grows large, the reduction method approaches $\Icosmean$. In general, appropriate choices for $k$ depend on the learning task and network architecture.

\myparagraph{Cost-effective gradient computation}
We introduce two additional optimizations to reduce the computational overhead of gradient evaluation in secure settings.
First, we limit the computation to gradients from the later layers of the model, leveraging the fact that, when computing gradients using the backpropagation algorithm, a layer's gradient does not depend on earlier layers. Empirically, we find that using gradients from the final one or two layers of the network is generally sufficient for traceback performance.
Second, we apply data-oblivious dimensionality reduction techniques to produce compact gradient sketches that approximately preserve inner products. This significantly reduces storage, communication and computation costs without compromising accuracy.
\ifdefined\isnotextended
We provide more details on gradient projection and storage in \Cref{appx:gradient} in the extended version \citeextended. 
\else
We provide more details on gradient projection and storage in \Cref{appx:gradient}.
\fi

\myparagraph{Full Procedure}
We describe the $\sysname$ traceback procedure in \Cref{alg:traceback}. The training output consists of both the trained model parameters $\theta_t$ and the training record $R$. Each element in $R$ contains the learning rate $\eta_t$, intermediate model parameters $\theta_{t-1}$, the training data gradients $g_t$, and the projection matrix $G_t$ for some training iteration $t \in \cT$. Given a misclassification event $z_{\mathrm{atk}}$, traceback computes user responsibility scores and outputs a ranked list of users based on these scores.

\begin{algorithm}[htb]
\caption{\sysgradient (Gradient Similarity)}
\label{alg:traceback}
\begin{algorithmic}[1]
    \Input
    User datasets $D_1, \ldots, D_m$,
    Misclassification event $\tilde{x},\tilde{y}$,
    Training record $R = \{(\theta_{t-1}, \eta_t, G_t)\}_{t \in \cT}$,
    Score parameter $k$
    \Ensure
    Ranked owners by responsibility scores $\mathcal{U}$

    \State \Comment{Pre-process projected traceback gradients}
    \State $\Dtr \gets \text{Concat}(D_1, \ldots, D_m)$
    \For{$t \in \cT$}
        \For{$i \gets 1$ to $\abs{\Dtr}$}
                    \State $x_i, y_i \gets \Dtr[i]$
                    \State $g_t^{(i)} \gets G_t \, \nabla_{\theta_W} \ell(\theta_{t-1}; x_i, y_i)$
        \EndFor
    \EndFor
    \State \Comment{Compute gradient similarity}
    \For{$t \in \cT$}
        \State $\widehat{g_t} \gets G_t \nabla_{\theta_W} \ell(\theta_{t-1}; \tilde{x})$ \Comment{$G_t, \theta_{t-1}$ from $R$}
    \EndFor
    \For{$i = 1, \ldots, m$}
        \State $I_i \gets \textsc{Indices}(i)$ \Comment{User $i$'s indices}
        \State $S_i \gets \left\{\sum_{t \in \cT} \eta_t \frac{\iprod{g_t^{(\mathrm{idx})}}{\widehat{g_t}}}{\norm*{g_t^{(\mathrm{idx})}}_2 \norm*{\widehat{g_t}}_2} : \text{idx} \in I_i \right\}$
        \vspace{0.25em}

        \State $s_i \gets \frac{1}{k} \sum_{j=1}^k \topk{k}(S_i)_j$
    \EndFor
    \State \Return $\textsc{RankUsers}({s_1}, \ldots, {s_m})$
\end{algorithmic}
\end{algorithm}

\subsection{Unlearning}
\label{sec:unlearning}
Instead of lifting per-sample influence scores to the user level, we can also consider the influence at the user level directly:
if a \malevent was (at least partially) the result of poisoned data provided by a user,
removing the data of that malicious user would result in the absence (or weakening) of the attack at inference time.
Therefore, one could train a leave-out model $\smodelI$ for each user, using the same dataset but \emph{leaving out} user $i$'s dataset.
Given these, we could compute the influence score as the loss of each model on the \malevent, i.e., $s_i \leftarrow \ell(\theta_i; \tilde{x}, \tilde{y})$.
Note that, since the $\smodel^i$ do not depend on $\tilde{x}$ or $\tilde{y}$, we can use the same set of leave-out models to handle auditing requests for different events.

However, retraining such leave-out models is prohibitively expensive, as it requires training $N$ additional models, each incurring roughly the same costs as the original secure collaborative learning process.
Therefore, our approach instead relies on techniques for \emph{unlearning} a user's data points from the full model~\cite{Cao2015-st,Guo2019-bg,Bourtoule2019-mc}.
While \emph{certified unlearning} approaches provide formal guarantees on their equivalence to leave-out models,
these techniques currently do not scale well to scenarios like ours, 
where large amounts of data points must be removed~\cite{Bourtoule2019-mc,Warnecke2021-aq}.
Instead, we use an efficient unlearning technique~\cite{PoisonForensics} that can scale to the demands of our setting.
Specifically, $\Unl(\smodel, D_i, D, E)$ applies first-order updates to minimize the objective
\begin{equation}
\label{eq:objective}
    \left( \sum_{(x,y) \in D \setminus D_i} \ell\left( \theta; x,y \right) + \sum_{(x,y) \in D_i} \ell\left( \theta; x,\sequal \right) \right),
\end{equation}
where $\ell$ is the cross-entropy loss function, $E$ is the number of epochs, and
\sequal is defined as the uniform probability vector, i.e.,
\mbox{$\sequal = \left( \frac{1}{\sNClasses}, \frac{1}{\sNClasses}, \ldots, \frac{1}{\sNClasses} \right)$}
, representing the model's output when it is uncertain about its prediction~\cite{Vyas2018-jt,Lee2018-wr}.
Our approach requires only a small number of unlearning steps, which can be significantly more efficient than retraining.
While the concrete number is task-dependent, in practice, our unlearning approach requires a fraction of the epochs compared to the full training.

\begin{algorithm}
  \caption{\syscamel (Unlearning)}
  \label{alg:unlearning}
  \begin{algorithmic}[1]             %
  \Input
    User datasets \spartyinputs,
    Misclassification event $\tilde{x}, \tilde{y}$,
    Model \smodel,
    Unlearning epochs $E$
    \Ensure
    Ranked owners by responsibility scores $\mathcal{U}$
        \State \Comment{Pre-process unlearned model}
      \For{$i =1,\ldots,m$}
        \State $\spartyinput_{-i} \gets
               \displaystyle\bigcup_{j \neq i} \spartyinput_{j}$
        \State $\smodelWUnl \gets
               \Unl{(\smodelW,\ \spartyinput_{i},\ \spartyinput_{-i},\ E)}$
      \EndFor
      \State \Comment{Compute leave-out loss}
      \For{$i = 1,\ldots,m$}
        \State $s_i \gets \ell\!\bigl(\smodelWUnl;\tilde{x},\ \tilde{y}\bigr)$
      \EndFor

    \State \Return $\textsc{RankUsers}(s_1, \ldots, s_m)$
  \end{algorithmic}
\end{algorithm}

The coarseness of our unlearning approximation is intentional.
The unlearning objective increases the loss on all samples from a given user, including benign ones, resulting in large gradients and a degradation of the model's overall performance on the learning task. 
However, this degradation is acceptable in our setting, where the goal is not to maintain generalization, but to ensure that the model exhibits a clear loss difference on the specific malicious sample under audit.
In fact, this explicit form of unlearning can be more effective than a precise leave-out approach, as it amplifies the loss on any retained malicious data.
This effect is particularly beneficial in settings with multiple attackers contributing to the same poisoning attack. 
In such scenarios, exact unlearning may allow adversaries to obscure their contributions by duplicating poisoned inputs across users, resulting in the loss for the poisoned sample to remain low because of the malicious samples contributed by malicious parties.
In contrast, our approximation increases the loss for the poisoned samples if they are in the unlearned user's dataset.
\ifdefined\isnotextended
We formally show in~\Cref*{appx:bad-true-unlearn} in the extended version that using true unlearning with models trained with empirical risk minimization fails catastrophically if poisons are duplicated across even two user datasets\citeextended.
\else
We formally show in~\Cref{appx:bad-true-unlearn} that using true unlearning with models trained with empirical risk minimization fails catastrophically if poisons are duplicated across even two user datasets.
\fi

\section{UTrace Framework}
\label{sec:framework}
In this section, we introduce our framework and how we securely realize our user-level auditing functions,
ensuring end-to-end privacy and correctness.

\myparagraph{Cryptographic Auditing}
When a client detects a suspicious \malevent, they initiate an audit to start a user-level traceback function. 
This triggers an interactive MPC protocol involving the client, the model owner, the data owners, and a set of dedicated MPC servers. 
The servers compute the user-level traceback function $F$, after receiving input from the input parties, who secret-share their inputs with the MPC servers.
The servers then validate these inputs against cryptographic commitments that were generated during the training and inference phases.
These commitments, ensuring integrity of the training data, model, and prediction input, prevent parties from altering their data during the auditing phase.
While several techniques exist for enforcing input consistency, we build upon the Arc framework~\cite{lycklama_holding_2024}, which provides an end-to-end framework for \gls{ppml} and offers the most efficient validation mechanism. 
Once all inputs are validated, the MPC servers jointly evaluate the traceback function $F$, which in Arc is modeled cryptographically as a function of the arithmetic black-box ideal functionality \sabbAAuditID.
In the following, we discuss the limitations of this model and introduce our extensions.

\floatname{algorithm}{Protocol}
\begin{algorithm}[t]
\caption{\protocolOursPreprocessing}
\label{protocol:ours:preprocessing}
\begin{algorithmic}[1]
    \Input
        User datasets $\secs{D_1},\ldots,\secs{D_m}$, %
        Training record $\secs{R}=\{(\secs{\theta_{t-1}},\eta_t,\secs{G_t})\}_{t\in\cT}$
    \Ensure
        Pre-processed traceback gradients $\{\secs{\widehat{g_t}}\}_{t\in\cT}$

    \State $\secs{\Dtr} \gets \text{Concat}(\secs{D_1}, \ldots, \secs{D_m})$
    \For{$t \in \cT$}
        \For{$i \gets 1$ to $\abs{\Dtr}$}
                    \State $\secs{x_i}, \secs{y_i} \gets \Dtr[i]$
                    \State \label{grad:protocol:gradient} $\secs{g_t^{(i)}} \gets\idealGradient(\secs{G_t},\secs{\theta_{t-1}},\secs{x_i},\secs{y_i})$
        \EndFor
    \EndFor
    \State \Return $\{\secs{g_t^{(i)}}\}_{t\in\cT,i\in\abs{\Dtr}}$
\end{algorithmic}
\end{algorithm}

\floatname{algorithm}{Protocol}
\begin{algorithm}[t]
\caption{\protocolOursOnline}
\label{protocol:ours:online}
\begin{algorithmic}[1]
    \Input
        Misclassification event $\secs{\tilde{x}},\secs{\tilde{y}}$,
        Pre-processed training gradients $\{\secs{g_t^{(i)}}\}_{t\in\cT,i\in\abs{\Dtr}}$, 
        Training record $\secs{R}=\{(\secs{\theta_{t-1}},\eta_t,\secs{G_t})\}_{t\in\cT}$,
        Score parameters $k,l$
    \Ensure
        Ranked owners by responsibility scores $\{i:U_i\}$

    \For{$t \in \cT$}
        \State Run 
        $\secs{\widehat{g_t}}
           \gets\idealGradient(\secs{G_t},\secs{\theta_{t-1}},\secs{\tilde{x}},\secs{\tilde{y}})$
           \label{grad:protocol:online:score}
    \EndFor
    \For{$i = 1,\ldots,m$}                         \Comment{User loop} \label{grad:protocol:online:loopstart}
        \State $I_i \gets \textsc{Indices}(i)$      \Comment{User $i$’s sample indices}
        \For{$t \in \cT$}                           \Comment{Training-step loop}
            \State $\secs{\text{B}_t}
                \gets\idealDotprod(\widehat{g_t},\widehat{g_t})$
            \For{$j = 1,\ldots,\abs{I_i}$}          \Comment{Sample loop}
                \State $\secs{\mathbf{d}^{(j)}_t}
                    \gets\idealDotprod(
                        \secs{g_t^{I_i^{(j)})}},\widehat{g_t})$
                \State $\secs{\text{A}^{(j)}_t}
                    \gets\idealDotprod(
                        \secs{g_t^{I_i^{(j)})}},
                        \secs{g_t^{I_i^{(j)})}})$
                \State $\secs{\mathbf{h}^{(j)}_t}
                    \gets\idealMul\!\Bigl(
                        \secs{\mathbf{d}^{(j)}_t},
                        \idealMul\!\bigl(
                            \idealRecip(\secs{\text{A}^{(j)}_t}),
                            \idealRecip(\secs{\text{B}_t})
                        \bigr)
                    \Bigr)$
            \EndFor
        \EndFor
        \State $\secs{\mathbf{h}^{(j)}}
            \gets\sum_{t\in\cT}\eta_t\cdot\secs{\mathbf{h}^{(j)}_t}
            \quad\text{for all }j\in[\abs{I_i}]$
        \smallskip
        \State $\secs{\mat{M}}
            \gets\bigl(\secs{\mathbf{h}^{(1)}},
                       \secs{\mathbf{h}^{(2)}},
                       \dots,
                       \secs{\mathbf{h}^{(\abs{I_i})}}\bigr)$
        \State $\secs{\mat{P}}\gets\idealSort(\secs{\mat{M}}^\top)$
        \State $s_i\gets
               k^{-1}\sum_{j=1}^{k}\secs{\mat{P}}_j$
    \EndFor \label{grad:protocol:online:loopend}
    \State \Return $\textsc{RankUsers}(s_1,\ldots,s_m)$
\end{algorithmic}
\end{algorithm}

\begin{algorithm}[t]
\caption{\protocolCamelPreprocessing}
\label{protocol:camel:preprocessing}
\begin{algorithmic}[1]
\Input Datasets ${\secs{\spartyinput_1}, \ldots, \secs{\spartyinput_m}}$, Model $\secs{\smodelW}$, Number of Finetuning Epochs $E$
\Ensure Unlearned Models $\secs{\theta_{-1}},\ldots,\secs{\theta_{-m}}$
\For{$i = 1$ \textbf{to} $m$}
    \State $\secs{\spartyinput_{-i}} \gets \bigcup_{j \neq i}({\secs{\spartyinput_j}})$
    \State \label{camel:preprocessing:sgd} $\secs{\smodelWUnl} \gets \idealSGD(\ell, \secs{\smodelW}, \secs{\spartyinput_i}, \secs{\spartyinput_{-i}}, E)$
    \Comment{where $\ell$ defined as in~\Cref{eq:objective}}
\EndFor
\State \Return $\left\{\secs{\smodelW_{-1}},\ldots,\secs{\smodelW_{-m}}\right\}$
\end{algorithmic}
\end{algorithm}

\begin{algorithm}[t]
\caption{\protocolCamelOnline}
\label{protocol:camel:online}
\begin{algorithmic}[1]
\Input Prediction $\secs{\spredictionX}$, Ground truth label $\secs{\spredictionY}$, Unlearned Models $\secs{\smodelW_{-1}},\ldots,\secs{\smodelW_{-m}}$
\Ensure Ranked owners by responsibility scores $\{i:U_i\}$
\For{$i = 1$ \textbf{to} $m$}
    \State \label{camel:online:score} $\secs{s_i} \gets \idealLoss(\secs{\smodelWUnl};\secs{\spredictionX}, \secs{\spredictionY})$
\EndFor
\State \Return $\textsc{RankUsers}(\secs{s_1},\ldots,\secs{s_m})$
\end{algorithmic}
\end{algorithm}

\subsection{Sample-independent Preprocessing}
Most of the computation for our user-level auditing functions, \syscamel and \sysgradient, are  independent of the particular sample $(\tilde{x}, \tilde{y})$ under audit. 
In both cases, a substantial portion of the workload depends solely on the training data and the model parameters. 
For \syscamel the user-specific leave-out models are independent of the sample under audit, and for \sysgradient we can cache the gradients of the training samples.
To ensure efficient execution, we aim to compute and cache this sample-independent state once, and reuse it across multiple audits. 
While this is relatively straightforward in a plaintext setting, the secure setting poses challenges.
Specifically, the auditing functionality \sabbAAuditID is stateless and does not allow persistent storage between audit invocations, making such caching infeasible by default.

Securely enabling reuse of intermediate computations is non-trivial. 
A natural idea is to extend the training phase with the sample-independent preprocessing, extending the training state $R$. 
However, this introduces two key limitations.
First, it reduces the flexibility of our framework. Incorporating all possible preprocessing steps during training would require precomputing and retaining intermediate values for every supported auditing function, even if many are never used.
Second, it introduces privacy concerns. While releasing the trained model can be acceptable under the right circumstances and mitigations, storing or exposing per-sample gradients can severely compromise data privacy, as these gradients can contain sensitive information about individual training points. 
Although retaining such data in secret-shared form is theoretically possible, maintaining a large shared state across phases incurs significant overhead.

\myparagraph{Protocols}
To allow preprocessing securely, we split the auditing functionality \sabbAAuditID into two phases.
We model each phase of both auditing functions with a distinct functionality: a preprocessing functionality \sabbAAuditPreprocessID, which is sample-independent and is executed once using the training data $\spartyinputs$ and model $\smodel$, and an online audit functionality \sabbAAuditOnlineID, which incorporates, in addition to the training data and the model, a specific test-time query $(\tilde{x},\tilde{y})$. 
With the proper extensions to our auditing framework, we can then cache the output of \sabbAAuditPreprocessID explicitly.
We discuss these extensions in~\Cref{appx:arc_with_prep} and first present the protocols that compute the auditing functions.

The preprocessing and online ideal functionalities for \sysgradient and \syscamel take as input the secret shares of the training data and the model, open the shares, and output the preprocessing and online phases, following their descriptions in~\Cref{alg:traceback,alg:unlearning}.
\ifdefined\isnotextended
Due to space constraints, we defer the definitions of the ideal functionalities to the extended version\citeextended.
\else
We provide their formal definitions in Appendix~\ref{sec:appx:mpc:idealfun}.
\fi
Note that we define our auditing functionalities as non-reactive.
This design choice allows us to focus on securely realizing the core computation within a standalone protocol, which simplifies their exposition and security analysis. 
Correspondingly, we have two separate protocols for the preprocessing and online phases that instantiate each of these functionalities.
For \sysgradient, Protocol~\ref{protocol:ours:preprocessing} computes the gradients of the training data at the set of training checkpoints for preprocessing.
In the online phase, Protocol~\ref{protocol:ours:online} computes the gradient of the prediction sample and computes the (optimized) GAS score for the prediction sample and all training samples, which is returned as influence score.
For \syscamel, the preprocessing phase (cf.~Protocol~\ref{protocol:camel:preprocessing}) computes the unlearned models based on each user's dataset using \idealSGD, and outputs the resulting models as secret shares.
The online phase (cf.~Protocol~\ref{protocol:camel:online}), using the unlearned models as input, computes the loss of the prediction sample $\tilde{x}$ with respect to the misclassified label $\tilde{y}$.
These per-user losses are then used as influence scores to compute user responsibility.
Finally, in both functions, \textsc{RankUsers} computes the relevant ranking and classification metrics based on the influence scores.
\ifdefined\isnotextended
We provide formal security proofs for our protocols in Appendix~\ref*{appx:proofs} in the extended version of this work\citeextended.
\else
We provide formal security proofs for our protocols in Appendix~\ref{appx:proofs}.
\fi

\subsection{\sysname Preprocessing}
\label{appx:arc_with_prep}
Although Arc ensures consistency between the training data, the models, and predictions to run auditing functions,
it does not securely support the preprocessing required to achieve efficient online times.
In addition, it requires the \glspl{r:inputparty} to remain available during auditing to input their training data to the MPC servers for each audit.
While it is reasonable to assume that the \glspl{r:modelowner} responsible for inference remain online, this assumption is less realistic for \glspl{r:inputparty}, which may not be persistently available or willing to remain online solely for audit purposes.
In this section, we show how we extend Arc in order to enable secure preprocessing for auditing functions and remove the dependency on data owners at auditing time.

\myparagraph{Auditing Phase}
The auditing phase in the Arc protocol is evaluated by a set of parties referred to as the \glspl{r:acomputer}, which compute the auditing function using the inputs provided by the~\gls{r:auditrequester}, the~\glspl{r:inputparty}, and the~\gls{r:modelowner}.
To support the evaluation of arbitrary auditing functions, Arc models the computation via an arithmetic black-box interface, denoted~\sabbid, where each audit receives a fresh instance of the interface.
This approach ensures that no state is retained after the audit is completed,
but conflicts with the potential need for preprocessing outputs to persist for future audits.
Preserving the state through re-using \sabbid between audits may seem like a simple solution.
However, in practice, this would require retaining a potentially large amount of state across multiple parties (e.g., many different clients) throughout the protocol,
which introduces additional complexity that the original protocol aims to avoid.
Therefore, we deviate from the approach in Arc by explicitly defining a secret-sharing scheme to store the state of the preprocessing phase between auditing phases.
If the \glspl{r:acomputer} remain online, they can retain this state locally, thereby substantially reducing the end-to-end cost of subsequent audits. 
Importantly, the \glspl{r:inputparty} do not need to remain online or re-upload their datasets, as long as this shared preprocessing state is maintained.

Our design presents a tradeoff between the cost of long-term storage of secret shares and the requirement for a subset of parties to remain available during the entire phase in which predictions can be audited.
Using a secret-sharing scheme not only resolves the state retention issue but also provides flexibility in selecting specialized primitives for the long-term storage of secret shares, such as proactive or verifiable secret sharing~\cite{Schultz2010-mpss,Maram2019-ao}.
This decoupling allows us to adopt storage mechanisms distinct from the realization of \sabb.
To achieve this, we extend the definition of \sabb to include two new functions: \sabbShare, which generates and reveals secret shares of internal values, and \sabbOpen, which inputs and opens these shares.
\ifdefined\isnotextended
We provide a formal definition of a secret-sharing scheme in~\Cref*{appx:definitions} in the extended version of this work\citeextended.
\else
We provide a formal definition of a secret-sharing scheme in~\Cref{def:lsss} in~\Cref{appx:definitions}.
\fi
By the definition of \sabb, these functions can be trivially realized using the existing operations defined in \sabb.
Nevertheless, there are often more efficient methods to implement these functions.
For instance, if \sabb is instantiated using the same secret-sharing scheme, parties can directly store the shares of internal values and reveal them when required.
Alternatively, if a different secret-sharing scheme is employed, a share conversion protocol can efficiently transform the shares between schemes.

\begin{figure*}
    \begin{myalgorithm}[Arc with Preprocessing $\sprotocolArcWithPreprocessing$]{protocol:camel}
        This protocol is based on \sprotocolArc which is defined by Figure 3 in~\cite{lycklama_holding_2024}).\\
            \textbf{Training:} \textit{The training phase is identical to that of \sprotocolArc.}\\
            \textbf{Inference:} \textit{The inference phase is identical to that of \sprotocolArc.}\\
            \textbf{Auditing:} The protocol proceeds on a new instance of $\sabbid$ between computing parties \sPartyAuditComputer, a \gls{r:auditrequester} $\sPartyAuditor_j$ and the \gls{r:modelowner} $\sPartyModelHolder_k$:\\
            \textit{Steps \textsf{A.1}-\textsf{A.4} are identical to those in \sprotocolArc.}
            \begin{protocol_steps}[label=\textsf{A}.\arabic*,start=5]
            \item If secret shares of the preprocessing cache \scache exist in memory, load them as $\scacheAbb \gets \sabbOpen(\scache)$. Otherwise, each data owner $\texttt{DH}_i$ inputs their data $D_i$ to \sabbid, and:
            \begin{itemize}
                \item The protocol performs the input verification steps for each $\texttt{DH}_i$ as described in \textsf{A}.5 in the original protocol.
                \item The \glspl{r:acomputer} compute $\scacheAbb \leftarrow \sabbAAuditPreprocessID(\sauditfunction, \ssAbb{\strainset_1}, \ldots, \ssAbb{\strainset_\sNumParties}, \ssAbb{\smodel}, \aux)$,
                generate a sharing $\scache \gets \sabbShare(\scacheAbb)$ and store \scache in memory.
            \end{itemize}
            \item If secret shares of the preprocessing cache \scache exist in memory, load them as $\scacheAbb \gets \sabbOpen(\scache)$.
            Otherwise, 
                \item The \glspl{r:acomputer} compute $\ssAbb{o} \leftarrow \sabbAAuditOnlineID(\sauditfunction, \ssAbb{\smodel}, \ssAbb{\spredictionX}, \ssAbb{\spredictionY}, \aux, \scacheAbb)$
                and use \sabbid to open $o$ at $\sPartyAuditor_j$. 
            \end{protocol_steps}
        \end{myalgorithm}
\end{figure*}

We present the protocol for Arc with preprocessing in Algorithm~\ref{protocol:camel}.
Note that we represent the internal values of \sabb using single-bracket notation, e.g., $\ssAbb{s}$, to distinguish them from the secret shares in double-bracket notation used throughout this paper, $\secs{s}$.
The steps related to the validation of the model and the prediction in the protocol are the same as in the original auditing phase of Arc.
After the~\glspl{r:inputparty} input their (valid) training datasets, the \glspl{r:acomputer} compute
the \glspl{r:acomputer} compute the preprocessing part of the auditing algorithm using \sabbAAuditPreprocessID,
and save the output $\scache$ as secret shares using $\scache \gets \sabbShare(\scacheAbb)$ if \scache is not stored in memory.
However, if \scache is already in memory, the protocol does not expect any input from the~\glspl{r:inputparty}, and can proceed without them by calling \sabbOpen(\scache) to load the preprocessing output into \sabb.
Afterwards, the \glspl{r:acomputer} evaluate the online phase of the protocol using $\sabbAAuditOnlineID$, which takes in
the prediction, the datasets, the model, the auxiliary input, and the output of the preprocessing \scacheAbb.

This approach still achieves the original auditing functionality as defined in Arc, but provides increased flexibility by allowing shared state to be maintained for as long as it is possible to keep the~\gls{r:acomputer} online.
At the same time, it allows to possibility to decouple the \glspl{r:inputparty} from the online inference and auditing process.
Altogether, this design maintains the privacy and correctness guarantees of Arc, while enhancing its practicality and scalability in real-world deployment scenarios.

\subsection{Optimizations}
Realizing the traceback algorithm within the PPML setting presents significant challenges related to scalability and numerical precision, primarily due to the representation of values and the computational overhead of cryptographic operations.
While the main overhead of \syscamel is a result of generic protocols such as training and inference for which optimized protocols already exist, we focus here on two optimizations specific to the computation of \sysgradient. 
\ifdefined\isnotextended
We discuss additional optimizations in Appendix~\ref*{appx:mpc_opt} in the extended version of this work\citeextended.
\else
We discuss additional optimizations in Appendix~\ref{appx:mpc_opt}.
\fi

\myparagraph{Heuristic Sample Selection}
The primary performance bottleneck in the traceback algorithm lies in evaluating the cosine similarity between training and test gradients.
    This bottleneck arises from the reliance on fixed-point division and square root operations, both of which are significantly more expensive in secure computation settings.
    Integer division and square root protocols such as those based on Goldschmidt and Raphson-Newton iterations~\cite{Aly2019-zz} require significantly more communication rounds compared to other operations, increasing both latency and bandwidth usage.
    
Although we can compute these operations using efficient approximations for the reciprocal square root operation, the protocol still requires evaluating $\cO(\abs{\cT} \cdot \abs{D_i})$ reciprocals per partition for each traceback computation.
To address this, we propose an optimization aimed at improving the computational complexity of selecting the $\topk{k}$ cosine similarity scores.
The core observation is that the $\topk{k}$ selection ultimately depends on only $\topk{k}$ terms from the entire set of training samples.
If we can approximately identify these $k$ terms without fully evaluating the cosine similarity across all samples in a partition, we can significantly reduce the number of expensive operations required.

We propose a $\topkl{k}{l}$ selection scheme, in which a coarser set $\tilde{I}_i$ of $l$ indices are determined via a heuristic function $h(z, \hat{z})$, where $z$ is a training sample $z$ and $\hat{z}$ the test sample. The elements $\left\{ D_i^{(j)} : \text{j} \in \tilde{I}_i \right\}$ are then used instead of $D_i$ for the subsequent scoring function. To compute $\tilde{I}_i$, we adapt the existing $\topk{l}$ procedure as $\topki{k}$ to return indices instead of scores directly:
\begin{equation*}
    \tilde{I}_i = \topk{l}\left(\left\{ (j, h(D_i^{(j)}, \hat{z})) : j \in 1,\ldots,\abs{I_i} \right\}\right).
\end{equation*}
We implement $h$ as the ranking induced by the un-normalized gradient scores \cite{pruthi_estimating_2020}:
\begin{equation}
    h(z, \hat{z}) = \sum_{t \in \cT} \eta_t \iprod{\nabla_\theta \ell(\theta_t; z)} {\nabla_\theta \ell(\theta_t; \hat{z})}.
\end{equation}
Our choice of $h$ is significantly faster than directly computing cosine scores for each sample because multiplications require fewer rounds of communication than divisions. Although this algorithm introduces an extra selection of the $\topk{l}$ samples in addition to $\topk{k}$, it is more efficient when $l \ll \abs{D_i}$.
The full algorithm and protocol are presented in Appendix~\ref{appx:mpc_opt}\citeextended.

\section{Experimental Results}
We test our user-level auditing functions by training classifiers under a total of ten different poisoning attacks and four representative datasets from three modalities: vision, malware, and text.
\ifdefined\isnotextended
Due to space considerations, we discuss detailed results for four attacks, and refer to \Cref{appx:additional-experiments} in the extended version for comprehensive results \cite{Rose2024-UTraceFull}.
\else
\fi

\subsection{Evaluation setup}
We run \sysname on Ubuntu 24.04 machines with  128GB of memory and different GPU configurations including Nvidia RTX A6000, Nvidia TITAN X (Pascal), and Nvidia GeForce GTX TITAN X.

\myparagraph{Datasets and models}
We summarize the datasets and models we use in our experiments. Detailed training configurations, including hyperparameters, can be found in \Cref{appx:exp-details}.

\myparagraph{CIFAR-10}
CIFAR-10 \cite{krizhevsky_cifar-10_nodate} is a 10-class image dataset of 32x32 RGB images. We train image classifiers using ResNet-18 models \cite{he_deep_2015}. CIFAR-10 is well-established as a poisoning benchmark and a wide suite of attacks work against it \cite{aghakhani_bullseye_2021, geiping_witches_2021, souri_sleeper_2021, jagielski_subpopulation_2021, schwarzschild_just_2021}.

\myparagraph{Fashion}
Fashion MNIST~\cite{xiao_fashion-mnist_2017} is a 10-class image dataset of 28x28 grayscale images. We train image classifiers using a small convolutional network.

\myparagraph{EMBER}
EMBER~\cite{anderson_ember_2018} is a malware classification dataset of 2351-dimensional examples with features extracted from Windows Portable Executable (PE) files. We train using the EmberNN architecture from Severi et al. \cite{severi_explanation-guided_2021}.

\myparagraph{SST-2}
SST-2 \cite{socher_recursive_2013} is a sentiment classification text dataset containing film reviews. SST-2 has been used in prior work \cite{hammoudeh_identifying_2022} as a benchmark for sample-level poisoning traceback in the text domain.

\myparagraph{Partitioning Data} We use 10 data owners that contribute their datasets to model training, but we also experiment with 20 owners, leading to similar results. To partition the datasets, we sample class labels from a Dirichlet distribution~\cite{hsu_measuring_2019}  $q \sim \text{Dir}(\alpha p)$, where $p$ is the prior label distribution. We set $\alpha = 100$ for all of our experiments, which simulates a scenario with almost uniform distribution across owners. We vary the number of poisoned owners between 1 and 4, and distribute the poisoned samples randomly across the malicious owners. 

\myparagraph{Metrics}
We measure both the effectiveness and efficiency of \sysname.
\sysname supports two output configurations, producing either a ranking on all users or a list of accused users. 
We thus leverage two types of metrics for evaluating the effectiveness our system: ranking metrics and classification metrics. First, we model traceback as an information retrieval task and report the mean average precision (mAP) and mean reciprocal rank (mRR) scores for each attack scenario. These metrics measure if the malicious owners are ranked high among all owners according to their responsibility scores.  In the second configuration, we need to determine a threshold on responsibility scores to generate a list of accused users. We  train benign models and compute standardized influence scores for benign misclassifications events under different random samplings from the clean data distribution. Thresholds are chosen to correspond to a low false positive rate (1\%) on these benign misclassifications. We report both the true positive rate (TPR) and false positive rate (FPR) for each attack under this threshold, as well as the global AUC metric.

We evaluate the efficiency of our system using three metrics: round complexity, communication complexity, and overall runtime. Round complexity measures the number of communication rounds needed in the MPC protocol. Communication complexity measures the total data exchanged. Overall runtime captures the total execution time, including training and traceback, in both LAN and WAN settings.

\myparagraph{Traceback parameters} We report influence metric hyperparameters for unlearning and \sysname in \Cref{table:traceback-hyperparameters} in \Cref{appx:training-config}.
In general, we sample checkpoints at some uniform interval until reaching a maximum checkpoint set size. We choose unlearning parameters using reasonable fine-tuning settings for each learning task.

\subsection{Dirty-label Poisoning Attacks on Vision}\label{sec:dirty-label}
We begin our evaluation with dirty-label poisoning attacks against CIFAR-10 and Fashion datasets.
\ifdefined\isnotextended
For space, we present detailed results only for CIFAR-10 here, and defer detailed Fashion results to \Cref{appx:fashion} in the extended version \cite{Rose2024-UTraceFull}.
\else
\fi
Our qualitative findings are similar between the two datasets.

We evaluate our algorithms against four dirty-label attacks. The first two attacks are well-known attacks from the literature:
the BadNets backdoor attack \cite{gu_badnets_2019} and the Subpopulation attack \cite{jagielski_manipulating_2021}. Both of these attacks use a label-flipping poisoning strategy, injecting poisoned samples labeled as the target label directly into the training set.
The final two attacks are adaptive attacks, designed specially to subvert the traceback mechanism. The first simply modifies the BadNets objective from a one-to-one to a many-to-many objective via a ``permutation'' backdoor: the attack chooses a random derangement $\sigma : C \to C$ and aims to misclassify samples from each class $i$ as class $\sigma(i) \ne i$. By tying the attack success to the base learning task, the learning dynamic for attack samples may more closely resemble that of benign data, obscuring the presence of the attack during training. The second strategy is to add substantial label noise to the poisoned samples, the idea being that training data attribution may be harder if the model exhibits less confident predictions. We refer to these attack variants as ``$\sigma$-BadNets'' and ``Noisy BadNets'', respectively.
Complete details on attack setup, including poison generation rules and poisoning rates, can be found in \Cref{appx:attack_config}.

\myparagraph{Results}
We report the results for attacks against CIFAR-10 / ResNet-18 in \Cref{table:cifar10-dirty} and attack-conditioned ROC curves in \Cref{fig:cifar-dirt-rocs}. Overall, our results show that \sysname can reliably identify poisoned users with high precision, even when the corrupted parties strategize by distributing poisons across several user datasets.

For BadNets, all methods perform strongly even as the number of poisoned user datasets increases. The Unlearning method maintains above $> 99\%$ TPR. One noteworthy behavior is that the FPR for Unlearning is quite high, reaching nearly $7\%$, despite the threshold having been determined based on a 1\% FPR on benign misclassifications. One possible consequence might be that thresholds for the Unlearning method are more dependent on the details of the attack.
For Subpopulation attacks, the gradient-based method achieves the highest TPR when multiple user datasets are poisoned, maintaining $\ge 60\%$ TPR versus $< 13\%$ for unlearning and $< 4\%$ for kNN. Interestingly, even though kNN achieves higher ranking metrics than unlearning, the scores are not decisive enough under thresholds to outperform unlearning under classification metrics.

Results for both adaptive attacks are qualitatively similar. Both methods fail to evade the traceback mechanism for gradient-based and kNN scores, with detection quality matching that of the standard BadNets case. Performance under the Unlearning method suffers for both adaptive attacks in both ranking and classification metrics, with mAP falling to as low as 0.70 and TPR falling to as low as $11\%$.

\begin{figure}[htb]
    \centering
    \begin{subfigure}[b]{1.0\linewidth}
        \centering
        \includegraphics[width=1.0\linewidth]{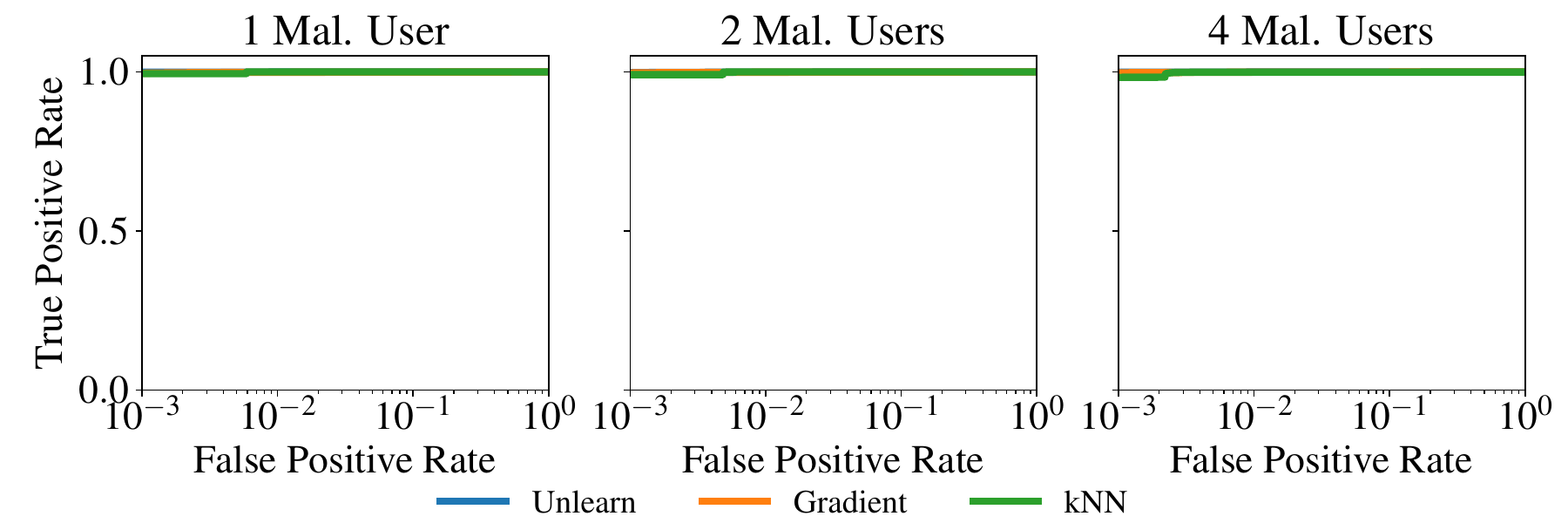}
        \caption{BadNets}
        \label{fig:cifar-badnets-roc}
    \end{subfigure}
    
    \begin{subfigure}[b]{1.0\linewidth}
        \centering
        \includegraphics[width=1.0\linewidth]{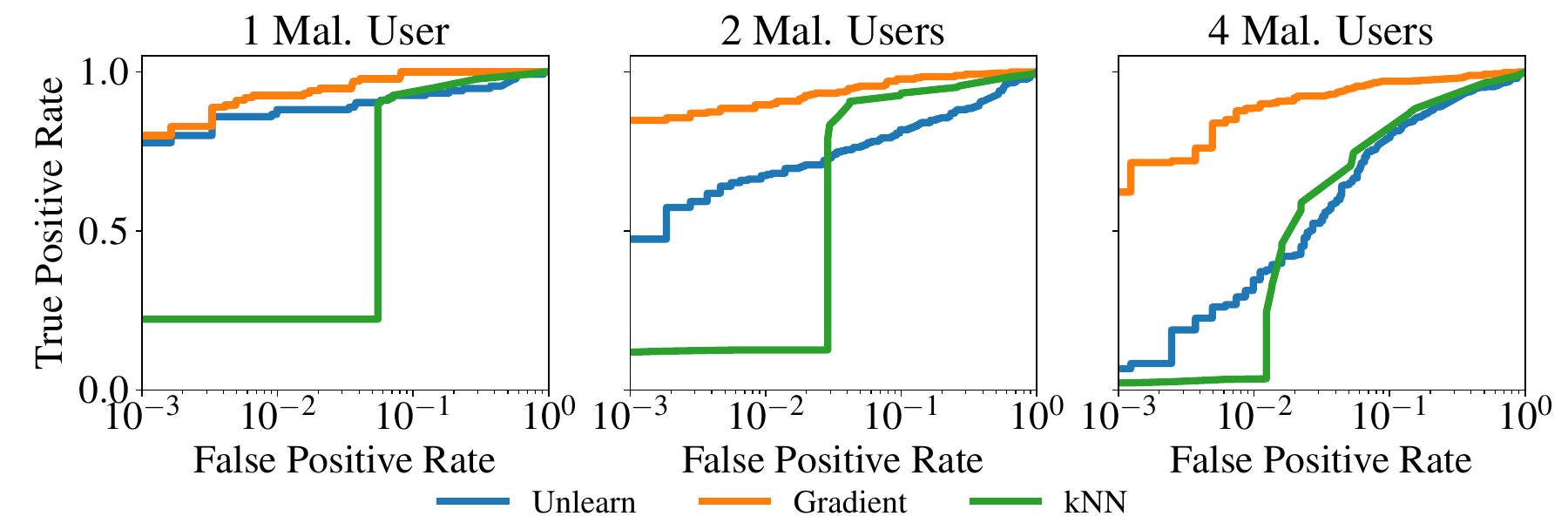}
        \caption{Subpopulation}
        \label{fig:cifar-subpop-roc}
    \end{subfigure}

    \begin{subfigure}[b]{1.0\linewidth}
        \centering
        \includegraphics[width=1.0\linewidth]{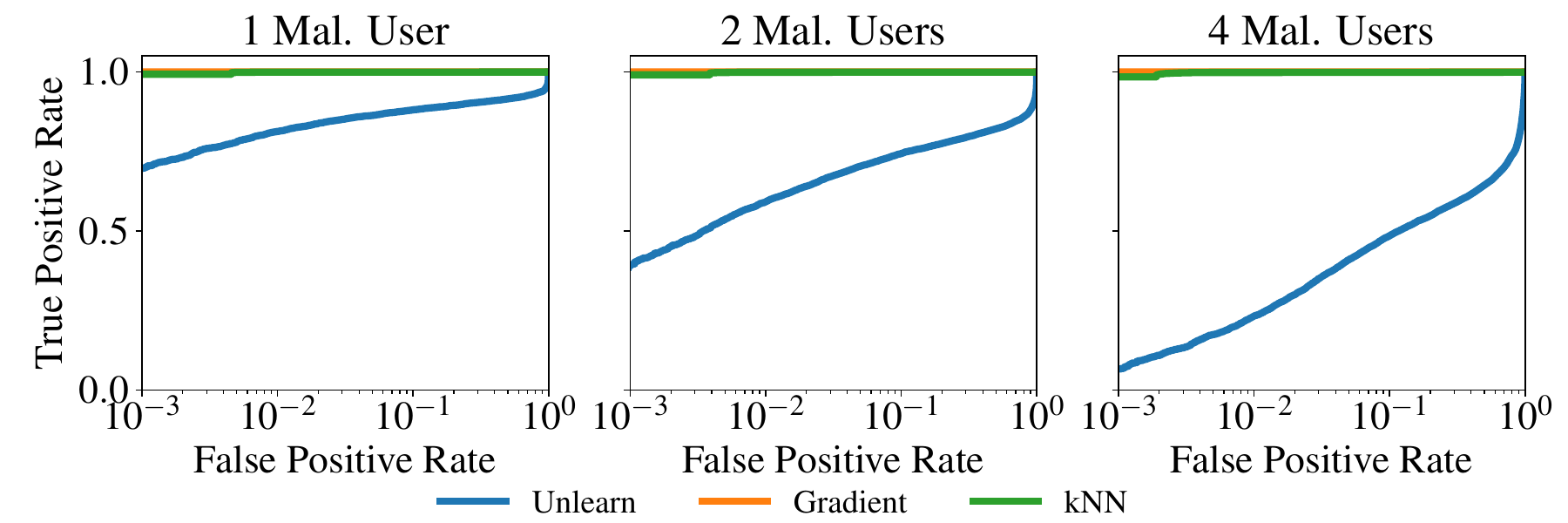}
        \caption{Noisy BadNets}
        \label{fig:cifar-badnets-noise-roc}
    \end{subfigure}
    
    \begin{subfigure}[b]{1.0\linewidth}
        \centering
        \includegraphics[width=1.0\linewidth]{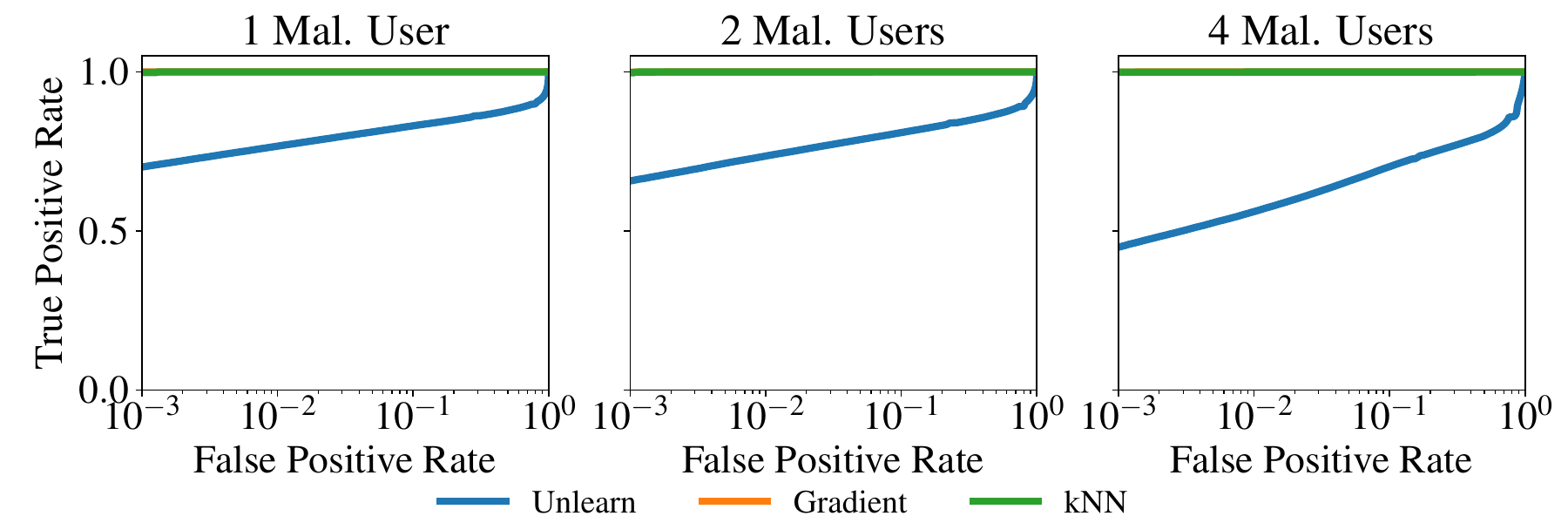}
        \caption{$\sigma$-BadNets}
        \label{fig:cifar-badnets-permute-roc}
    \end{subfigure}
    
    \caption{Malicious user identification ROC curves for dirty-label attacks against CIFAR-10 on ResNet18.}
    \label{fig:cifar-dirt-rocs}
\end{figure}

\begin{table*}[!htb]
\centering
\setlength{\tabcolsep}{3.1pt}
\caption{Results for dirty-label attacks against CIFAR-10 on ResNet18.}
\label{table:cifar10-dirty}
\begin{tabular}{@{}ccSSSSS!{\quad}SSSSS!{\quad}SSSSS@{}}
\toprule
\multicolumn{1}{c}{\multirow{2}{*}{\textbf{Attack}}}
& \multicolumn{1}{c}{\multirow{2}{*}{\shortstack[c]{\textbf{Mal.}\\\textbf{Users}}}}
& \multicolumn{5}{c}{\textbf{kNN (Baseline)}} & \multicolumn{5}{c}{\textbf{Unlearning (\S\ref{sec:unlearning})}}
& \multicolumn{5}{c}{\textbf{Gradient (\S\ref{sec:gradient})}} \\
\cmidrule(lr{1.2em}){3-7} \cmidrule(lr{1.2em}){8-12} \cmidrule(l){13-17}
\multicolumn{1}{c}{}
& \multicolumn{1}{c}{}
& mAP & mRR & {\scriptsize TPR (\%)} & {\scriptsize FPR (\%)} & AUC
& mAP & mRR & {\scriptsize TPR (\%)} & {\scriptsize FPR (\%)} & AUC
& mAP & mRR & {\scriptsize TPR (\%)} & {\scriptsize FPR (\%)} & AUC \\
\hline
\multirow{4}{*}{BadNets}
    & 1 & 0.9999 & 0.9999 & 99.450 & 0.014 & 1.000 & 1.0000 & 1.0000 & 99.990 & 6.989 & 1.000 & 0.9995 & 0.9995 & 99.847 & 0.502 & 1.000 \\
    & 2 & 0.9999 & 0.9999 & 99.155 & 0.009 & 1.000 & 0.9999 & 0.9999 & 99.967 & 2.076 & 1.000 & 0.9998 & 0.9998 & 99.770 & 0.178 & 1.000 \\
    & 3 & 0.9997 & 1.0000 & 98.772 & 0.006 & 1.000 & 0.9999 & 0.9999 & 99.963 & 1.048 & 1.000 & 0.9998 & 0.9999 & 99.442 & 0.044 & 1.000 \\
    & 4 & 0.9997 & 0.9999 & 97.672 & 0.002 & 0.999 & 0.9999 & 0.9999 & 99.874 & 0.019 & 1.000 & 0.9998 & 0.9999 & 98.230 & 0.003 & 1.000 \\
\hline
\multirow{4}{*}{Subpop}
    & 1 & 0.9644 & 0.9644 & 23.490 & 1.119 & 0.940 & 0.9396 & 0.9396 & 87.248 & 0.671 & 0.967 & 1.0000 & 1.0000 & 85.235 & 0.298 & 0.997 \\
    & 2 & 0.9567 & 0.9691 & 13.087 & 0.587 & 0.947 & 0.8777 & 0.9060 & 57.047 & 0.168 & 0.916 & 0.9887 & 0.9966 & 80.537 & 0.084 & 0.991 \\
    & 3 & 0.9473 & 0.9801 & 6.488 & 0.384 & 0.937 & 0.8973 & 0.9497 & 30.425 & 0.096 & 0.929 & 0.9820 & 1.0000 & 72.260 & 0.000 & 0.983 \\
    & 4 & 0.9449 & 0.9717 & 3.691 & 0.447 & 0.921 & 0.8994 & 0.9463 & 12.416 & 0.224 & 0.905 & 0.9798 & 0.9899 & 60.570 & 0.000 & 0.985 \\
\hline
\multirow{4}{*}{\shortstack[c]{Noisy\\BadNets}}
    & 1 & 0.9994 & 0.9994 & 99.213 & 0.013 & 1.000 & 0.8451 & 0.8451 & 71.411 & 0.529 & 0.857 & 0.9998 & 0.9998 & 99.976 & 0.675 & 1.000 \\
    & 2 & 0.9990 & 0.9992 & 99.044 & 0.012 & 0.999 & 0.7825 & 0.8056 & 46.914 & 0.295 & 0.791 & 0.9999 & 0.9999 & 99.964 & 0.324 & 1.000 \\
    & 3 & 0.9989 & 0.9994 & 98.778 & 0.012 & 0.999 & 0.7273 & 0.7523 & 27.713 & 0.218 & 0.702 & 0.9999 & 1.0000 & 99.960 & 0.104 & 1.000 \\
    & 4 & 0.9989 & 0.9995 & 97.948 & 0.004 & 0.999 & 0.7028 & 0.7153 & 11.042 & 0.210 & 0.640 & 0.9999 & 1.0000 & 99.930 & 0.002 & 1.000 \\
\hline
\multirow{4}{*}{$\sigma$-BadNets}
    & 1 & 0.9999 & 0.9999 & 99.838 & 0.002 & 1.000 & 0.8828 & 0.8828 & 81.272 & 1.641 & 0.900 & 1.0000 & 1.0000 & 99.996 & 0.627 & 1.000 \\
    & 2 & 0.9998 & 0.9999 & 99.794 & 0.001 & 1.000 & 0.8644 & 0.8635 & 75.206 & 0.989 & 0.877 & 1.0000 & 1.0000 & 99.993 & 0.260 & 1.000 \\
    & 3 & 0.9998 & 0.9999 & 99.718 & 0.001 & 1.000 & 0.8504 & 0.8435 & 68.167 & 0.621 & 0.853 & 1.0000 & 1.0000 & 99.986 & 0.061 & 1.000 \\
    & 4 & 0.9998 & 0.9999 & 98.992 & 0.000 & 1.000 & 0.8252 & 0.8123 & 56.563 & 0.825 & 0.809 & 1.0000 & 1.0000 & 99.952 & 0.006 & 1.000 \\
\bottomrule
\end{tabular}
\end{table*}

\myparagraph{Parameter ablations}
\blue{
We vary the value of $k$ for the top-$k$ metric for different attacks. We find that setting $k$ too small ($k< 16$ for CIFAR-10) weakens performance as there is not enough signal to distinguish malicious samples. On the other hand, if $k$ is too large, the benign samples influence the score more than poisoned samples. The value of $k=32$ provides good results for several attacks.
We run \sysname with 20 owners to evaluate the impact of the number of users. For BadNets, Unlearning maintains 100\% TPR while the gradient-based score drops to 60\% with 8 poisoned users. On Subpopulation attacks, the trend is reversed, and the gradient-based score achieves $\ge 70\%$ TPR with 8 poisoned owners while Unlearning degrades to a TPR of $1\%$. Our findings suggest that our traceback methods can complement each other: when one method fails, another might still succeed.
\ifdefined\isnotextended
Full ablation results can be found in \Cref{appx:ablations} in the extended version \cite{Rose2024-UTraceFull}.
\else
Full results are in \Cref{appx:ablations}.
\fi
}

\begin{figure}[t]
        \centering
        \includegraphics[width=1.0\columnwidth]{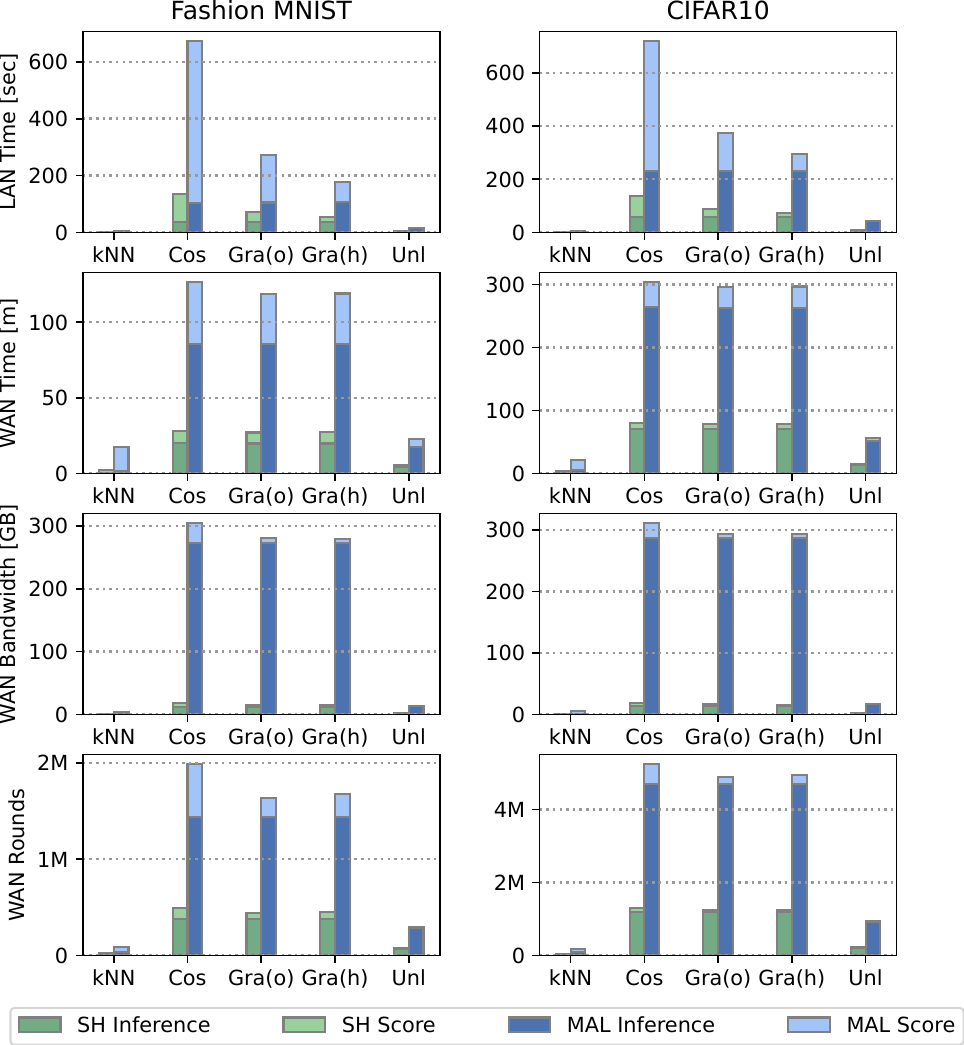}
    \caption{
    Overhead of our user-level auditing functions \sysgradient and \syscamel.
    \texttt{kNN} refers to the kNN baseline,
    \texttt{Cos} to the unoptimized GAS score.
    Further, \texttt{Grad(o)} refers to the version of \sysgradient with the reciprocal square root optimization,
    and \texttt{Grad(h)} refers to \sysgradient with the heuristic sample selection.
    Finally \texttt{Unl} represents the results of \syscamel.}
    \label{fig:eval:mpc}
\end{figure}

\subsection{Text and Malware Experiments}
We present the results for experiments against the SST-2 and Ember datasets in \Cref{appx:modality} and summarize the findings here. First, we find that the inner-product heuristic does not select relevant poisoning samples in the text modality. Hence, we use full $\topk{k}$ selection for attacks against SST-2. We experiment with four types of backdoor attack against SST-2 and two backdoor attacks against Ember.
In general, we find that the gradient-based score is more consistent in flagging malicious users at low FPR in experiments from other modalities. For example, under a BadNets-style text backdoor against SST-2, the gradient score achieves a TPR of $93\%$, versus the unlearning TPR of $91\%$ and kNN TPR of $89\%$. A qualitatively similar result follows for an explanation-guided backdoor attack against Ember, where the gradient score achieves a TPR of $100\%$ versus the unlearning TPR of $72\%$ and kNN's $98\%$.

\red{
}

\subsection{Clean-label Poisoning Attacks on Vision}
We evaluate our method against clean-label poisoning attacks. Clean-label attacks require the poisoned samples to be assigned the correct label and generally require that the poisoned samples are close in distance to some existing clean training sample. Due to these additional constraints, clean-label attacks can be harder for anomaly-based defenses to detect.
We evaluate against two clean-label poisoning attacks, Witches' Brew \cite{geiping_witches_2021} and Sleeper Agent \cite{souri_sleeper_2021}. Witches' Brew is a targeted poisoning attack that imposes small adversarial perturbations on top of clean training instances in order to optimize a gradient mimicry objective. Since Witches' Brew is a targeted attack, there is only a single target sample. Sleeper Agent is a refinement of Witches' Brew with a backdoor attack objective.

\ifdefined\isnotextended
We present the full results for clean-label attacks in \Cref{appx:clean-label} in the extended version \cite{Rose2024-UTraceFull} and summarize the findings here.
\else
We present the full results for clean-label attacks in \Cref{appx:clean-label} and summarize the findings here.
\fi
\blue{
Clean-label poisoning attacks prove more challenging for all traceback methods. For Witches' Brew, the gradient-based score outperforms unlearning across all metrics, achieving mRR scores of $>0.99$ in all settings, meaning the top-ranked user is nearly always malicious, and achieving $>89\%$ TPR with 1 poisoned user, compared with $<15\%$ for kNN and $<86\%$ for Unlearning. However, all methods achieve $<6\%$ TPR when the attacker distributes poisons across 4 poisoned users, indicating a substantial decrease in effectiveness. Sleeper Agent attacks show a little more promise, gradient-based scores achieving $>36\%$ and Unlearning $> 55\%$ TPR, respectively, with 4 poisoned users, in constrast with kNN's $<20\%$ TPR.
}

\subsection{Secure \sysname}
We evaluate the performance of \sysname in MPC. Our implementation is based on MP-SPDZ~\cite{keller2020mp}, a popular framework for MPC that supports a variety of protocols. We compare \sysgradient and \syscamel with a kNN-based baseline.

\myparagraph{Experimental Setup}
We run \sysname on AWS \texttt{c5.9xlarge} machines running Ubuntu 20.04, each equipped with 36 vCPUs of an Intel Xeon 3.6~Ghz processor and 72~GB of RAM. The machines are connected over a local area network (LAN) through a 12~Gbps network interface with an average round-trip time (RTT) of $0.5$~ms. We additionally perform our experiments in a simulated wide area network (WAN) setting using the \texttt{tc} utility to introduce an RTT of $80$~ms and limit the bandwidth to $2$~Gbps. We report the total wall clock time and the communication cost based on the data sent by each party, which includes the time and bandwidth required for setting up any cryptographic material for the MPC protocol, such as correlated randomness. 
We split the cost into preprocessing and online phases, corresponding to the protocols in~\Cref{sec:framework}.

    \myparagraph{Precision}
    We conducted experiments on the BadNets~\cite{gu_badnets_2019} attack to identify a fixed-point representation that aligns the cosine similarities and party attribution results of the MPC protocol with those of the floating-point baseline.
    The choice of fractional $(f)$ and integer ($i$) bit lengths is guided by several constraints: the division protocol requires
    $k \approx 2f$ and $k$ must be strictly less than half the domain size for efficiency~\cite{catrina2010secure}. Additionally, a domain size aligned with 64-bit word sizes is optimal for hardware efficiency.
    While a representation of $f=16$ and $i=31$ is generally regarded as sufficient for ML inference and training tasks~\cite{Keller2022-quantizedtraining}, it results in diverging cosine similarities and correct party attribution in only 30\% of our experiments.
    This issue arises because the attack causes poisoned samples to produce extremely small gradient magnitudes, which cannot be accurately represented in the cosine computation with lower precision.
    Increasing the precision to $f=32$ and $i=62$ resolves these issues, enabling consistent party attribution across all experiments despite minor variations in cosine similarity.
    We use this configuration for the performance evaluation of our protocols.

\myparagraph{Preprocessing} 
We anlyze the preprocessing costs of the two auditing mechanisms, \sysgradient and \syscamel. The gradient similarity approach employed by \sysgradient involves computing and compressing the gradients of the training dataset $D$ across all model checkpoints in $\cT$, resulting in an overall cost of $O(\abs{D} \cdot \abs{\cT})$ gradient computations. \sysgradient only requires gradients of the model’s final layers, which eliminates the need for full backward passes and substantially reduces computational overhead. In contrast, \syscamel relies on approximate unlearning by training $m$ leave-one-out models, which incurs $O(\abs{D} \cdot m \cdot e)$ forward and backward passes, where $e$ denotes the number of unlearning epochs. This makes the preprocessing phase of \syscamel more computationally intensive than for \sysgradient.
The dominant cost driver in \sysgradient is the number of model checkpoints, whereas \syscamel’s cost scales with the number of parties $m$. For CIFAR-10 in the LAN setting, \syscamel requires the equivalent of an additional full training run, resulting in 696 hours and 2734 hours of preprocessing time under semi-honest and malicious protocols, respectively. In comparison, \sysname reduces this overhead by approximately 2–3.5x, with corresponding preprocessing times of 374 hours and 1372 hours. Although both methods incur substantial preprocessing costs, performing this work offline is essential for enabling efficient online audits.

\myparagraph{Online}
We compare the wall-clock time and bandwidth costs of the online time of the kNN baseline, the optimized and heuristic versions of \sysgradient, as well as \syscamel in~\Cref{fig:eval:mpc}. For CIFAR-10, the overhead of Camel is 41 seconds for the malicious protocol in  LAN  and 73 minutes in  WAN, because it only requires $O(m)$ forward passes of the test sample, i.e., one inference per input party. \sysgradient, on the other hand, requires $O(\abs{\cT})$ forward passes in order to compute the gradients for each checkpoint, which take 229 seconds for the malicious protocol in  LAN  and 263 minutes in WAN. The key reason for \syscamel’s lower latency is that, in our setting, the number of users $m$ is substantially smaller than the number of training checkpoints $\abs{\cT}$.

In addition to the inference, computing cosine distances in \sysgradient introduces an additional overhead of $2.1-5.5$x (an extra $487-570$ seconds) in the malicious LAN setting. We reduce this overhead to $0.62-1.58$x in LAN  using the reciprocal square root optimization. \sysgradient with the heuristic optimization further reduces this overhead to $0.27-0.67$x of the forward pass cost. In absolute terms, the cost of doing a traceback is $72-89$ seconds ($27-78$ minutes) and $294-374$ seconds ($118-297$ minutes), respectively, for the semi-honest and malicious protocols in the LAN (WAN) setting for CIFAR-10. 
The online phase of \syscamel is significantly faster than \sysgradient, $15$ minutes for semi-honest and $57$ minutes for the malicious WAN setting, because its main overhead is computing the inferences of which it requires fewer.
For reference, a single inference takes 1.9 and 4.1 seconds in the malicious LAN setting, and 3.5 and 7 minutes in the WAN setting.
 While the overall online cost of \sysgradient and \syscamel is non-negligible, it remains practical in settings where only a subset of inferences is subject to audit.
Note that these numbers include the cost to compute input-independent cryptographic material of the underlying MPC protocols, and MPC-protocol-dependent preprocessing optimizations can be applied to reduce latency further.

\section{Conclusions, Limitations, and Future Work}\label{sec:discussion}

\blue{

Privacy-preserving machine learning (PPML) holds significant potential for enabling 
collaborative model training across sensitive and siloed datasets. However, by design, 
these systems conceal input data and training processes, creating opportunities for 
malicious actors to poison the model without detection. Ensuring accountability in such 
settings remains a pressing and underexplored challenge.

This paper introduces the task of \textit{user level data poisoning traceback} in secure 
collaborative learning as a measure to instill accountability at deployment time. Our 
proposed framework, \sysname, is the first to articulate and address this problem with a 
concrete and practical solution. \sysname combines two complementary detection mechanisms,
gradient similarity and approximate unlearning, to attribute model misbehavior to specific 
users, even when poisoning is subtle or distributed. We address key performance challenges 
in realizing \sysname in multi-party computation (MPC) settings, including efficient 
gradient storage, secure score computation, and auditing support for offline users.

Our experiments show that \sysname is robust across a diverse range of attacks, 
including backdoor, targeted, and subpopulation attacks, and operates effectively on 
multiple data modalities such as vision, text, and malware. It remains effective 
under low poisoning rates and coordinated attacks involving multiple adversaries. Through 
a series of system-level optimizations, we make \sysname practical for PPML settings by 
significantly reducing online computational cost and interaction overhead during the 
auditing phase.

Despite these contributions, several challenges remain. Clean-label poisoning attacks, 
where poisoned samples appear semantically legitimate, remain especially difficult to 
detect.
Moreover, while our responsibility scores are tailored for efficient execution within MPC protocols, 
the overall cost of auditing remains nontrivial at scale. A study on the learning dynamics 
of clean-label attacks to derive stronger responsibility
scoring methods would make interesting future work. In the
MPC vein, developing additional mechanisms to facilitate
low-cost, long-term deployments for the entire ML lifecycle
is an interesting challenge.

In the broader context, \sysname highlights the growing need for accountability mechanisms 
in privacy-preserving machine learning. While cryptographic protocols have made great 
strides in ensuring confidentiality, robustness and trustworthiness demand equal 
attention. We believe that frameworks like \sysname provide a foundation for bridging this 
gap and that continued research at the intersection of secure computation, machine 
learning, and adversarial robustness is critical for moving PPML from an experimental to a 
deployable state. }

\ifCLASSOPTIONcompsoc
  \section*{Acknowledgments}
\else
  \section*{Acknowledgment}
\fi

This research was sponsored by NSF awards CNS-2312875 and CNS-2331081, and the contract W911NF-24-2-0115, issued by US ARMY ACC-APG-RTP.

{\tiny
\bibliographystyle{IEEEtran}
\bibliography{references}
}

\begin{appendices}
\crefalias{section}{appendix}

\section{Additional Experiment Details}\label{appx:exp-details}

\begin{table*}[!htb]
    \centering
    \small
    \setlength{\tabcolsep}{2pt}
    \caption{Summary of Training Configurations. Reported parameters are consistent across all poisoning settings.}
    \begin{tabular}{@{}lcccccccc@{}}
        \toprule \\
        \textbf{Dataset} & \textbf{Model} & \textbf{Training} & \textbf{Epochs} & \textbf{Batch Size} & \textbf{Weight Decay} & \textbf{Learning Rate} &\textbf{Momentum} & \textbf{LR Drop Schedule} \\ \midrule
        CIFAR-10 & ResNet-18 & Scratch & 50 & 64 & $2 \times 10^{-4}$ & 0.1 & 0.9 & $0.32@10$ \\
        Fashion MNIST & ConvNet & Scratch & 50 & 64 & $2 \times 10^{-4}$ & 0.01 & 0.9 & $0.32@10$ \\
        SST-2 & $\roberta$ & Fine-tuning & 5 & 64 & 0 & $2 \times 10^{-5}$ & 0.9 & N/A \\
        EMBER & EmberNN & Scratch & 10 & 512 & $1 \times 10^{-6}$ & 0.1 & 0.9 & N/A
    \end{tabular}
    \label{table:training-config}
\end{table*}

\subsection{Detailed Training Configurations}\label{appx:training-config}
We report detailed training configurations in \Cref{table:training-config} and traceback configurations for each learning setting in \Cref{table:traceback-hyperparameters}.

\begin{table}[!htb]
\centering
\small
\caption{Traceback hyperparameters for different datasets. Unlearning is parameterized by unlearning epochs $E$ and learning rate $\eta$. Our method depends on parameters for the $\topkl{k}{l}$ metric, projection dimension $d$, number of gradient layers $L$, and the number of checkpoints $\abs{\cT}$.}
\label{table:traceback-hyperparameters}
\begin{tabular}{@{}cccccccc@{}}
\toprule
\multicolumn{1}{c}{\multirow{2}{*}{\textbf{Dataset}}} & \multicolumn{2}{c}{\textbf{Unlearning (\S\ref{sec:unlearning})}} & \multicolumn{5}{c}{\textbf{Gradient (\S\ref{sec:gradient})}} \\ \cmidrule(l){2-3} \cmidrule(l){4-8}
\multicolumn{1}{c}{} & \textbf{$\eta$} & \textbf{$E$} & $k$ & $l$ & $d$ & L & $\abs{\cT}$ \\
\hline
CIFAR-10 & $1 \times 10^{-4}$ & 5 & 32 & 512 & 64 & 1 & 50 \\
Fashion &$1 \times 10^{-4}$ &5 &32 & 512 & 8 & 1 & 50 \\
Ember &$1 \times 10^{-2}$ &1 &128 &4096 &64 &2 &20 \\
SST-2 &$5 \times 10^{-6}$ &1 &32 &- &64 &2 &25 \\
\bottomrule
\end{tabular}
\end{table}

\subsection{Attack Algorithms and Configurations}\label{appx:attack_config}

\myparagraph{Attacks against CIFAR-10}
For BadNets, each trial selects a source and target class at random. Poisons are generated by applying the backdoor trigger to training samples from the source class and changing their label to the target class. We use a poisoning rate of 2500 poisons (5\%).

For Subpopulation attacks, we generate each attack by the following procedure. First, we extract features using a feature extractor trained on the clean CIFAR-10 dataset. We form the attack targets using the test set from the 32 nearest neighbors around a randomly selected test point. Poisons are formed by bootstrap sampling a small pool of training points centered on the subpopulation centroid according to the desired poisoning rate. This poisoning strategy gives us finer control over the number of attack targets and poison samples, compared to the original clustering-based algorithm. We use a poisoning rate of 1000 poisons (2\%).

For Witches' Brew, we use the poisons from \cite{schwarzschild_just_2021} which were generated with a perturbation budget of $\varepsilon=8$ and a poisoning budget of 500 poisons (1\%). Each attack targets the misclassification of a single test sample. Sleeper Agent is a backdoor attack based on Witches' Brew. We generate poisons using a perturbation budget of $\varepsilon=16$ and a poisoning rate of 500 poisons (1\%). Each attack is generated using the source-to-target backdoor objective.

\section{Additional Experiments} \label{appx:additional-experiments}

\begin{myhideenv}

\subsection{Ablations}\label{appx:ablations}
We evaluate the sensitivity of our tool to changes in the hyperparameter $k$ and the total number of users $m$. For simplicity, we consider only the BadNets and Subpopulation attacks against CIFAR-10 using the same training and attack configurations as in \Cref{sec:dirty-label}.

\myparagraph{Hyperparameter $k$ Ablation}
We conduct an ablation on \sysname hyperparameter $k$. To remove the impact of the inner-product heuristic at different values of $\ell$, we use the simpler $\topk{k}$ influence metric. Results are shown in \Cref{fig:hyperparam-k}. Different attacks have different responses to changes in $k$. We observe that BadNets experiences a stark drop in ranking performance with $k \ge 64$, whereas ranking of users under Subpopulation attacks is more stable with large $k$. A similar observation holds for TPR at low FPR: for BadNets, TPR drops from nearly 100\% at $k=32$ to under 40\% at $k=64$ with 4 poisoned datasets. However, the TPR for subpopulation attacks with 4 poisoned users is relatively stable up to $k=256$.

\myparagraph{20 Users Ablation}
We evaluate the dependence of \sysname on the total number of users by simulating a scenario with 20 total users. We consider scenarios with 2, 4, 6, and 8 poisoned users. Results are given in \Cref{table:cifar10-20u}. We notice slight degradation in ranking performance for BadNets with our method, whereas unlearning maintains perfect rankings. On the other hand, using thresholds determined from FPR analysis produces relatively large FPR for unlearning, qualitatively similar to findings with 10 users. For Subpopulation attacks, the degradation under unlearning is more severe than with 10 users: a mAP of 0.87 with 4 poisoned owners and 20 users compared with 0.91 with 10 users. Our method, on the other hand, actually improves in performance with 20 users, maintaining $\ge 0.99$ mAP with up to 8 poisoned users. This trend is also reflected in the classification metrics: for Subpopulation attacks, unlearning TPR drops to $7\%$ with 8 poisoned datasets, whereas our method achieves a TPR of $86\%$, which is better than with only 10 total users. We speculate that this difference can be attributed to the lower variance in standardized user scores: we observe that the $1\%$ FPR threshold for gradient scores is 7.5 with 10 users and 6.8 with 20 users. A similar drop occurs for unlearning.
\end{myhideenv}

\begin{myhideenv}
\subsection{Fashion MNIST}\label{appx:fashion}
We report the metrics for the Fashion-MNIST on ConvNet training scenario in \Cref{table:fashion}. For the BadNets attack, we see the performance of the gradient score drop slightly compared to unlearning. On label flipping and subpopulation attacks, the gradient score outperforms unlearning, especially as the number of poisoned users increases. For example, at 4 poisoned users, unlearning achieves a TPR of $25\%$ and $33\%$ on label flipping and subpopulation attacks, respectively, while the gradient score maintains TPRs of above $90\%$ for both attacks. These results indicate the resilience of the gradient score to poisoning attacks distributed across multiple datasets. We note that the kNN baseline fails to reliably detect malicious users for Subpopulation attacks, achieving only $<2\%$ TPR.

\end{myhideenv}

\subsection{Text and Malware Data Modalities}
\label{appx:modality}
We evaluate \sysname on learning tasks from text and malware domains. For text, we use the SST-2 dataset \cite{socher_recursive_2013} and fine-tune on a $\roberta$ model \cite{liu_roberta_2019}. For malware, we use the Ember dataset \cite{anderson_ember_2018} and train with the EmberNN model architecture \cite{severi_explanation-guided_2021}.

\myparagraph{Attack Setup}
We perform backdoor poisoning attacks against RoBERTa/SST-2 using four backdoor variants used by Pei et al. \cite{pei_textguard_2023, cui_unified_2022}. Each variant leverages a different text trigger. First, a straightforward ``BadNets'' extension injects a pair of arbitrary characters at a random position in the sample. For our experiments, we apply a random selection from the set $\{\text{``fe''}, \text{``hr''}, \text{``qp''}, \text{``ge''}\}$. Second, we use a ``style'' backdoor that applies style transfer to clean examples \cite{qi_mind_2021}. We use the default ``Bible'' style for our attacks. Third, we consider a ``syntax'' backdoor in which the trigger is encoded in the syntactic structure of the sentence. For our experiments, we use the syntax \textit{S(SBAR)(,)(NP)(VP)(.)} as the trigger. Finally, we consider a ``sentence'' backdoor using the fixed sentence ``I watch this 3D movie'' as the trigger.

For Ember / EmberNN, we consider two different backdoor attacks. First, we implement a dirty-label backdoor attack that works by choosing a random subset of features from the input space and assigning a fixed pattern to those features across all backdoored samples. We call this attack BadNets for Ember. For the remaining attack, we use the clean-label explanation-guided backdoor attack of Severi et al. \cite{severi_explanation-guided_2021}. For both attacks we use a poison size of 6\,000 (1\%).

\myparagraph{Results}
We report the metrics for attacks against SST-2 on $\roberta$ in \Cref{table:sst2}.
We report the metrics for attacks against Ember on EmberNN in \Cref{table:ember}.
We also plot attack-conditioned ROC curves in \Cref{fig:sst2-rocs,fig:ember-rocs}.

In attacks against Ember, the gradient score and kNN baseline achieve high TPR against the explanation-guided backdoor ($97\%$ and $100\%$, respectively), while the unlearning performs comparably worse ($<73\%$). A similar pattern occurs for the BadNets variant, with the gradient and kNN scores behaving comparably around 77-79\% TPR, and with the unlearning score dropping to $<6\%$ TPR. In general, the explanation-guided backdoor is easier to detect.

One possible explanation for these results is that because the explanation-guided backdoor targets features that are more relevant to the model, the terms in the gradient responsible for encoding the backdoor are more prominent than when the features are selected at random. This affects the gradient score directly--because its effectiveness relies on alignment between poisons and attack samples--and unlearning indirectly, because the rate at which the malicious features are approximately unlearned will be slower.

All methods perform well detecting attacks against SST-2. In three out of four attacks, the gradient and unlearning scores maintain above $98\%$ TPR. For the BadNets attack, the gradient and unlearning scores maintain above $90\%$ TPR.

\subsection{Clean-label Attacks}
We report the metrics for clean-label attacks against CIFAR-10 on ResNet-18 in \Cref{table:cifar10-clean}.
We also plot attack-conditioned ROC curves in \Cref{fig:cifar-clean-rocs}.
We observe that clean-label attacks pose a challenge to our traceback scoring methods.
Both Unlearning and the gradient-based score experience low TPR of below $6\%$ for Witches' Brew when poisons are distributed across multiple poisoned owners.

\begin{figure}[!htbp]
    \centering

    \begin{subfigure}[b]{1.0\linewidth}
        \centering
        \includegraphics[width=1.0\linewidth]{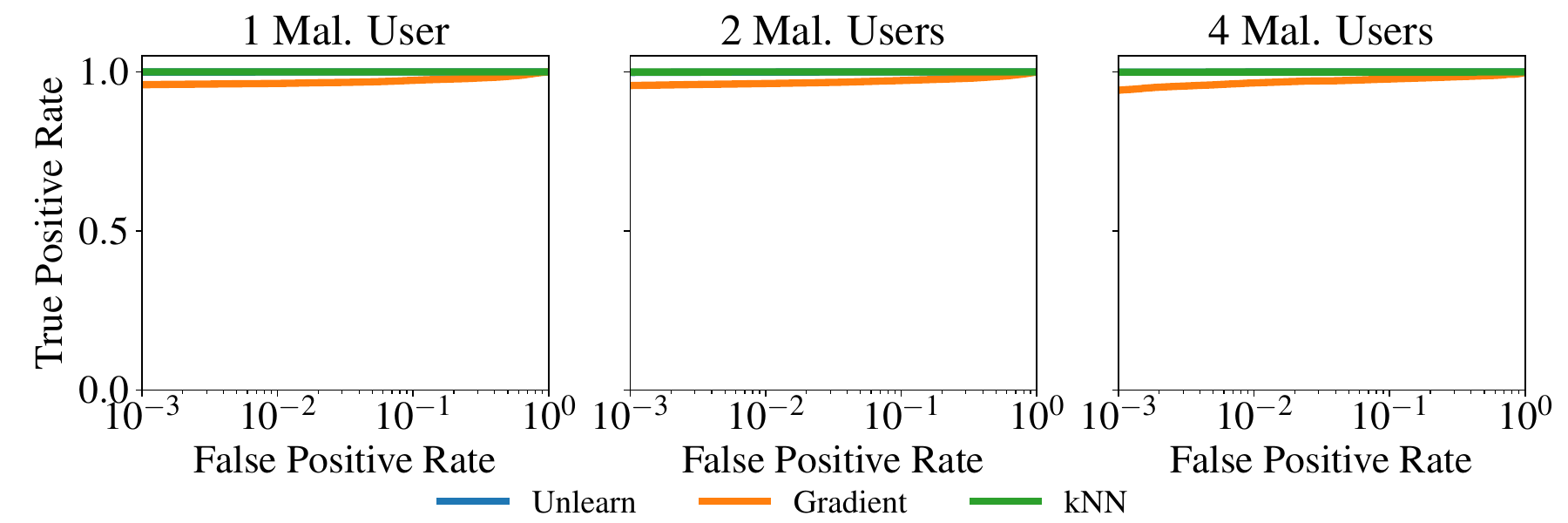}
        \caption{BadNets}
        \label{fig:fashion-badnets-roc}
    \end{subfigure}
    
    \begin{subfigure}[b]{1.0\linewidth}
        \centering
        \includegraphics[width=1.0\linewidth]{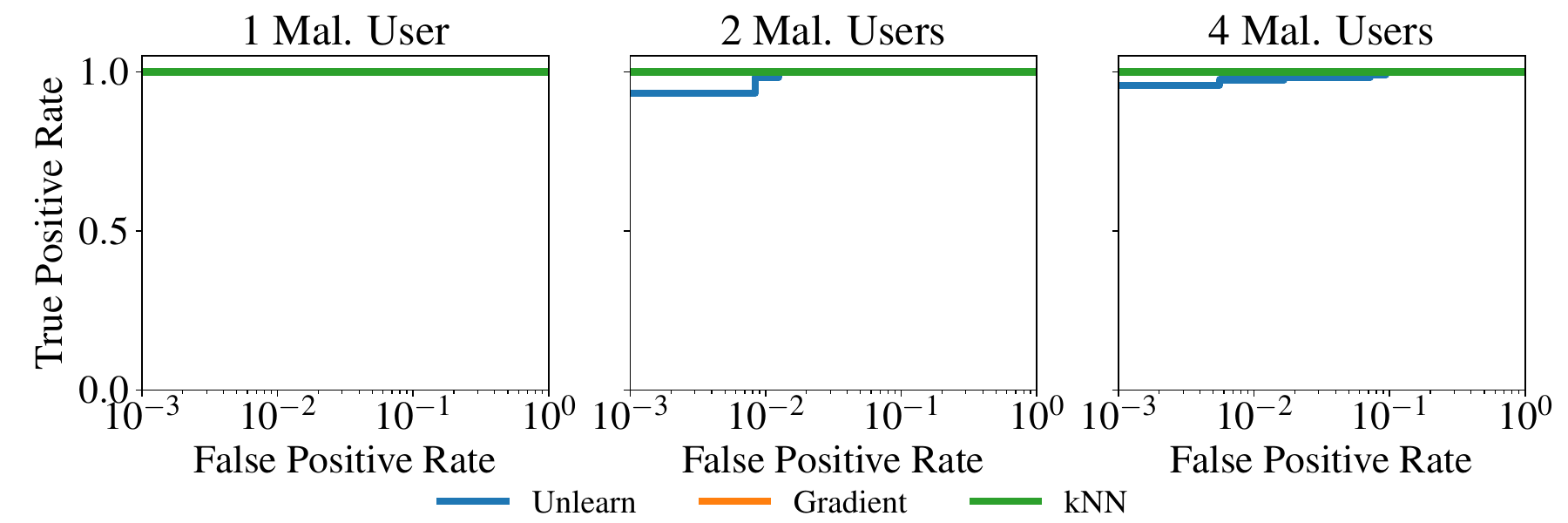}
        \caption{Label Flip}
        \label{fig:fashion-labelflip-roc}
    \end{subfigure}
    
    \begin{subfigure}[b]{1.0\linewidth}
        \centering
        \includegraphics[width=1.0\linewidth]{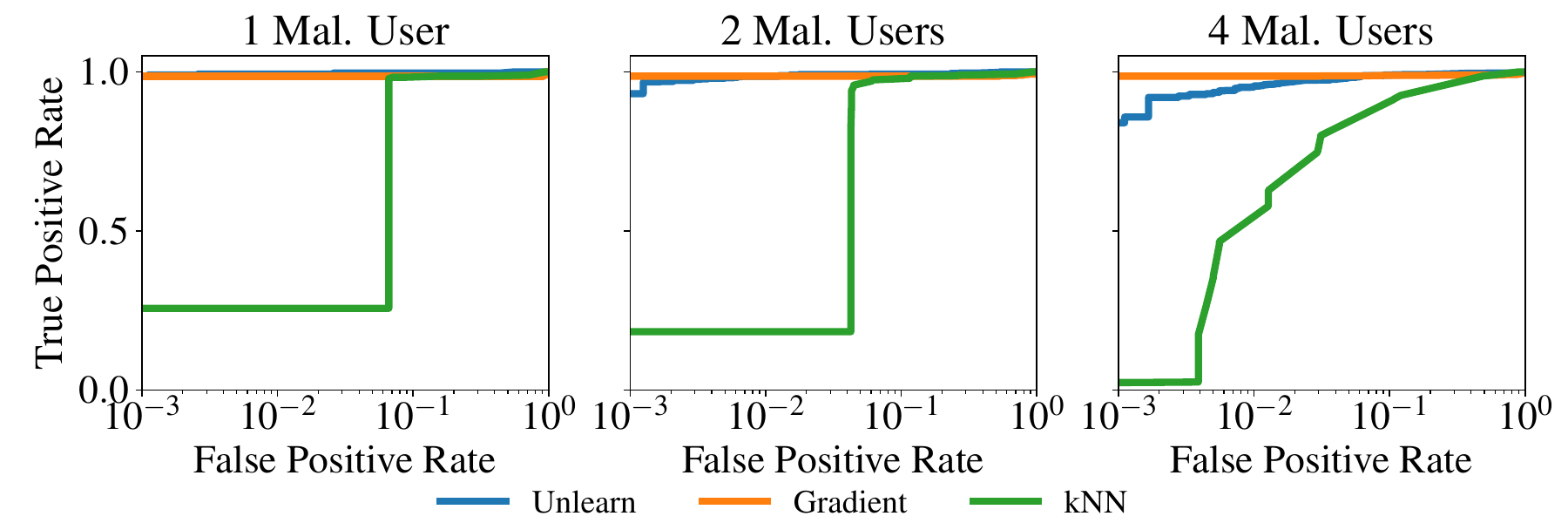}
        \caption{Subpopulation}
        \label{fig:fashion-subpop-roc}
    \end{subfigure}
    
    \caption{Malicious user identification ROC curves for standard attacks against Fashion / ConvNet.}
    \label{fig:fashion-rocs}
\end{figure}

\begin{myhideenv}
\begin{figure}[!htbp]
    \centering
    \begin{subfigure}[b]{1.0\linewidth}
        \centering
        \includegraphics[width=1.0\linewidth]{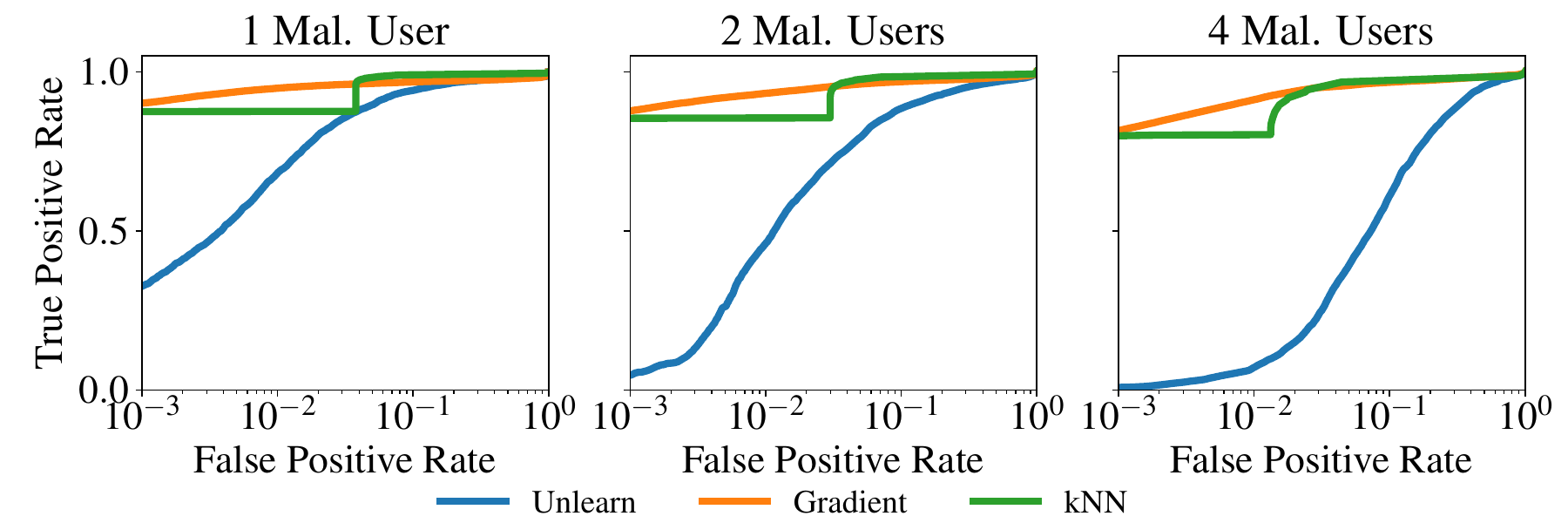}
        \caption{BadNets}
        \label{fig:ember-badnets-roc}
    \end{subfigure}
    
    \begin{subfigure}[b]{1.0\linewidth}
        \centering
        \includegraphics[width=1.0\linewidth]{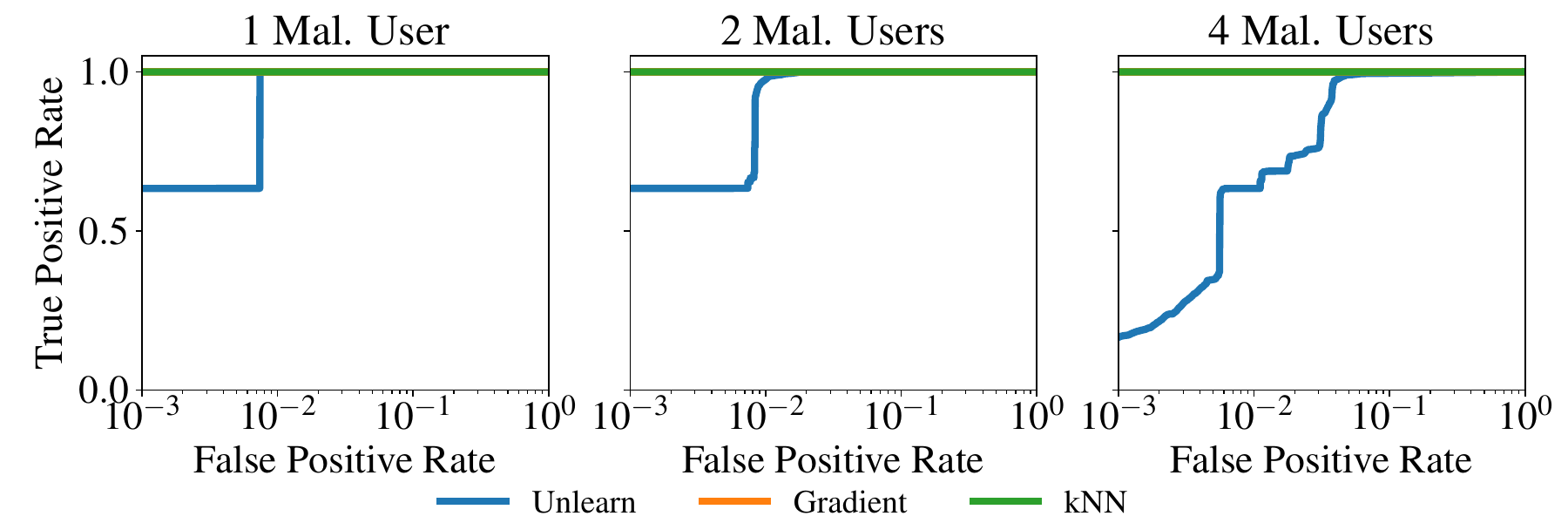}
        \caption{Explanation}
        \label{fig:ember-explanation-roc}
    \end{subfigure}
    
    \caption{Malicious user identification ROC curves for attacks against Ember / EmberNN.}
    \label{fig:ember-rocs}
\end{figure}
\end{myhideenv}

\begin{myhideenv}
\label{appx:clean-label}
\begin{figure}[!htbp]
    \centering
    \begin{subfigure}[b]{1.0\linewidth}
        \centering
        \includegraphics[width=1.0\linewidth]{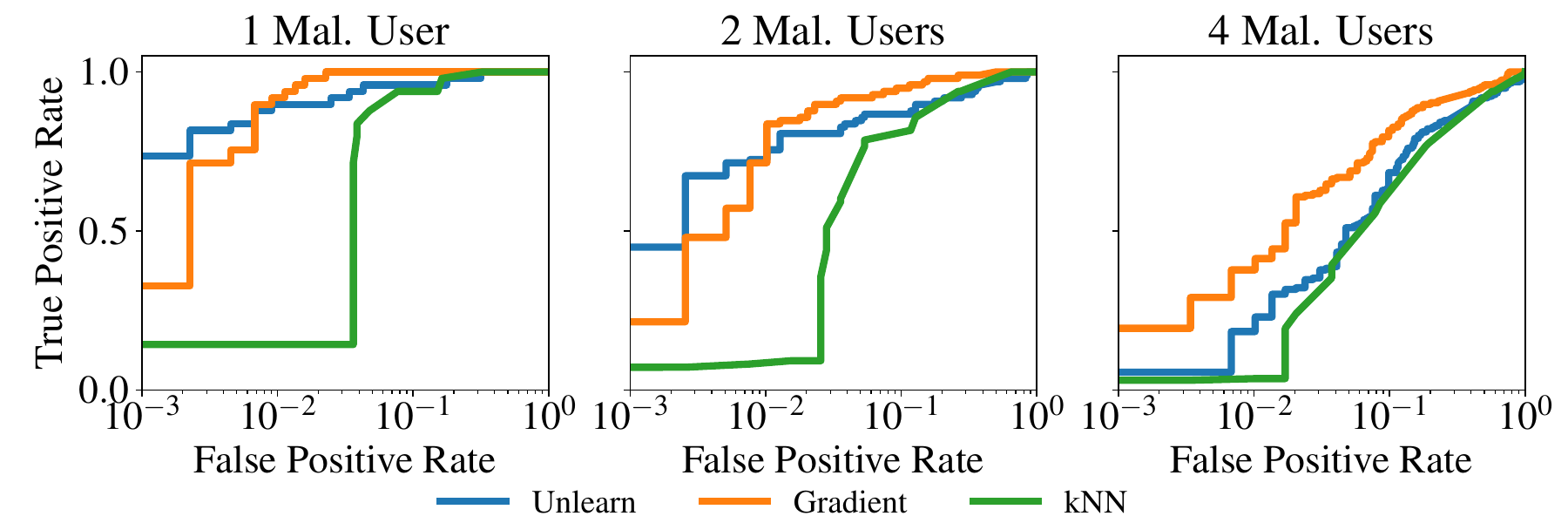}
        \caption{Witches' Brew}
        \label{fig:cifar-witch-roc}
    \end{subfigure}

    \begin{subfigure}[b]{1.0\linewidth}
        \centering
        \includegraphics[width=1.0\linewidth]{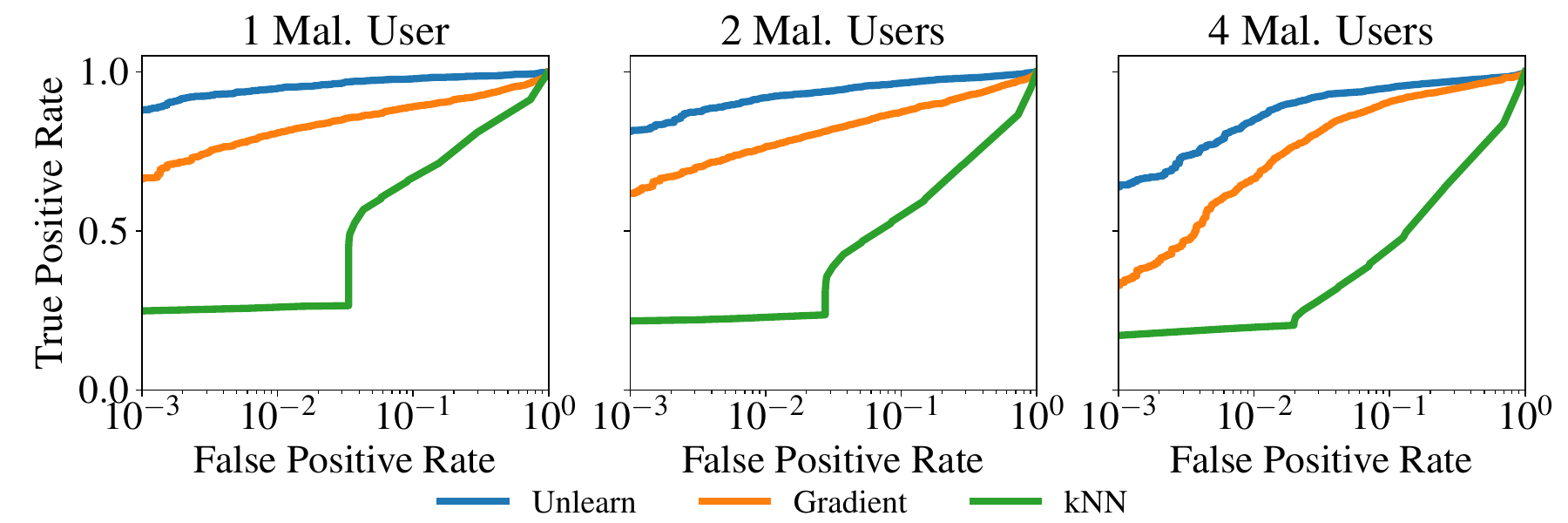}
        \caption{Sleeper Agent}
        \label{fig:cifar-sleeper-roc}
    \end{subfigure}
    
    \caption{Malicious user identification ROC curves for clean-label attacks against CIFAR-10 / ResNet18.}
    \label{fig:cifar-clean-rocs}
\end{figure}
\end{myhideenv}

\begin{figure}[!htbp]
    \centering
    \begin{subfigure}[b]{1.0\linewidth}
        \centering
        \includegraphics[width=0.95\linewidth]{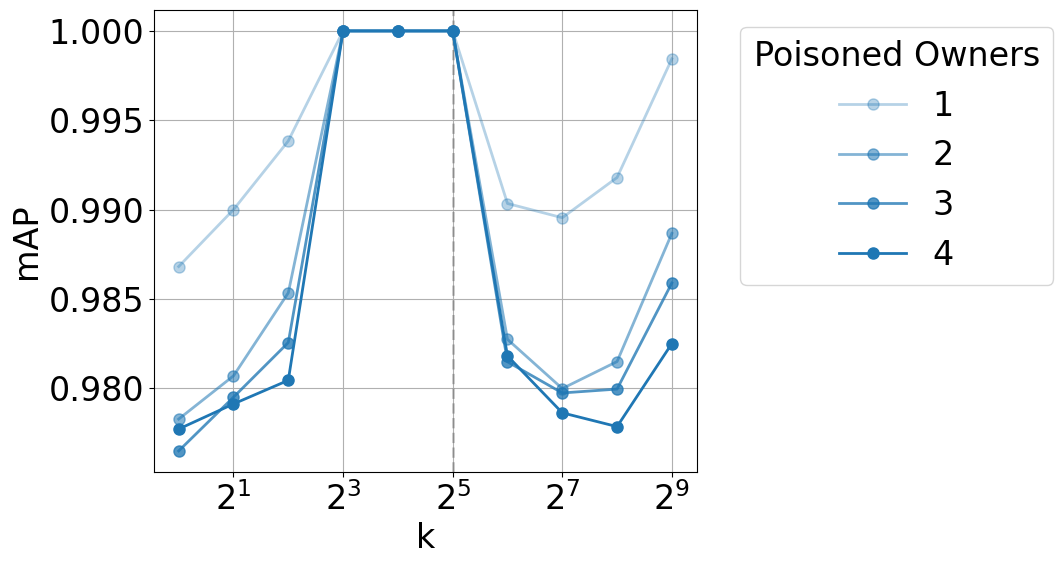}
        \caption{BadNets mAP vs $k$}
        \label{fig:ablate-k-map-badnets}
    \end{subfigure}
    \begin{subfigure}[b]{1.0\linewidth}
        \centering
        \includegraphics[width=0.95\linewidth]{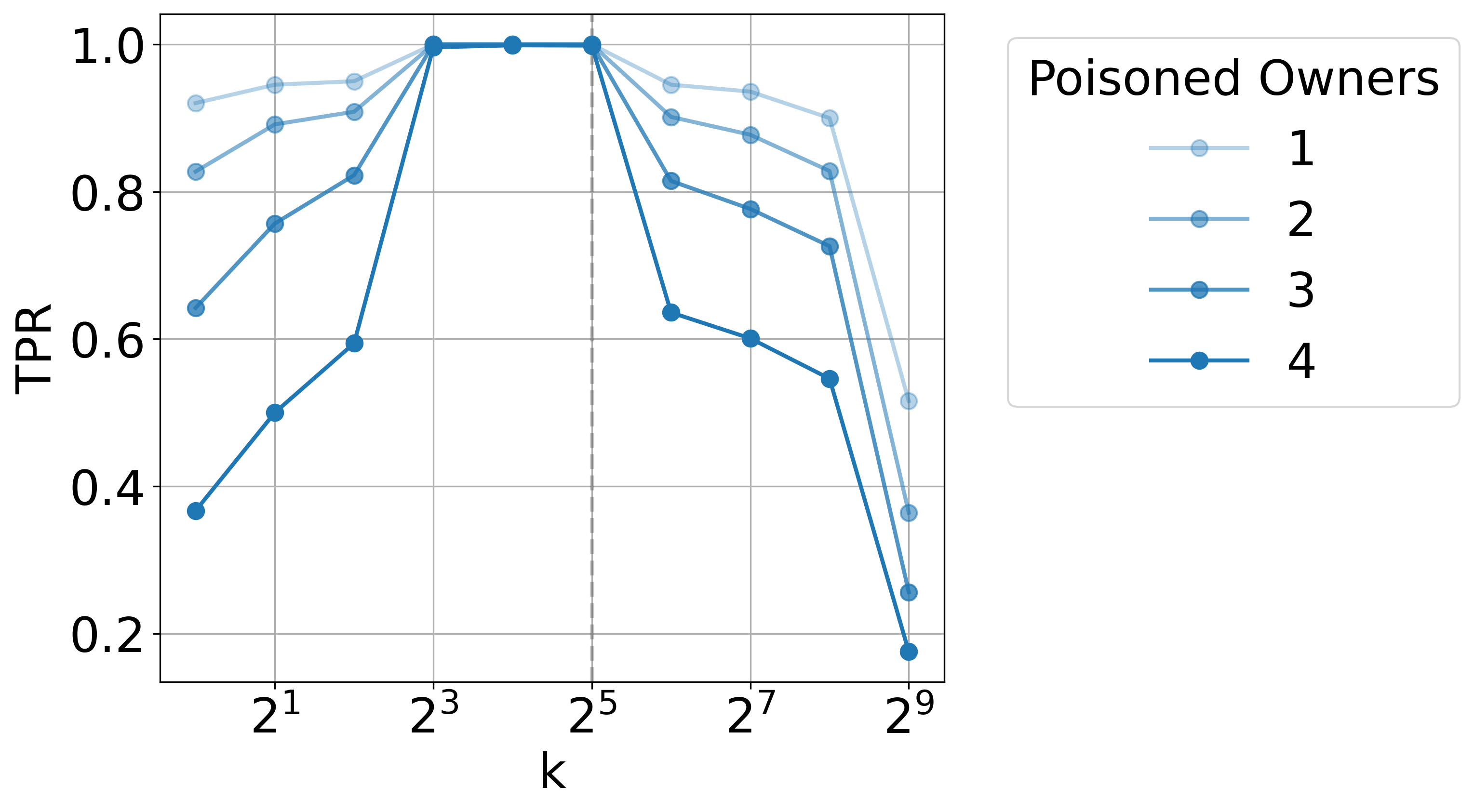}
        \caption{BadNets TPR vs $k$}
        \label{fig:ablate-k-tpr-badnets}
    \end{subfigure}
    \begin{subfigure}[b]{1.0\linewidth}
        \centering
        \includegraphics[width=0.95\linewidth]{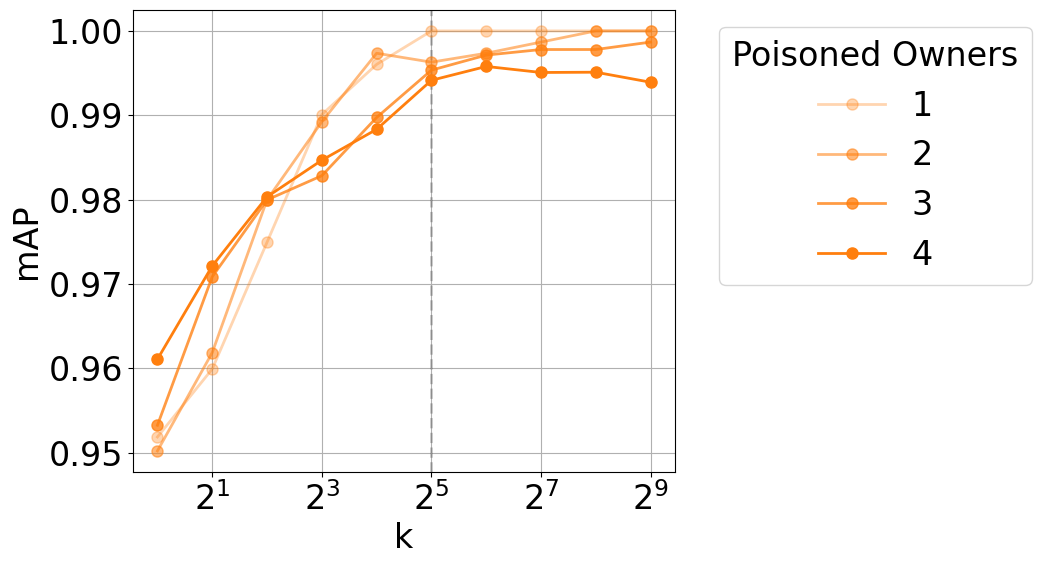}
        \caption{Subpop mAP vs $k$}
        \label{fig:ablate-k-map-subpop}
    \end{subfigure}
    \begin{subfigure}[b]{1.0\linewidth}
        \centering
        \includegraphics[width=0.95\linewidth]{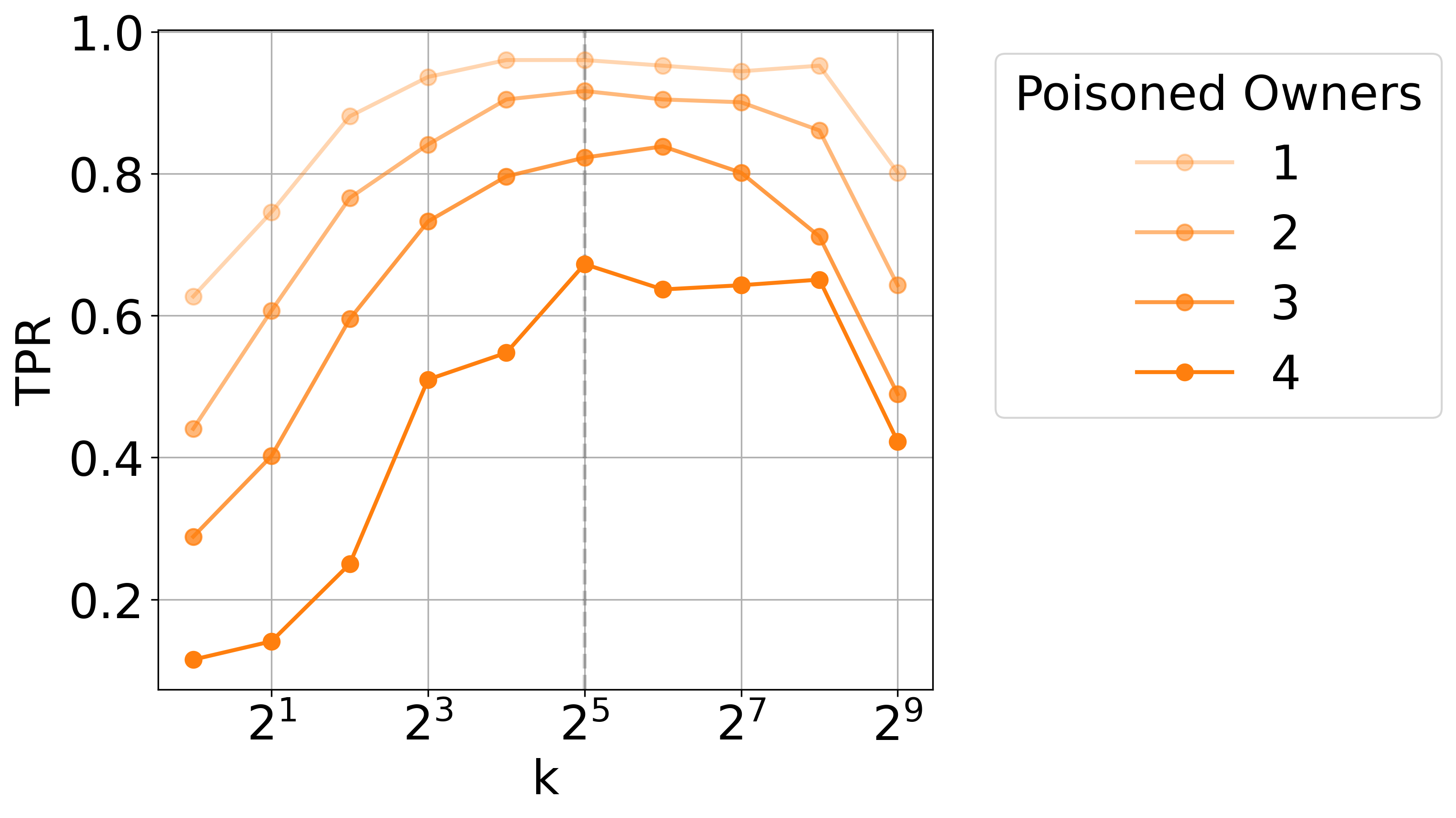}
        \caption{Subpop TPR vs $k$}
        \label{fig:ablate-k-tpr-subpop}
    \end{subfigure}
    \caption{Hyperparameter $k$ ablation. We plot the mAP score and TPR at 1\% FPR for various values of $k$ for BadNets and Subpopulation attacks against CIFAR-10.}
    \label{fig:hyperparam-k}
\end{figure}

\begin{figure}[!htbp]
    \centering
    \begin{subfigure}[b]{1.0\linewidth}
        \centering
        \includegraphics[width=1.0\linewidth]{figs/cifar-n_badnets-roc.pdf}
        \caption{Noisy BadNets}
        \label{fig:fashion-badnets-noise-roc}
    \end{subfigure}
    
    \begin{subfigure}[b]{1.0\linewidth}
        \centering
        \includegraphics[width=1.0\linewidth]{figs/cifar-s_badnets-roc.pdf}
        \caption{$\sigma$-BadNets}
        \label{fig:fashion-badnets-permute-roc}
    \end{subfigure}
    
    \begin{subfigure}[b]{1.0\linewidth}
        \centering
        \includegraphics[width=1.0\linewidth]{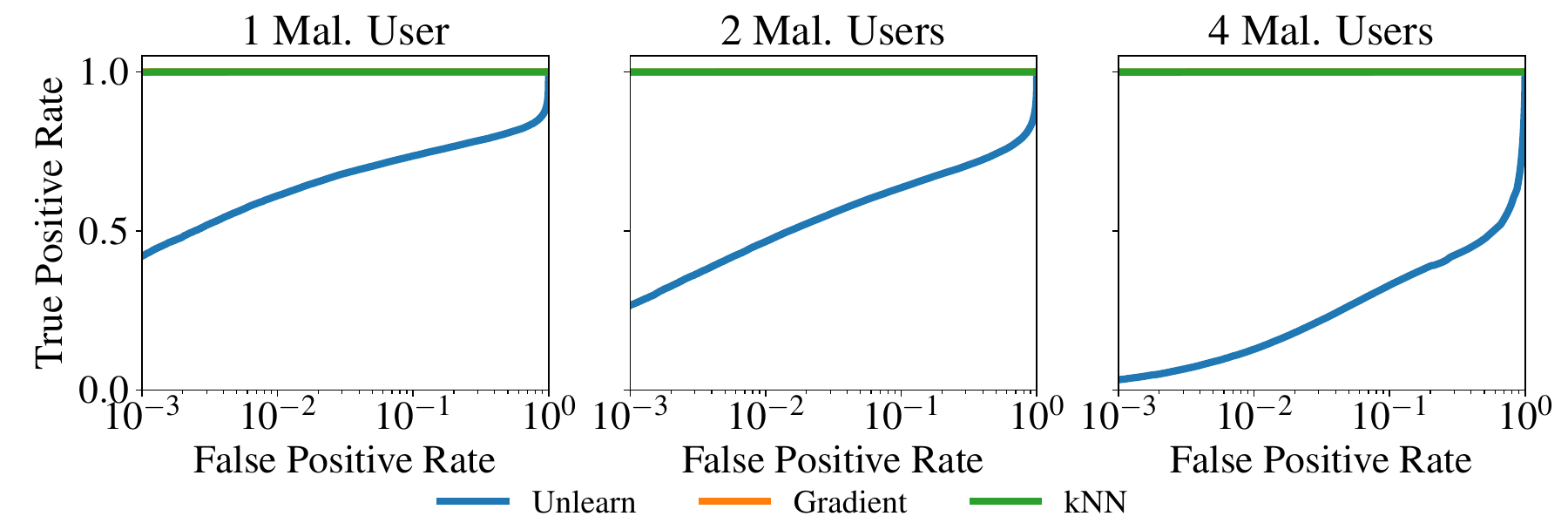}
        \caption{Noisy $\sigma$-BadNets}
        \label{fig:fashion-badnets-permute-noise-roc}
    \end{subfigure}
    
    \caption{Malicious user identification ROC curves for unlearning-aware attacks against Fashion / ConvNet.}
    \label{fig:fashion-extra-rocs}
\end{figure}

\begin{myhideenv}
\begin{figure}[!htbp]
    \centering
    \begin{subfigure}[b]{1.0\linewidth}
        \centering
        \includegraphics[width=1.0\linewidth]{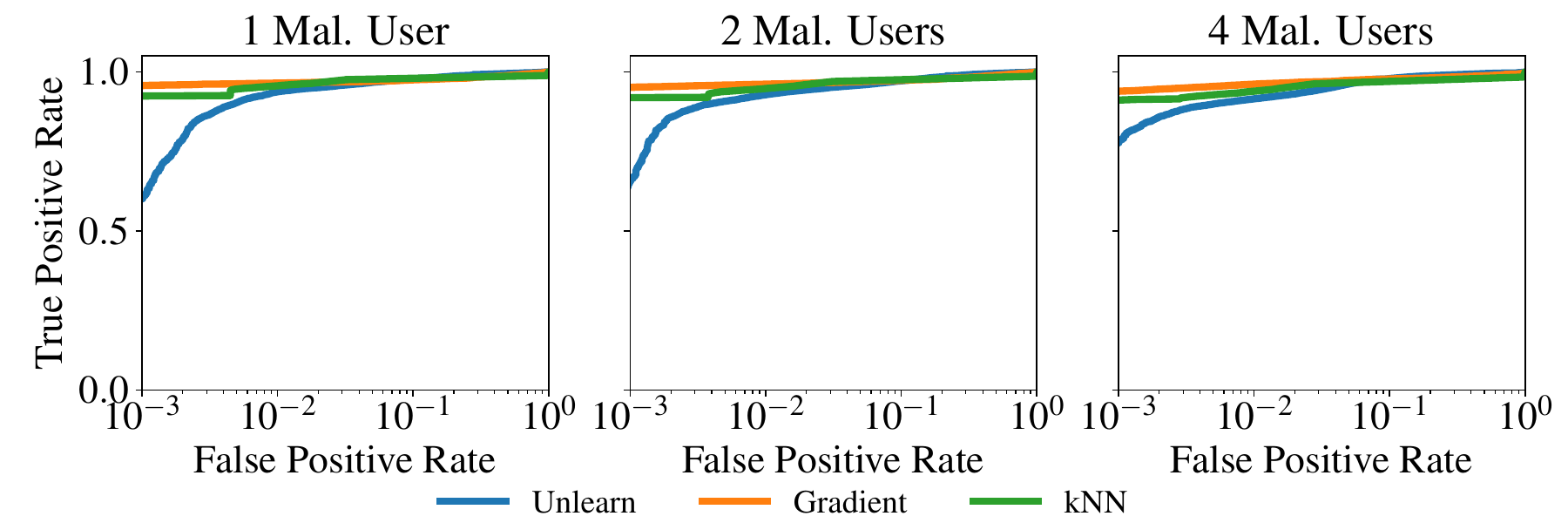}
        \caption{BadNets}
        \label{fig:sst2-badnets-roc}
    \end{subfigure}
    
    \begin{subfigure}[b]{1.0\linewidth}
        \centering
        \includegraphics[width=1.0\linewidth]{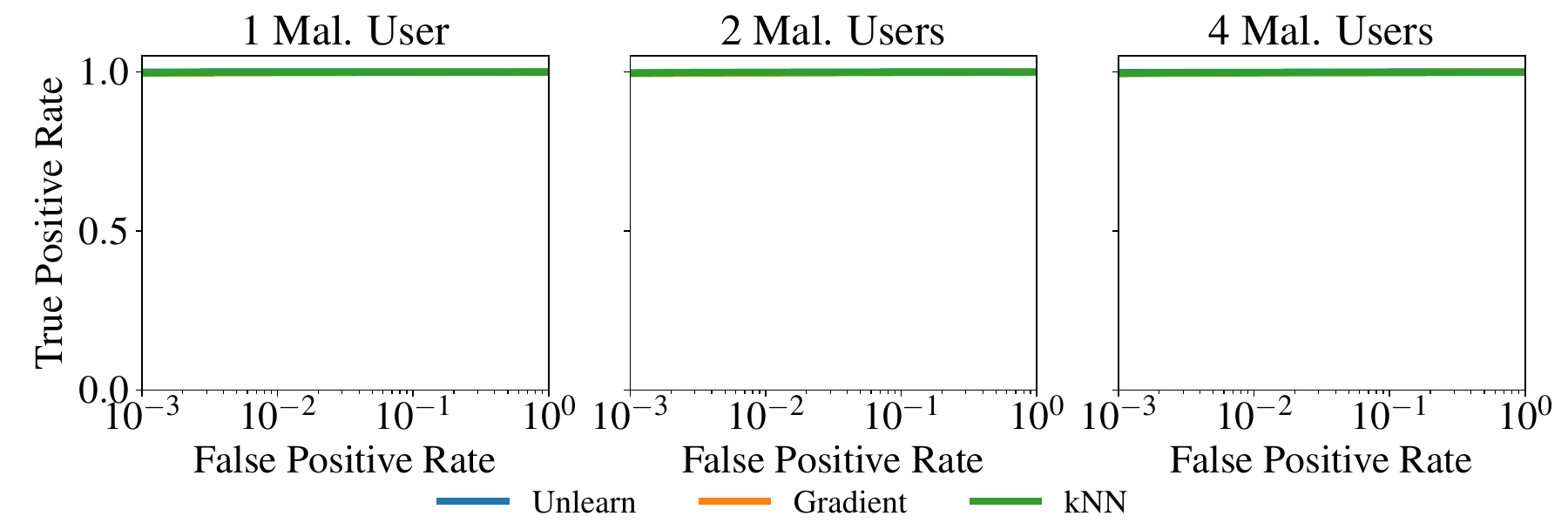}
        \caption{Style}
        \label{fig:sst2-style-roc}
    \end{subfigure}
    
    \begin{subfigure}[b]{1.0\linewidth}
        \centering
        \includegraphics[width=1.0\linewidth]{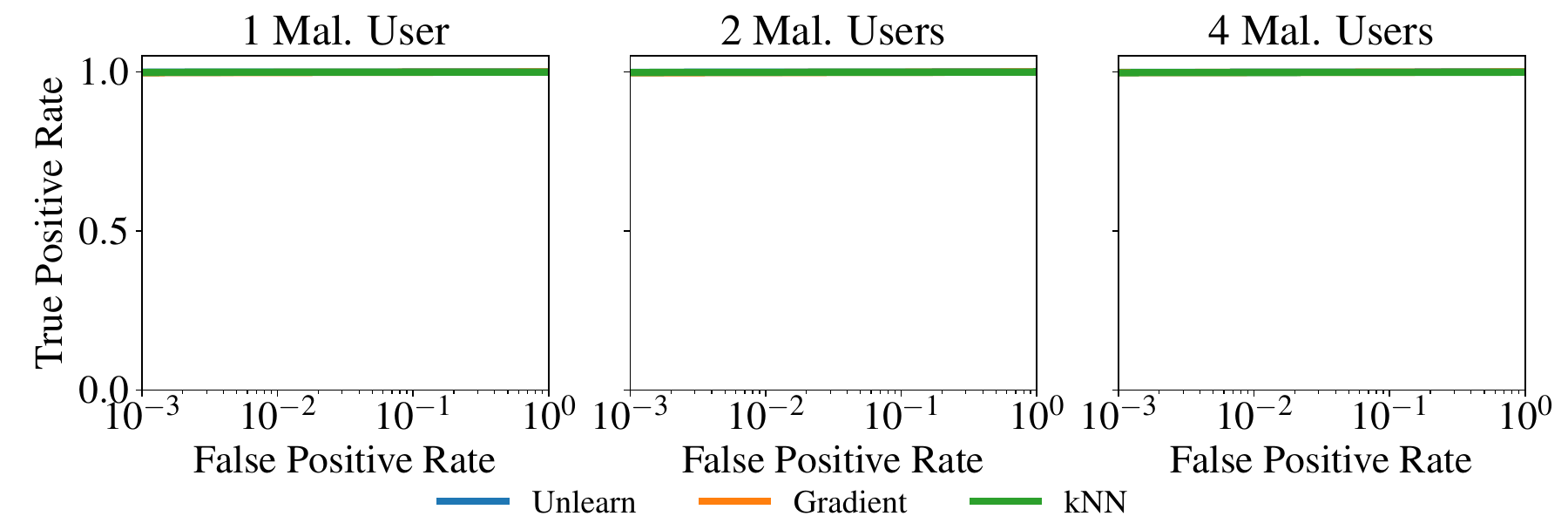}
        \caption{Syntax}
        \label{fig:sst2-syntax-roc}
    \end{subfigure}

    \begin{subfigure}[b]{1.0\linewidth}
        \centering
        \includegraphics[width=1.0\linewidth]{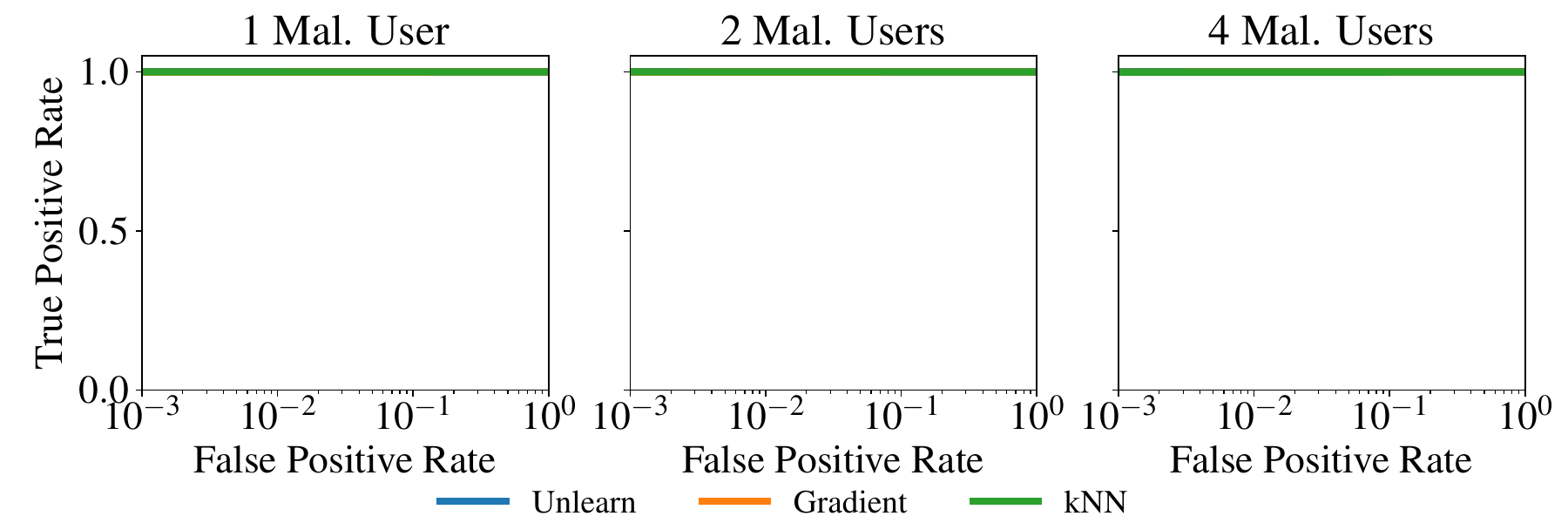}
        \caption{Sentence}
        \label{fig:sst2-sentence-roc}
    \end{subfigure}
    
    \caption{Malicious user identification ROC curves for attacks against SST-2 / $\roberta$.}
    \label{fig:sst2-rocs}
\end{figure}
\end{myhideenv}

\begin{table*}[!htbp]
\centering
\setlength{\tabcolsep}{3.1pt}
\caption{Results for attacks against CIFAR-10 on ResNet18 with 20 users.}
\label{table:cifar10-20u}
\begin{tabular}{@{}ccSSSSS!{\quad}SSSSS!{\quad}SSSSS@{}}
\toprule
\multicolumn{1}{c}{\multirow{2}{*}{\textbf{Attack}}}
& \multicolumn{1}{c}{\multirow{2}{*}{\shortstack[c]{\textbf{Mal.}\\\textbf{Users}}}}
& \multicolumn{5}{c}{\textbf{kNN (Baseline)}} & \multicolumn{5}{c}{\textbf{Unlearning (\S\ref{sec:unlearning})}}
& \multicolumn{5}{c}{\textbf{Gradient (\S\ref{sec:gradient})}} \\
\cmidrule(lr{1.2em}){3-7} \cmidrule(lr{1.2em}){8-12} \cmidrule(l){13-17}
\multicolumn{1}{c}{}
& \multicolumn{1}{c}{}
& mAP & mRR & {\scriptsize TPR (\%)} & {\scriptsize FPR (\%)} & AUC
& mAP & mRR & {\scriptsize TPR (\%)} & {\scriptsize FPR (\%)} & AUC
& mAP & mRR & {\scriptsize TPR (\%)} & {\scriptsize FPR (\%)} & AUC \\
\hline
\multirow{4}{*}{BadNets}
    & 2 & 0.9999 & 0.9999 & 99.674 & 0.007 & 1.000 & 0.9999 & 0.9999 & 99.980 & 0.881 & 1.000 & 1.0000 & 1.0000 & 99.963 & 0.381 & 1.000 \\
    & 4 & 0.9995 & 0.9999 & 98.836 & 0.005 & 1.000 & 0.9998 & 0.9998 & 99.968 & 0.355 & 1.000 & 0.9999 & 1.0000 & 99.922 & 0.097 & 1.000 \\
    & 6 & 0.9974 & 0.9999 & 91.434 & 0.003 & 0.998 & 0.9998 & 0.9999 & 99.948 & 0.045 & 1.000 & 0.9999 & 1.0000 & 99.837 & 0.010 & 1.000 \\
    & 8 & 0.9916 & 0.9999 & 77.383 & 0.001 & 0.992 & 0.9998 & 0.9999 & 99.879 & 0.004 & 1.000 & 0.9999 & 1.0000 & 99.374 & 0.000 & 1.000 \\
\hline
\multirow{4}{*}{Subpop}
    & 2 & 0.9931 & 0.9937 & 73.125 & 1.250 & 0.927 & 0.9215 & 0.9519 & 83.125 & 0.139 & 0.950 & 0.9892 & 0.9891 & 98.750 & 0.000 & 0.994 \\
    & 4 & 0.9760 & 0.9937 & 56.250 & 1.016 & 0.906 & 0.8735 & 0.9435 & 59.688 & 0.000 & 0.927 & 0.9915 & 0.9917 & 98.750 & 0.078 & 0.996 \\
    & 6 & 0.9398 & 0.9937 & 21.458 & 0.446 & 0.910 & 0.8590 & 0.9233 & 38.125 & 0.000 & 0.900 & 0.9942 & 1.0000 & 96.667 & 0.000 & 0.997 \\
    & 8 & 0.9262 & 0.9937 & 2.344 & 0.312 & 0.917 & 0.8385 & 0.9260 & 7.031 & 0.104 & 0.849 & 0.9953 & 1.0000 & 86.719 & 0.000 & 0.998 \\
\bottomrule
\end{tabular}
\end{table*}

\begin{table*}[!htbp]
\centering
\setlength{\tabcolsep}{3.1pt}
\caption{Results for standard attacks against Fashion on ConvNet.}
\label{table:fashion}
\begin{tabular}{@{}ccSSSSS!{\quad}SSSSS!{\quad}SSSSS@{}}
\toprule
\multicolumn{1}{c}{\multirow{2}{*}{\textbf{Attack}}}
& \multicolumn{1}{c}{\multirow{2}{*}{\shortstack[c]{\textbf{Mal.}\\\textbf{Users}}}}
& \multicolumn{5}{c}{\textbf{kNN (Baseline)}} & \multicolumn{5}{c}{\textbf{Unlearning (\S\ref{sec:unlearning})}}
& \multicolumn{5}{c}{\textbf{Gradient (\S\ref{sec:gradient})}} \\
\cmidrule(lr{1.2em}){3-7} \cmidrule(lr{1.2em}){8-12} \cmidrule(l){13-17}
\multicolumn{1}{c}{}
& \multicolumn{1}{c}{}
& mAP & mRR & {\scriptsize TPR (\%)} & {\scriptsize FPR (\%)} & AUC
& mAP & mRR & {\scriptsize TPR (\%)} & {\scriptsize FPR (\%)} & AUC
& mAP & mRR & {\scriptsize TPR (\%)} & {\scriptsize FPR (\%)} & AUC \\
\hline
\multirow{4}{*}{BadNets}
    & 1 & 0.9998 & 0.9998 & 99.863 & 0.000 & 1.000 & 1.0000 & 1.0000 & 100.000 & 2.691 & 1.000 & 0.9785 & 0.9785 & 96.296 & 0.599 & 0.986 \\
    & 2 & 0.9998 & 0.9999 & 99.850 & 0.000 & 1.000 & 1.0000 & 1.0000 & 99.987 & 2.533 & 1.000 & 0.9787 & 0.9841 & 96.008 & 0.266 & 0.985 \\
    & 3 & 0.9998 & 0.9999 & 99.663 & 0.000 & 1.000 & 1.0000 & 1.0000 & 99.987 & 1.818 & 1.000 & 0.9803 & 0.9853 & 95.440 & 0.132 & 0.986 \\
    & 4 & 0.9998 & 0.9999 & 97.410 & 0.000 & 1.000 & 1.0000 & 1.0000 & 99.978 & 0.566 & 1.000 & 0.9822 & 0.9859 & 94.205 & 0.097 & 0.987 \\
\hline
\multirow{4}{*}{LabelFlip}
    & 1 & 1.0000 & 1.0000 & 100.000 & 0.000 & 1.000 & 1.0000 & 1.0000 & 100.000 & 1.852 & 1.000 & 1.0000 & 1.0000 & 100.000 & 0.741 & 1.000 \\
    & 2 & 1.0000 & 1.0000 & 100.000 & 0.000 & 1.000 & 1.0000 & 1.0000 & 96.667 & 0.833 & 0.999 & 1.0000 & 1.0000 & 100.000 & 0.417 & 1.000 \\
    & 3 & 1.0000 & 1.0000 & 100.000 & 0.000 & 1.000 & 0.9956 & 1.0000 & 77.778 & 0.476 & 0.997 & 1.0000 & 1.0000 & 100.000 & 0.000 & 1.000 \\
    & 4 & 1.0000 & 1.0000 & 99.167 & 0.000 & 1.000 & 1.0000 & 1.0000 & 25.000 & 0.000 & 0.998 & 1.0000 & 1.0000 & 90.000 & 0.000 & 1.000 \\
\hline
\multirow{4}{*}{Subpop}
    & 1 & 0.9885 & 0.9885 & 25.667 & 0.185 & 0.941 & 0.9972 & 0.9972 & 99.333 & 0.259 & 0.998 & 0.9880 & 0.9880 & 98.667 & 0.185 & 0.987 \\
    & 2 & 0.9890 & 0.9914 & 18.333 & 0.042 & 0.957 & 0.9956 & 0.9972 & 96.000 & 0.125 & 0.997 & 0.9888 & 0.9883 & 98.667 & 0.042 & 0.988 \\
    & 3 & 0.9846 & 0.9917 & 6.333 & 0.048 & 0.972 & 0.9922 & 0.9940 & 72.333 & 0.000 & 0.994 & 0.9901 & 0.9892 & 98.667 & 0.048 & 0.990 \\
    & 4 & 0.9731 & 0.9942 & 1.750 & 0.000 & 0.955 & 0.9918 & 0.9961 & 32.833 & 0.000 & 0.993 & 0.9910 & 0.9894 & 98.333 & 0.056 & 0.991 \\
\bottomrule
\end{tabular}
\end{table*}

\begin{table*}[!htbp]
\centering
\setlength{\tabcolsep}{3.1pt}
\caption{Results for unlearning-aware attacks against Fashion on ConvNet.}
\label{table:fashion-extra}
\begin{tabular}{@{}ccSSSSS!{\quad}SSSSS!{\quad}SSSSS@{}}
\toprule
\multicolumn{1}{c}{\multirow{2}{*}{\textbf{Attack}}}
& \multicolumn{1}{c}{\multirow{2}{*}{\shortstack[c]{\textbf{Mal.}\\\textbf{Users}}}}
& \multicolumn{5}{c}{\textbf{kNN (Baseline)}} & \multicolumn{5}{c}{\textbf{Unlearning (\S\ref{sec:unlearning})}}
& \multicolumn{5}{c}{\textbf{Gradient (\S\ref{sec:gradient})}} \\
\cmidrule(lr{1.2em}){3-7} \cmidrule(lr{1.2em}){8-12} \cmidrule(l){13-17}
\multicolumn{1}{c}{}
& \multicolumn{1}{c}{}
& mAP & mRR & {\scriptsize TPR (\%)} & {\scriptsize FPR (\%)} & AUC
& mAP & mRR & {\scriptsize TPR (\%)} & {\scriptsize FPR (\%)} & AUC
& mAP & mRR & {\scriptsize TPR (\%)} & {\scriptsize FPR (\%)} & AUC \\
\hline
\multirow{4}{*}{\shortstack[c]{Noisy\\BadNets}}
    & 1 & 1.0000 & 1.0000 & 100.000 & 0.000 & 1.000 & 0.8999 & 0.8999 & 65.656 & 0.344 & 0.921 & 0.9999 & 0.9999 & 99.977 & 0.513 & 1.000 \\
    & 2 & 1.0000 & 1.0000 & 100.000 & 0.000 & 1.000 & 0.8653 & 0.8896 & 56.054 & 0.373 & 0.878 & 0.9999 & 0.9999 & 99.972 & 0.171 & 1.000 \\
    & 3 & 1.0000 & 1.0000 & 99.845 & 0.000 & 1.000 & 0.8660 & 0.8874 & 38.078 & 0.072 & 0.866 & 0.9999 & 0.9999 & 99.953 & 0.043 & 1.000 \\
    & 4 & 1.0000 & 1.0000 & 97.257 & 0.000 & 1.000 & 0.8792 & 0.8858 & 15.458 & 0.179 & 0.868 & 0.9999 & 0.9999 & 99.925 & 0.011 & 1.000 \\
\hline
\multirow{4}{*}{$\sigma$-BadNets}
    & 1 & 1.0000 & 1.0000 & 99.907 & 0.001 & 1.000 & 0.9480 & 0.9480 & 89.753 & 2.212 & 0.968 & 0.9998 & 0.9998 & 99.923 & 0.603 & 1.000 \\
    & 2 & 0.9999 & 1.0000 & 99.856 & 0.000 & 1.000 & 0.9234 & 0.9235 & 80.174 & 1.438 & 0.945 & 0.9997 & 0.9998 & 99.820 & 0.218 & 1.000 \\
    & 3 & 0.9999 & 1.0000 & 99.616 & 0.000 & 1.000 & 0.9007 & 0.8966 & 66.344 & 1.296 & 0.923 & 0.9997 & 0.9998 & 99.543 & 0.049 & 1.000 \\
    & 4 & 0.9999 & 1.0000 & 97.118 & 0.000 & 1.000 & 0.8882 & 0.8783 & 46.720 & 1.081 & 0.896 & 0.9997 & 0.9998 & 98.643 & 0.005 & 1.000 \\
\hline
\multirow{4}{*}{\shortstack[c]{Noisy\\$\sigma$-BadNets}}
    & 1 & 0.9999 & 0.9999 & 99.961 & 0.000 & 1.000 & 0.6429 & 0.6429 & 58.543 & 0.767 & 0.612 & 0.9972 & 0.9972 & 99.459 & 0.668 & 0.998 \\
    & 2 & 0.9999 & 0.9999 & 99.955 & 0.000 & 1.000 & 0.5360 & 0.5640 & 41.261 & 0.705 & 0.457 & 0.9971 & 0.9976 & 99.225 & 0.259 & 0.998 \\
    & 3 & 0.9999 & 0.9999 & 99.816 & 0.000 & 1.000 & 0.4798 & 0.4846 & 27.387 & 0.388 & 0.343 & 0.9971 & 0.9978 & 98.920 & 0.060 & 0.998 \\
    & 4 & 0.9999 & 0.9999 & 97.702 & 0.000 & 1.000 & 0.4721 & 0.4362 & 17.083 & 0.362 & 0.271 & 0.9972 & 0.9981 & 98.083 & 0.008 & 0.998 \\
\bottomrule
\end{tabular}
\end{table*}

\begin{table*}[!htbp]
\centering
\setlength{\tabcolsep}{3.1pt}
\caption{Results for attacks against SST-2 on $\roberta$.}
\label{table:sst2}
\begin{tabular}{@{}ccSSSSS!{\quad}SSSSS!{\quad}SSSSS@{}}
\toprule
\multicolumn{1}{c}{\multirow{2}{*}{\textbf{Attack}}}
& \multicolumn{1}{c}{\multirow{2}{*}{\shortstack[c]{\textbf{Mal.}\\\textbf{Users}}}}
& \multicolumn{5}{c}{\textbf{kNN (Baseline)}} & \multicolumn{5}{c}{\textbf{Unlearning (\S\ref{sec:unlearning})}}
& \multicolumn{5}{c}{\textbf{Gradient (\S\ref{sec:gradient})}} \\
\cmidrule(lr{1.2em}){3-7} \cmidrule(lr{1.2em}){8-12} \cmidrule(l){13-17}
\multicolumn{1}{c}{}
& \multicolumn{1}{c}{}
& mAP & mRR & {\scriptsize TPR (\%)} & {\scriptsize FPR (\%)} & AUC
& mAP & mRR & {\scriptsize TPR (\%)} & {\scriptsize FPR (\%)} & AUC
& mAP & mRR & {\scriptsize TPR (\%)} & {\scriptsize FPR (\%)} & AUC \\
\hline
\multirow{4}{*}{BadNets}
    & 1 & 0.9690 & 0.9690 & 92.381 & 0.031 & 0.982 & 0.9810 & 0.9810 & 96.261 & 3.945 & 0.990 & 0.9795 & 0.9795 & 96.238 & 0.458 & 0.987 \\
    & 2 & 0.9687 & 0.9741 & 91.858 & 0.023 & 0.978 & 0.9784 & 0.9819 & 94.734 & 2.497 & 0.989 & 0.9792 & 0.9823 & 95.352 & 0.165 & 0.987 \\
    & 3 & 0.9719 & 0.9795 & 91.469 & 0.028 & 0.975 & 0.9529 & 0.9501 & 93.014 & 2.934 & 0.983 & 0.9799 & 0.9848 & 94.152 & 0.062 & 0.987 \\
    & 4 & 0.9756 & 0.9844 & 89.347 & 0.029 & 0.973 & 0.9827 & 0.9877 & 90.647 & 0.681 & 0.989 & 0.9816 & 0.9870 & 92.970 & 0.041 & 0.987 \\
\hline
\multirow{4}{*}{Style}
    & 1 & 0.9985 & 0.9985 & 99.294 & 0.004 & 0.999 & 0.9992 & 0.9992 & 99.974 & 0.457 & 1.000 & 0.9991 & 0.9991 & 99.748 & 0.650 & 0.999 \\
    & 2 & 0.9982 & 0.9986 & 99.217 & 0.002 & 0.999 & 0.9990 & 0.9989 & 99.898 & 1.583 & 1.000 & 0.9989 & 0.9991 & 99.601 & 0.306 & 0.999 \\
    & 3 & 0.9982 & 0.9989 & 98.948 & 0.002 & 0.998 & 0.9992 & 0.9992 & 99.793 & 0.930 & 1.000 & 0.9988 & 0.9992 & 99.423 & 0.053 & 0.999 \\
    & 4 & 0.9984 & 0.9991 & 96.626 & 0.001 & 0.998 & 0.9994 & 0.9993 & 99.752 & 0.071 & 1.000 & 0.9989 & 0.9993 & 99.138 & 0.004 & 0.999 \\
\hline
\multirow{4}{*}{Syntax}
    & 1 & 0.9987 & 0.9987 & 99.513 & 0.003 & 0.999 & 0.9994 & 0.9994 & 99.930 & 0.948 & 1.000 & 0.9991 & 0.9991 & 99.810 & 0.375 & 0.999 \\
    & 2 & 0.9986 & 0.9990 & 99.457 & 0.002 & 0.999 & 0.9992 & 0.9992 & 99.890 & 0.343 & 1.000 & 0.9991 & 0.9992 & 99.747 & 0.140 & 0.999 \\
    & 3 & 0.9986 & 0.9990 & 99.190 & 0.002 & 0.999 & 0.9995 & 0.9995 & 99.879 & 0.624 & 1.000 & 0.9991 & 0.9993 & 99.519 & 0.024 & 0.999 \\
    & 4 & 0.9988 & 0.9992 & 96.907 & 0.001 & 0.999 & 0.9996 & 0.9997 & 99.853 & 0.107 & 1.000 & 0.9991 & 0.9993 & 98.788 & 0.006 & 0.999 \\
\hline
\multirow{4}{*}{Sentence}
    & 1 & 0.9999 & 0.9999 & 99.879 & 0.001 & 1.000 & 1.0000 & 1.0000 & 100.000 & 0.134 & 1.000 & 1.0000 & 1.0000 & 99.996 & 0.971 & 1.000 \\
    & 2 & 0.9999 & 0.9999 & 99.865 & 0.001 & 1.000 & 1.0000 & 1.0000 & 100.000 & 0.034 & 1.000 & 1.0000 & 1.0000 & 99.982 & 0.390 & 1.000 \\
    & 3 & 0.9998 & 0.9999 & 99.655 & 0.001 & 1.000 & 1.0000 & 1.0000 & 100.000 & 0.032 & 1.000 & 1.0000 & 1.0000 & 99.909 & 0.082 & 1.000 \\
    & 4 & 0.9998 & 0.9999 & 97.315 & 0.001 & 1.000 & 1.0000 & 1.0000 & 100.000 & 0.001 & 1.000 & 1.0000 & 1.0000 & 99.549 & 0.008 & 1.000 \\
\bottomrule
\end{tabular}
\end{table*}

\begin{table*}[!htbp]
\centering
\setlength{\tabcolsep}{3.1pt}
\caption{Results for attacks against Ember on EmberNN.}
\label{table:ember}
\begin{tabular}{@{}ccSSSSS!{\quad}SSSSS!{\quad}SSSSS@{}}
\toprule
\multicolumn{1}{c}{\multirow{2}{*}{\textbf{Attack}}}
& \multicolumn{1}{c}{\multirow{2}{*}{\shortstack[c]{\textbf{Mal.}\\\textbf{Users}}}}
& \multicolumn{5}{c}{\textbf{kNN (Baseline)}} & \multicolumn{5}{c}{\textbf{Unlearning (\S\ref{sec:unlearning})}}
& \multicolumn{5}{c}{\textbf{Gradient (\S\ref{sec:gradient})}} \\
\cmidrule(lr{1.2em}){3-7} \cmidrule(lr{1.2em}){8-12} \cmidrule(l){13-17}
\multicolumn{1}{c}{}
& \multicolumn{1}{c}{}
& mAP & mRR & {\scriptsize TPR (\%)} & {\scriptsize FPR (\%)} & AUC
& mAP & mRR & {\scriptsize TPR (\%)} & {\scriptsize FPR (\%)} & AUC
& mAP & mRR & {\scriptsize TPR (\%)} & {\scriptsize FPR (\%)} & AUC \\
\hline
\multirow{4}{*}{BadNets}
    & 1 & 0.9865 & 0.9865 & 87.508 & 0.057 & 0.989 & 0.9421 & 0.9421 & 82.766 & 2.475 & 0.964 & 0.9729 & 0.9729 & 93.614 & 0.463 & 0.976 \\
    & 2 & 0.9822 & 0.9876 & 85.380 & 0.035 & 0.984 & 0.9037 & 0.9292 & 52.846 & 1.265 & 0.940 & 0.9756 & 0.9792 & 89.460 & 0.182 & 0.980 \\
    & 3 & 0.9801 & 0.9885 & 82.796 & 0.025 & 0.980 & 0.8499 & 0.8830 & 16.289 & 0.850 & 0.896 & 0.9741 & 0.9816 & 84.190 & 0.064 & 0.980 \\
    & 4 & 0.9808 & 0.9900 & 78.171 & 0.021 & 0.976 & 0.8426 & 0.9087 & 5.451 & 0.759 & 0.867 & 0.9750 & 0.9827 & 77.088 & 0.040 & 0.980 \\
\hline
\multirow{4}{*}{Explanation}
    & 1 & 1.0000 & 1.0000 & 99.998 & 0.000 & 1.000 & 1.0000 & 1.0000 & 100.000 & 1.211 & 0.997 & 1.0000 & 1.0000 & 100.000 & 0.068 & 1.000 \\
    & 2 & 1.0000 & 1.0000 & 99.998 & 0.000 & 1.000 & 1.0000 & 1.0000 & 98.936 & 1.188 & 0.997 & 1.0000 & 1.0000 & 100.000 & 0.000 & 1.000 \\
    & 3 & 1.0000 & 1.0000 & 99.845 & 0.000 & 1.000 & 1.0000 & 1.0000 & 85.619 & 0.792 & 0.997 & 1.0000 & 1.0000 & 100.000 & 0.000 & 1.000 \\
    & 4 & 1.0000 & 1.0000 & 97.626 & 0.000 & 1.000 & 0.9780 & 0.9789 & 72.189 & 1.821 & 0.986 & 1.0000 & 1.0000 & 100.000 & 0.000 & 1.000 \\
\bottomrule
\end{tabular}
\end{table*}

\begin{table*}[!htbp]
\centering
\setlength{\tabcolsep}{3.1pt}
\caption{Results for clean-label attacks against CIFAR-10 on ResNet18.}
\label{table:cifar10-clean}
\begin{tabular}{@{}ccSSSSS!{\quad}SSSSS!{\quad}SSSSS@{}}
\toprule
\multicolumn{1}{c}{\multirow{2}{*}{\textbf{Attack}}}
& \multicolumn{1}{c}{\multirow{2}{*}{\shortstack[c]{\textbf{Mal.}\\\textbf{Users}}}}
& \multicolumn{5}{c}{\textbf{kNN (Baseline)}} & \multicolumn{5}{c}{\textbf{Unlearning (\S\ref{sec:unlearning})}}
& \multicolumn{5}{c}{\textbf{Gradient (\S\ref{sec:gradient})}} \\
\cmidrule(lr{1.2em}){3-7} \cmidrule(lr{1.2em}){8-12} \cmidrule(l){13-17}
\multicolumn{1}{c}{}
& \multicolumn{1}{c}{}
& mAP & mRR & {\scriptsize TPR (\%)} & {\scriptsize FPR (\%)} & AUC
& mAP & mRR & {\scriptsize TPR (\%)} & {\scriptsize FPR (\%)} & AUC
& mAP & mRR & {\scriptsize TPR (\%)} & {\scriptsize FPR (\%)} & AUC \\
\hline
\multirow{4}{*}{\shortstack[c]{Witches'\\Brew}}
    & 1 & 0.9898 & 0.9898 & 14.286 & 0.454 & 0.958 & 0.9847 & 0.9847 & 85.714 & 0.680 & 0.987 & 1.0000 & 1.0000 & 89.796 & 0.680 & 0.996 \\
    & 2 & 0.9361 & 0.9796 & 8.163 & 0.765 & 0.926 & 0.9001 & 0.9152 & 59.184 & 0.255 & 0.947 & 0.9728 & 1.0000 & 40.816 & 0.255 & 0.981 \\
    & 3 & 0.8986 & 0.9898 & 6.803 & 0.583 & 0.859 & 0.8712 & 0.9190 & 21.088 & 0.875 & 0.911 & 0.9495 & 1.0000 & 10.884 & 0.000 & 0.954 \\
    & 4 & 0.8845 & 0.9694 & 3.061 & 0.340 & 0.843 & 0.8619 & 0.9422 & 5.612 & 0.340 & 0.861 & 0.9257 & 0.9898 & 5.612 & 0.000 & 0.921 \\
\hline
\multirow{4}{*}{\shortstack[c]{Sleeper\\Agent}}
    & 1 & 0.7578 & 0.7578 & 25.691 & 0.620 & 0.827 & 0.9815 & 0.9815 & 93.552 & 0.489 & 0.988 & 0.9110 & 0.9110 & 78.864 & 0.677 & 0.938 \\
    & 2 & 0.7210 & 0.7876 & 22.492 & 0.640 & 0.760 & 0.9732 & 0.9781 & 86.745 & 0.249 & 0.983 & 0.9107 & 0.9245 & 70.548 & 0.345 & 0.937 \\
    & 3 & 0.7420 & 0.8140 & 20.710 & 0.687 & 0.747 & 0.9717 & 0.9873 & 74.701 & 0.088 & 0.979 & 0.9216 & 0.9415 & 58.888 & 0.146 & 0.944 \\
    & 4 & 0.7695 & 0.8567 & 19.089 & 0.546 & 0.723 & 0.9669 & 0.9839 & 55.566 & 0.051 & 0.972 & 0.9343 & 0.9574 & 36.592 & 0.136 & 0.949 \\
\bottomrule
\end{tabular}
\end{table*}

\begin{myhideenv}
\section{Additional Details on Secure \sysname}
\label{appx:secure}
In this appendix, we provide formal definitions of the ideal functionalities, the security proofs for our protocols, and discuss further optimizations.

\subsection{Definitions}
\label{appx:definitions}

We provide a minimal definition of a \gls{lsss} used in our extension of Arc's auditing phase with preprocessing.

\begin{definition}[Secret-Sharing Scheme]
    \label{def:lsss}
   A \Acrfull{lsss} enables one party, the dealer, to share a secret $s$ among $n$ parties such that any $t$ parties can reconstruct the secret $s$.
   We define a \gls{lsss} by a tuple of protocols $(\SSShare, \SSReconstruct)$ defined as follows:
   \begin{algos}
       \item $\SSShare(s) \rightarrow \{\secs{s}_1, \secs{s}_2, \ldots, \secs{s}_n\}$: The dealer sends a share of its secret $\secs{s}$ to each of the $n$ parties.
       \item $\SSReconstruct(\{\secs{s}_{1}, \secs{s}_{2}, \ldots, \secs{s}_{t}\}) \rightarrow s$: A protocol in which at least $t$ parties input their share $\secs{s}$ to reconstruct $s$.
   \end{algos}

\end{definition}
A valid secret-sharing scheme satisfies correctness, meaning that any $t$ parties are able to recover the original secret $s$, and privacy, meaning that any $t-1$ parties learn no information about $s$.
In addition, the linearity property implies that the shares can be added together to obtain a share of the sum of the secrets, i.e., $\secs{s_1} + \secs{s_2} = \SSShare(s_1 + s_2)$.

\myparagraph{Ideal Functionalities}\label{sec:appx:mpc:idealfun}
We describe the ideal functionalities \idealMul, \idealTrunc, \idealDotprod, \idealDotprodNoTrunc, \idealSort, \idealGradient, \idealSGD, and \idealRecip,  referenced in our protocols.

\noindent
{\setlength{\fboxsep}{1em}%
    \refstepcounter{func}
    \begin{functionalitybox*}{\idealMul}{
        The functionality receives inputs $\secs{a}$ and $\secs{b}$.
    }
        Compute the following \label{ideal:mul}
    \begin{enumerate}
        \item Reconstruct $a$ and $b$ to compute $ab^\prime \leftarrow a \cdot b$.
        \item Truncate the lower $f$ bits of $ab^\prime$ to get $ab$.
        \item Distribute shares \secs{ab} to the parties.
    \end{enumerate}
    \end{functionalitybox*}
    \refstepcounter{func}
    \begin{functionalitybox*}{\idealTrunc}{The functionality receives input $\secs{x}$.}
        Compute the following
        \begin{enumerate}
            \item Reconstruct $x$ and truncate the lower $f$ bits to get $x^\prime$.
            \item Distribute shares $\secs{x^\prime}$ to the parties.
        \end{enumerate}
    \end{functionalitybox*}
    \refstepcounter{func}
    \begin{functionalitybox*}{\idealDotprod}{
        The functionality receives input vectors $\secs{a}$ and $\secs{b}$.
    }
        Compute the following \label{ideal:dotprod}
    \begin{enumerate}
        \item Reconstruct $a$ and $b$ to compute the dot product $c^\prime \leftarrow a \cdotp b$.
        \item Truncate the lower $f$ bits of $c^\prime$ to get $c$.
        \item Distribute shares \secs{c} to the parties.
    \end{enumerate}
    \end{functionalitybox*}
    \refstepcounter{func}
    \begin{functionalitybox*}{\idealDotprodNoTrunc}{
        The functionality receives input vectors $\secs{a}$ and $\secs{b}$.
    }
        Compute the following \label{ideal:dotprodnotrunc}
    \begin{enumerate}
        \item Reconstruct $a$ and $b$ to compute the dot product $c \leftarrow a \cdotp b$.
        \item Distribute shares \secs{c} to the parties.
    \end{enumerate}
    \end{functionalitybox*}
    \refstepcounter{func}
    \begin{functionalitybox*}{\idealSort}{The functionality receives input matrix $\secs{\mat{M}}$.}
        Compute the following
        \begin{enumerate}
            \item Reconstruct $\mat{M}$ and sort \mat{M} according to the values in the first column $\mat{M}^{(1)}$ to get $\mat{M}^\prime$.
            \item Distribute shares $\secs{\mat{M}^\prime}$ to the parties.
        \end{enumerate}
    \end{functionalitybox*}
    \refstepcounter{func}
    \label{fun:gradient}
    \begin{functionalitybox*}{\idealGradient}{The functionality receives input samples $\secs{D}$,
        model parameters $\secs{\theta}$, and projection matrix $\secs{G}$.}
        Compute the following
        \begin{enumerate}
            \item Reconstruct $D$, $\theta$, and $G$, and compute the (averaged) gradient $\nabla \theta$ on $D$ for model parameters $\theta$.
            \item Compute the lower-dimensional projection $g \leftarrow G \cdot \nabla \theta$.
            \item Distribute shares $\secs{g}$ to the parties.
        \end{enumerate}
    \end{functionalitybox*}
    \refstepcounter{func}
    \begin{functionalitybox*}{\idealRecip}{The functionality receives input $\secs{x}$.}
        Compute the following.
        \begin{enumerate}
            \item Reconstruct $x$ and compute $x^\prime \leftarrow 1 / {\sqrt{x}}$.
            \item Distribute shares $\secs{x^\prime}$ to the parties.
        \end{enumerate}
    \end{functionalitybox*}
    \refstepcounter{func}
    \begin{functionalitybox*}{\idealSGD}{The functionality receives input $\mathcal{L}, \secs{\smodel}, \secs{\spartyinput_{i}}, \secs{\spartyinput_{-i}}, E$.}
        Compute the following. \label{ideal:sgd}
        \begin{enumerate}
            \item Reconstruct $\smodel, \spartyinput_{i}, \spartyinput_{-i}$ and compute the updated model $\smodel^\prime$
            by applying stochastic gradient descent updates to the model $\smodel$
            that minimize the objective function $\mathcal{L}$ for $E$ epochs.
            \item Distribute shares $\secs{\smodel^\prime}$ to the parties.
        \end{enumerate}
    \end{functionalitybox*}

    \refstepcounter{func}
    \begin{functionalitybox*}{\idealLoss \\
    The functionality is parameterized by a loss function $\ell$}{The functionality receives input $\secs{\spredictionY}, \secs{\spredictionY^\prime}$.}
        Compute the following. \label{ideal:loss}
        \begin{enumerate}
            \item Reconstruct $\spredictionY, \spredictionY^\prime$ and compute the loss $l \leftarrow \ell(\spredictionY, \spredictionY^\prime)$.
            \item Distribute shares $\secs{l}$ to the parties.
        \end{enumerate}
    \end{functionalitybox*}

    }

\end{myhideenv}

\begin{myhideenv}

    \myparagraph{Algorithm Functionalities}
    We specify the ideal functionalities for \Cref{alg:traceback,alg:unlearning}. Each algorithm is split into a preprocessing functionality and an online functionality. In the outputs, we use the bracket notation $\secs{\cdot}$ to indicate that the functionality distributes secret shares of the enclosed value to the parties.
            \newcommand{\RETURN}{\State \textbf{return}\xspace}

\refstepcounter{func}
\begin{functionalitybox*}{\idealGradPreprocessing
  \label{functionality:grad:preprocessing}
}{
  The functionality receives inputs $\secpartyinputs, \secs{R}$,
  where $R = \{(\theta_{t-1}, \eta_t, G_t)\}_{t \in \cT}$.%
}
  Reconstruct $\spartyinputs$ from $\secpartyinputs$, 
  and $R$ from $\secs{R}$, then compute:%
  \label{ideal:grad:preprocessing}

  \begin{algorithmic}[1]
    \State $\Dtr \gets \textsc{Concat}(\spartyinput_{1}, \ldots, \spartyinput_{m})$
    \For{$t \in \cT$}
      \For{$i \gets 1$ to $\abs{\Dtr}$}
        \State $(x_i, y_i) \gets \Dtr[i]$
        \State $g_t^{(i)} \gets G_t \, \nabla_{\smodelW}\, \ell(\theta_{t-1}; x_i, y_i)$
      \EndFor
    \EndFor
    \RETURN $\{\secs{g_t^{(i)}}\}_{t\in\cT,i\in\abs{\Dtr}}$
  \end{algorithmic}
\end{functionalitybox*}

\refstepcounter{func}
\begin{functionalitybox*}{\idealGradOnline\\
  Parameterized by a loss function $\ell$.
  \label{functionality:grad:online}
}{
  The functionality receives inputs
$\secs{\spredictionX}, \secs{\spredictionY}$, $\secs{g_t}=\{\secs{g_t}\}_{t\in\cT}$, $\secs{R} =\{(\secs{\theta_{t-1}},\eta_t,\secs{G_t})\}_{t\in\cT}$,
  and score parameter $k$.%
}
  Reconstruct $\spredictionX$ from $\secs{\spredictionX}$, $\spredictionY$ from $\secs{\spredictionY}$,
  $g_t$ from $\secs{g_t}$
  and $R$ from $\secs{R}$.
  Send $\eta_t$ to \sadversary.
  Then compute:%
  \label{ideal:grad:online}

  \begin{algorithmic}[1]
    \State $\Dtr \gets \textsc{Concat}(\spartyinput_{1}, \ldots, \spartyinput_{m})$
    \For{$t \in \cT$}
      \State $\widehat{g_t} \gets G_t \nabla_{\smodelW}\, \ell(\theta_{t-1}; \spredictionX)$
    \EndFor
    \For{$i \gets 1,\ldots,m$}
      \State $I_i \gets \textsc{Indices}(i)$
              \State $S_i \gets \left\{\sum_{t \in \cT} \eta_t \frac{\iprod{g_t^{(\mathrm{idx})}}{\widehat{g_t}}}{\norm*{g_t^{(\mathrm{idx})}}_2 \norm*{\widehat{g_t}}_2} : \text{idx} \in I_i \right\}$
      \State\lstep{grad:online:sim} $s_i \gets \frac{1}{k} \sum_{j=1}^{k} \topk{k}(S_i)_j$
    \EndFor
    \State $\secs{O_i}_{i\in \abs{m}} \gets \textsc{RankUsers}(s_1,\ldots,s_m)$
    \RETURN $\secs{O_1},\ldots,\secs{O_m}$
  \end{algorithmic}
\end{functionalitybox*}

  \refstepcounter{func}
\begin{functionalitybox*}{\idealCamelPreprocessing\\
  The functionality is parameterized by an unlearning function \Unl}{
  The functionality receives inputs $\secpartyinputs, \secs{\smodelW}, E$.%
  \label{functionality:camel:preprocessing}
}
  Reconstruct $\spartyinputs$ from $\secpartyinputs$ and $\smodelW$ from $\secs{\smodelW}$.
  Send $E$ to \sadversary and compute the following.%
  \label{ideal:camel:preprocessing}

  \begin{algorithmic}[1]
    \For{$i \gets 1,\ldots,N$}
      \State $\spartyinput_{-i} \gets \displaystyle\bigcup_{j\neq i} \spartyinput_{j}$
      \State $\smodelW_{-i} \gets \Unl\!\bigl(\smodelW,\ \spartyinput_{i},\ \spartyinput_{-i},\ E\bigr)$
    \EndFor
    \RETURN $\{\secs{\smodelW_{-1}},\ldots,\secs{\smodelW_{-N}}\}$
  \end{algorithmic}
\end{functionalitybox*}

\refstepcounter{func}
\begin{functionalitybox*}{\idealCamelOnline\\
  Parameterized by loss function $\ell$.%
  \label{functionality:camel:online}
}{
  The functionality receives inputs
  $\secpartyinputs$, $\secs{\smodel}, \secs{\spredictionX}, \secs{\spredictionY}$, $\{\secs{\smodelW_{-1}}, \ldots, \secs{\smodelW_{-N}}\}$.
}
  Reconstruct $\spartyinputs$ from $\secpartyinputs$, $\smodelW$ from $\secs{\smodelW}$,
  \spredictionX from \secs{\spredictionX}, \spredictionY from \secs{\spredictionY},
  and $\smodelW_{-1},\ldots,\smodelW_{-N}$ from
  $\secs{\smodelW_{-1}},\ldots,\secs{\smodelW_{-N}}$, then compute:%
  \label{ideal:camel:online}

  \begin{algorithmic}[1]
    \For{$i \gets 1,\ldots,m$}
      \State\lstep{camel:online:inf} $s_i \gets \ell\!\bigl(\smodelW_{-i}(\spredictionX),\ \spredictionY\bigr)$
    \EndFor
    \State $\secs{O_1},\ldots,\secs{O_m} \gets \textsc{RankUsers}(s_1,\ldots,s_m)$
    \RETURN $\secs{O_1},\ldots,\secs{O_m}$
  \end{algorithmic}
\end{functionalitybox*}

\subsection{Security Proofs}
\label{appx:proofs}
In the following, we provide a proof that $\protocolCamelPreprocessing$ securely instantiates $\idealCamelPreprocessing$,
and that $\protocolCamelOnline$ securely instantiates $\idealCamelOnline$.
Our proof follows the real/ideal-world paradigm~\cite{Canetti2000-plainmodel}, which considers the following two worlds:
\begin{algos}
    \item \textbf{In the real world,} the parties run the actual protocol \sprotocol.
    The adversary \sadversary can statically, actively, corrupt a subset of parties before the start of the protocol.

    \item \textbf{In the ideal world,} the honest parties send their inputs to the ideal functionality \ideal,
    which executes the behavior of a secure version of \sysname's protocols.
    The ideal world defines the ideal behavior of the functionality that the protocol aims to emulate.
\end{algos}
A real-world protocol is secure if it manages to instantiate an ideal functionality in the ideal world.
To show that a protocol is secure, we must show that the adversary cannot distinguish between the real and the ideal world with high probability.
We can do this by defining a non-uniform probabilistic polynomial-time simulator $\ssim$ that interacts with the adversary \sadversary and the ideal functionality $\ideal$ in the ideal world
in such a way that \sadversary's view is indistinguishable when interacting with the protocol in the real world.

\myparagraph{Sequential Composition}
We model protocols for MPC operations such as multiplication and truncation in our protocol as ideal functionalities \ideal~(c.f.~\Cref{sec:appx:mpc:idealfun}).
According to the sequential composition theorem, if a protocol securely computes a functionality in the $\ideal_\textsf{G}$-hybrid model for some functionality $\ideal_\textsf{G}$,
then it remains secure when composed with a protocol that securely computes $\ideal_\textsf{G}$~\cite{Canetti2000-plainmodel}.
We use this model to abstract the dependencies of our protocols.
In the security proof, the simulator $\ssim$ internally simulates these ideal functionalities towards the adversary \sadversary.
As a result, the simulator can directly receive the inputs of the adversary to these functionalities.

We are now ready to prove the security of our protocols \protocolCamelPreprocessing, \protocolCamelOnline, \protocolOursPreprocessing, and \protocolOursOnline.
The proofs proceed largely in a similar fashion,
by defining a simulator $\ssim$ through a series of subsequent modifications to the real
    execution, so that the views of $\sadversary$ in any two subsequent executions are computationally indistinguishable.
    Without loss of generality, we assume that if the simulator receives inconsistent values from some of the parties that should be the same according to the real protocol, the simulator aborts.

\begin{theorem}[Security of \protocolOursPreprocessing]
\label{thm:gradient:preprocessing}
Given a set of $\sNumParties$ parties,
an adversary \sadversary who controls a set $\sPartyCorrupted$ of at most $\sNumParties - 1$ corrupted parties.
There exists a \PPT simulator $\ssim$ in the $(\idealGradient)$-hybrid model such that the distributions:
\begin{equation*}
    \begin{gathered}
        \{ \texttt{Ideal}_{\protocolOursPreprocessing, \ssim(\advAux), \sPartyCorrupted}
        \bigl(\secspartyinputsCondensed, \secs{\srecord}, \lambda\bigr)
        \\ \}_{\secspartyinputsCondensed, \secs{\srecord}, \advAux, \lambda}
        \\
        \approx
        \\
        \{ \real_{\idealGradPreprocessing, \sadversary(\advAux), \sPartyCorrupted}
        \bigl(\secspartyinputsCondensed, \secs{\srecord}, \lambda\bigr)
        \\ \}_{\secspartyinputsCondensed, \secs{\srecord}, \advAux, \lambda}
    \end{gathered}
\end{equation*}
\end{theorem}

\begin{proof}
    We define a simulator $\ssim$ through a series of subsequent modifications to the real execution.

    \begin{hybrid}
        \item The view of $\sadversary$ in this hybrid is distributed exactly as the view of $\sadversary$ in $\real$.
        \item In this hybrid, the real execution is emulated by a simulator that knows the real inputs of the parties $\secs{\spartyinput_i},\secs{R}$
        and runs a full execution of the protocol with $\sadversary$, which includes emulating the ideal interaction with \idealGradient.
        The view of the adversary in this hybrid is the same as the previous one.

        \item In this hybrid, we replace the secret-shared inputs to the ideal functionality on behalf of the parties controlled by \sadversary
        with inputs extracted from \sadversary.
        Through internally simulating \idealGradient towards \sadversary using \ssimGradient, the simulator directly receives the malicious secret shares of the inputs and the training record $\secs{\spartyinput_1}_j,\ldots,\secs{\spartyinput_N}_j,\secs{R}_j$ for $j \in \Cset$,
        which it forwards to the ideal functionality.
        The view of the adversary in this hybrid is the same as the previous one because of sequential composition.

        \item In this hybrid, we replace the real inputs used by \ssim with dummy inputs.
        The simulator generates randomly sampled secret shares for the honest parties' inputs $\secs{\spartyinput_1}_j,\ldots,\secs{\spartyinput_N}_j,\secs{R}_j \sample \sring$ for $j \notin \Cset$.
        The view of the adversary in this hybrid is indistinguishable from the previous one, due to the privacy property of the secret sharing scheme.
        Note that in Step~\ref{grad:protocol:gradient}, the simulator still uses the real outputs of \idealGradient in its simulation towards \sadversary.

        \item In this hybrid, we replace the real output to the \sadversary with the output of the ideal functionality.
        The simulator $\ssim$ receives the output of the ideal functionality 
        $\{\secs{g_t^\prime}\}_{t\in\cT}$
        and forwards it to the parties controlled the adversary in Step~\ref{grad:protocol:gradient}.
        The ideal output, $\{\secs{g_t^\prime}\}_{t\in\cT}$, is the gradient of the sample with respect to the model checkpoint, whereas the real outputs are computed via \idealGradient, which computes the same~(c.f.~Functionality~\ref{fun:gradient}).
        Therefore, the outputs of the ideal functionality and the real protocol are the same, and the view of the adversary is indistinguishable from its view in the previous hybrid.

        \item This hybrid is defined as the previous one, with the only difference being that the simulator now does not receive the inputs of the honest parties.
        Because the simulator no longer relies on receiving inputs from the honest parties, the view of the adversary is perfectly indistinguishable from the previous hybrid.

        \end{hybrid}
\end{proof}

\begin{theorem}[Security of \protocolOursOnline]
\label{thm:gradient:online}
Given a set of $\sNumParties$ parties,
an adversary \sadversary who controls a set $\sPartyCorrupted$ of at most $\sNumParties - 1$ corrupted parties.
There exists a \PPT simulator $\ssim$ in the $(\idealGradient,\idealDotprod,\idealMul,\idealRecip,\idealSort)$-hybrid model such that the distributions:
\begin{equation*}
    \begin{gathered}
        \{ \texttt{Ideal}_{\protocolOursOnline, \ssim(\advAux), \sPartyCorrupted}
        \bigl(\secs{\spredictionX}, \secs{\spredictionY},
        \secs{\sGset_t}, \secs{\srecord}, k, \lambda\bigr)
        \\ \}_{\secs{\spredictionX}, \secs{\spredictionY},
        \{\secs{\sGset_t}, \secs{\srecord}, k, \advAux, \lambda}
        \\
        \approx
        \\
        \{ \real_{\idealGradOnline, \sadversary(\advAux), \sPartyCorrupted}
        \bigl(\secs{\spredictionX}, \secs{\spredictionY},
        \secs{\sGset_t}, \secs{\srecord}, k, \lambda\bigr)
        \\ \}_{\secs{\spredictionX}, \secs{\spredictionY}, 
        \secs{\sGset_t}, \secs{\srecord}, k, \advAux, \lambda}
    \end{gathered}
\end{equation*}
\end{theorem}

\begin{proof}
We define a simulator $\ssim$ through a series of subsequent modifications to the real execution.

    \begin{hybrid}
        \item The view of $\sadversary$ in this hybrid is distributed exactly as the view of $\sadversary$ in $\real$.

        \item In this hybrid, the real execution is emulated by a simulator that knows the real inputs of the parties $\secs{\spredictionX}, \secs{\spredictionY},
        \secs{\sGset_t},  \secs{\srecord}$
        and runs a full execution of the protocol with $\sadversary$, which includes emulating the ideal interactions with the ideal functionalities.
        The view of the adversary in this hybrid is the same as the previous one.

        \item In this hybrid, we replace the real inputs to the ideal functionality on behalf of the malicious parties
        with inputs extracted from \sadversary.
        The simulator receives \sadversary's secret shares of the input $\secs{\spredictionX}, \secs{\spredictionY}$ and $\{\secs{G_t},\secs{\theta_{t-1}}\}_{t\in\mathcal{T}}$ 
        in~Step~\ref{grad:protocol:online:score} by emulating
        the ideal functionality \idealGradient.
        It also receives \sadversary's secret shares of $\secs{g_t}$ in Steps~\ref{grad:protocol:online:loopstart}-\ref{grad:protocol:online:loopend} through the emulation of \idealDotprod.
        It forwards these to the ideal functionality, along with the public $\eta_t$'s that every client holds. 
        The view of the adversary in this hybrid is the same as the previous one because of sequential composition.

        \item In this hybrid, we replace the real inputs used by \ssim with dummy inputs.
        The simulator generates randomly sampled secret shares for the honest parties' inputs $\secs{\spredictionX}, \secs{\spredictionY},
        \secs{\sGset_t},  \secs{\srecord}$ for $j \notin \Cset$.
        For the public parameters $\eta_t$, it uses the $\eta_t$ received from the ideal functionality.
        The view of the adversary in this hybrid is indistinguishable from the previous one, due to the privacy property of the secret sharing scheme,
        and because none of the secret sharings are opened during the protocol execution.
        Note that in the last step, the simulator still uses the real outputs of \textsc{RankUsers} in its simulation towards \sadversary.

        \item %
        In this hybrid, we replace the real output to the \sadversary with the output of the ideal functionality.
        The simulator $\ssim$ receives the output of the ideal functionality $\secs{O}$, sets the honest clients' shares to $\secs{O}_j$ for $j \notin \Cset$,
        and sends the malicious clients' shares to the adversary as the output of \textsc{RankUsers}.
        The view of the adversary in this hybrid is indistinguishable from the previous one, because the output is identical in both worlds.

        \item This hybrid is defined as the previous one, with the only difference being that the simulator now does not receive the inputs of the honest parties.
        Because the simulator no longer relies on receiving inputs from the honest parties, the view of the adversary is perfectly indistinguishable from the previous hybrid.

    \end{hybrid}
\end{proof}

\begin{theorem}[Security of \protocolCamelPreprocessing]
\label{thm:camel:preprocessing}
Given a set of $\sNumParties$ parties,
an adversary \sadversary who controls a set $\sPartyCorrupted$ of at most $\sNumParties - 1$ corrupted parties.
there exists a \PPT simulator $\ssim$ in the $(\idealSGD)$-hybrid model such that the distributions:
\begin{equation*}
    \begin{gathered}
        \{ \texttt{Ideal}_{\protocolCamelPreprocessing, \ssim(\advAux),\sPartyCorrupted } (\secspartyinputsCondensed, \secs{\smodelW}, E, 
        \lambda) \\ \}_{\secspartyinputsCondensed, \secs{\smodelW}, E, \advAux, \lambda} \\
        \approx \\
        \{ \real_{\idealCamelPreprocessing, \sadversary(\advAux), \sPartyCorrupted } (\secspartyinputsCondensed, \secs{\smodelW}, E, \lambda) \\
        \}_{\secspartyinputsCondensed, \secs{\smodelW}, E, \advAux, \lambda}
    \end{gathered}
\end{equation*}

are computationally indistinguishable, where $\spartyinputsCondensed = \spartyinputs$ is a list of training datasets for each \gls{r:inputparty},
\smodelW is the model, $E$ is the number of epochs, 
$\lambda$ is the security parameter,
and $\advAux \in \{0,1 \}^*$ is an auxiliary input by the adversary to capture malicious strategy.

\end{theorem}

\begin{proof}
    We define a simulator $\ssim$ through a series of subsequent modifications to the real execution.

    \begin{hybrid}
        \item The view of $\sadversary$ in this hybrid is distributed exactly as the view of $\sadversary$ in $\real$.
        \item In this hybrid, the real execution is emulated by a simulator that knows the real inputs of the parties $\secs{\spartyinput_i},\secs{\smodelW}$
        and runs a full execution of the protocol with $\sadversary$, which includes emulating the ideal interaction with \idealSGD.
        The view of the adversary in this hybrid is the same as the previous one.

        \item In this hybrid, we replace the secret-shared inputs to the ideal functionality on behalf of the parties controlled by \sadversary
        with inputs extracted from \sadversary.
        Through internally simulating \idealSGD towards \sadversary using \ssimSGD, the simulator directly receives the malicious secret shares of the inputs and the model $\secs{\spartyinput_1}_j,\ldots,\secs{\spartyinput_N}_j,\secs{\smodelW}_j$ for $j \in \Cset$,
        which it forwards to the ideal functionality.
        The view of the adversary in this hybrid is the same as the previous one because of sequential composition.

        \item In this hybrid, we replace the public input to the ideal functionality with the input from the protocol.
        The simulator receives the public input $E$ from the ideal functionality.
        The view of the adversary in this hybrid is indistinguishable from the previous one, because the public input $E$ is the same as in the real protocol.

        \item In this hybrid, we replace the real inputs used by \ssim with dummy inputs.
        The simulator generates randomly sampled secret shares for the honest parties' inputs $\secs{\spartyinput_1}_j,\ldots,\secs{\spartyinput_N}_j,\secs{\smodelW}_j \sample \sring$ for $j \notin \Cset$.
        The view of the adversary in this hybrid is indistinguishable from the previous one, due to the privacy property of the secret sharing scheme.
        Note that in Step~\ref{camel:preprocessing:sgd}, the simulator still uses the real outputs of \idealSGD in its simulation towards \sadversary.

        \item In this hybrid, we replace the real output to the \sadversary with the output of the ideal functionality.
        The simulator $\ssim$ receives the output of the ideal functionality $\left\{\secs{\smodelW^\prime_{-1}}, \ldots, \secs{\smodelW^\prime_{-N}}\right\}$ and forwards it to the parties controlled the adversary in Step~\ref{camel:preprocessing:sgd}.
        We now argue why the view of the adversary in this hybrid is indistinguishable from the previous one.

        The ideal outputs, $\left\{\secs{\smodelW^\prime_{-1}}, \ldots, \secs{\smodelW^\prime_{-N}}\right\}$, are derived using the unlearning function \Unl, whereas the real outputs are computed via \idealSGD with the objective $\mathcal{L}$~(c.f.~Step~\ref{camel:preprocessing:sgd}).
        As shown in Equation~\ref{eq:objective}, the objective $\mathcal{L}$ matches that of the unlearning function \Unl.
        Therefore, the outputs of the ideal functionality and the real protocol are the same, and the view of the adversary is indistinguishable from its view in the previous hybrid.

        \item This hybrid is defined as the previous one, with the only difference being that the simulator now does not receive the inputs of the honest parties.
        Because the simulator no longer relies on receiving inputs from the honest parties, the view of the adversary is perfectly indistinguishable from the previous hybrid.

        \end{hybrid}
\end{proof}

\begin{theorem}[Security of \protocolCamelOnline]
    \label{thm:camel:online}
    Given a set of $\sNumParties$ parties,
    an adversary \sadversary who controls a set $\sPartyCorrupted$ of at most $\sNumParties - 1$ corrupted parties.
    there exists a \PPT simulator $\ssim$ in the $(\idealLoss)$-hybrid model such that the distributions:
    \begin{equation*}
        \begin{gathered}
            \{ \texttt{Ideal}_{\protocolCamelOnline, \ssim(\advAux),\sPartyCorrupted } (\secs{\spredictionX},\secs{\spredictionY}, \scache,
            \lambda) \\ \}_{\secs{\spredictionX},\secs{\spredictionY}, \scache, \advAux, \lambda} \\
            \approx \\
            \{ \real_{\idealCamelOnline, \sadversary(\advAux), \sPartyCorrupted } (\secs{\spredictionX},\secs{\spredictionY}, \scache,
            \lambda) \\ \}_{\secs{\spredictionX},\secs{\spredictionY}, \scache, \advAux, \lambda}
        \end{gathered}
    \end{equation*}

    are computationally indistinguishable, where $\spartyinputsCondensed = \spartyinputs$ is a list of training datasets for each \gls{r:inputparty},
    \smodelW is the model, $E$ is the number of epochs, 
    $\scache = \secs{\smodelW_{-1}}, \ldots, \secs{\smodelW_{-N}}$ is the set of unlearned models,
    $\lambda$ is the security parameter,
    and $\advAux \in \{0,1 \}^*$ is an auxiliary input by the adversary to capture malicious strategy.

\end{theorem}

\begin{proof}
    We define a simulator $\ssim$ through a series of subsequent modifications to the real execution.

    \begin{hybrid}
        \item The view of $\sadversary$ in this hybrid is distributed exactly as the view of $\sadversary$ in $\real$.

        \item In this hybrid, the real execution is emulated by a simulator that knows the real inputs of the parties $\secs{\spredictionX},\secs{\spredictionY},\secs{\smodelW_{-1}}, \ldots, \secs{\smodelW_{-N}}$
        and runs a full execution of the protocol with $\sadversary$, which includes emulating the ideal interactions with the ideal functionalities.
        The view of the adversary in this hybrid is the same as the previous one.

        \item In this hybrid, we replace the real inputs to the ideal functionality on behalf of the malicious parties
        with inputs extracted from \sadversary.
        The simulator receives the inputs $\secs{\spredictionX},\secs{\spredictionY},\secs{\smodelW_{-1}}, \ldots, \secs{\smodelW_{-N}}$ from \sadversary in~Step~\ref{camel:online:score} by emulating
        the ideal functionality \idealLoss.
        The view of the adversary in this hybrid is the same as the previous one because of sequential composition.

        \item In this hybrid, we replace the real inputs used by \ssim with dummy inputs.
        The simulator generates randomly sampled secret shares for the honest parties' inputs $\secs{\spredictionX},\secs{\spredictionY},\secs{\smodelW_{-1}}, \ldots, \secs{\smodelW_{-N}}$ for $j \notin \Cset$.
        The view of the adversary in this hybrid is indistinguishable from the previous one, due to the privacy property of the secret sharing scheme,
        and because none of the secret sharings are opened during the protocol execution.
        Note that in the last step, the simulator still uses the real outputs of \textsc{RankUsers} in its simulation towards \sadversary.

        \item %
        In this hybrid, we replace the real output to the \sadversary with the output of the ideal functionality.
        The simulator $\ssim$ receives the output of the ideal functionality $\secs{O}$, sets the honest clients' shares to $\secs{O}_j$ for $j \notin \Cset$,
        and sends the malicious clients' shares to the adversary as the output of \textsc{RankUsers}.
        The view of the adversary in this hybrid is indistinguishable from the previous one, because the output is identical in both worlds.

        \item This hybrid is defined as the previous one, with the only difference being that the simulator now does not receive the inputs of the honest parties.
        Because the simulator no longer relies on receiving inputs from the honest parties, the view of the adversary is perfectly indistinguishable from the previous hybrid.

    \end{hybrid}
\end{proof}

\end{myhideenv}

\begin{myhideenv}
    
\begin{algorithm}[htbp]
\caption{User-level Traceback Procedure with Heuristic Sample Selection}
\label{alg:traceback-heuristic}
\begin{algorithmic}[1]
    \Input
    User datasets $D_1, \ldots, D_m$,
    Misclassification event $\tilde{x},\tilde{y}$,
    Training record $R = \{(\theta_{t-1}, \eta_t, G_t)\}_{t \in \cT}$,
    Score parameter $k$
    \Ensure
    Ranked owners by responsibility scores $\mathcal{U}$

    \State \Comment{Compute dataset gradients $g_t$ (omitted)}
    \State \Comment{Pre-compute projected traceback gradients}    
    \For{$t \in \cT$}
        \State $\widehat{g_t} \gets G_t \nabla_{\theta_W} \ell(\theta_{t-1}; \tilde{x})$ \Comment{$G_t, \theta_{t-1}$ from $R$}
    \EndFor
    \vspace{0.5em}
    
    \For{$i = 1, \ldots, m$}
        \State \Comment{Extract $l$ relevant samples}
        \State $\mathbf{h} \gets \mathbf{0}$
        \State $I_i \gets \textsc{Indices}(i)$ \Comment{User $i$'s indices}
        \For{$j = 1, \ldots, \abs{I_i}$}
            \State $\mathbf{h}^{(j)} \gets \sum_{t \in \cT} \eta_t \iprod{g_t^{(I_i^{(j)})}}{\widehat{g_t}}$
        \EndFor
        \State $\widetilde{I_i} \gets \topk{l}_{\mathbf{h}}(I_i)$
        \vspace{0.5em}

        \State \Comment{Compute cosine scores}
        \State ${S_i} \gets \left\{\sum_{t \in \cT} \eta_t \frac{\iprod{g_t^{(\mathrm{idx})}}{\widehat{g_t}}}{\norm*{g_t^{(\mathrm{idx})}}_2 \norm*{\widehat{g_t}}_2} : \text{idx} \in \widetilde{I_i} \right\}$
        \vspace{0.25em}
        
        \State ${s_i} \gets \frac{1}{k} \sum_{j=1}^k \topk{k}({S_i})_j$
    \EndFor
    \State \Return $\textsc{RankUsers}({s_1}, \ldots, {s_m})$
\end{algorithmic}
\end{algorithm}

{
\floatname{algorithm}{Protocol}
\setlength\arraycolsep{1pt}
\begin{algorithm}[htbp]
\caption{\protocolOursHeuristicOnline}
\label{protocol:traceback-heuristic}
\begin{algorithmic}[1]
    \Input
        Misclassification event $\secs{\tilde{x}},\secs{\tilde{y}}$,
        Pre-processed training gradients $\{\secs{g_t^{(i)}}\}_{t\in\cT,i\in\abs{\Dtr}}$, 
        Training record $\secs{R}=\{(\secs{\theta_{t-1}},\eta_t,\secs{G_t})\}_{t\in\cT}$,
        Score parameters $k,l$
    \Ensure
        Ranked owners by responsibility scores $\{i:U_i\}$
    
    \State \Comment{Pre-compute projected traceback gradients}    
    \For{$t \in \cT$}
                \State Run 
        $\secs{\widehat{g_t}}
           \gets\idealGradient(\secs{G_t},\secs{\theta_{t-1}},\secs{\tilde{x}},\secs{\tilde{y}})$
    \EndFor
    \vspace{2em}
    
    \For{$i = 1, \ldots, m$}
        \State \Comment{Extract $l$ relevant samples}
        \State $\mathbf{h} \gets \mathbf{0}$
        \State $I_i \gets \textsc{Indices}(i)$ \Comment{User $i$'s indices}
        \For{$t \in \cT$}
        \State Run $\secs{\text{B}_t} \gets \idealDotprod(\widehat{g_t},\widehat{g_t})$
        \For{$j = 1, \ldots, \abs{I_i}$}
            \State Run $\secs{\mathbf{d}^{(j)}_t} \gets \idealDotprod(\secs{g_t^{(I_i^{(j)})}},\secs{\widehat{g_t}})$
            \State Run $\secs{\text{A}^{(j)}_t} \gets \idealDotprodNoTrunc(g_t^{\mathrm{(I_i^{(j)})}},g_t^{\mathrm{(I_i^{(j)}})})$
        \EndFor
        
        \EndFor
        \State $\secs{\mathbf{d}^{(j)}} \gets \sum_{t \in \cT} \eta_t \cdot \secs{\mathbf{d}^{(j)}_t}$ for $j \in [\abs{I_i}]$

        \State $\mat{M} \gets \begin{pmatrix}
    \secs{\mathbf{d}^{(1)}} & \left\{\secs{\mathbf{d}^{(1)}_{t}}\right\}_{t \in \cT} & \left\{ \secs{{\text{A}}^{(1)}_{t}} \right\}_{t \in \cT} \\
    \secs{\mathbf{d}^{(2)}}       & \left\{\secs{\mathbf{d}^{(2)}_{t}}\right\}_{t \in \cT} & \left\{ \secs{{\text{A}}^{(2)}_{t}} \right\}_{t \in \cT} \\
    \hdotsfor{3} \\
    \secs{\mathbf{d}^{(\abs{I_i})}}     & \left\{\secs{\mathbf{d}^{(\abs{{I}_i})}_{t}}\right\}_{t \in \cT} & \left\{ \secs{{\text{A}}^{(\abs{{I}_i})}_{t}} \right\}_{t \in \cT}\end{pmatrix}$
    \State Run $\mat{P} \gets \idealSort(\mat{M})$
    \State $\mat{M}_l \gets \mat{P}\left[0\ldots l \right]$
        \vspace{2em}

        \State \Comment{Compute cosine scores}
    \For{$j = 1,\ldots,l$}
    \For{$t \in \cT$}
        \State Run $\secs{{S_{t}^{(j)}}} \gets  \eta_t \cdot \idealMul(\secs{\mathbf{d}^{(j)}_{t}},
        \idealMul(\idealRecip(\idealTrunc(\secs{\text{A}^{(j)}_{t}}))),\idealRecip(\secs{\text{B}_{t}})))$
    \EndFor
    \State ${S^{(j)}} \gets \sum_{t \in \cT} \eta_t \cdot \secs{{S_{t}^{(j)}}}$
    \EndFor
        
        \State ${s_i} \gets \frac{1}{k} \sum_{j=1}^k \topk{k}({S_i})_j$
    \EndFor
    \State \Return $\textsc{RankUsers}({s_1}, \ldots, {s_m})$
\end{algorithmic}
\end{algorithm}
}

\subsection{Additional Optimizations for \sysgradient}
\label{appx:mpc_opt}
In this appendix, we discuss additional optimizations for \sysgradient, presented in~\Cref{protocol:ours:preprocessing,protocol:ours:online} and its heuristic version, presented in Protocol~\ref{protocol:traceback-heuristic}.

Realizing the traceback algorithm within the PPML setting presents significant challenges related to scalability and numerical precision, primarily due to the representation of values and the computational overhead of cryptographic operations.
    The main overhead of realizing \Cref{alg:traceback} as \sysgradient~(c.f.~\Cref{protocol:ours:preprocessing,protocol:ours:online}) arises from computing the gradients,
    computing the cosine distances, and finding the $\topk{k}$ samples.
    While gradient computation is inherently tied to the specific machine learning model and largely mirrors the training process, our focus here lies on optimizing the latter two tasks. Specifically, we address challenges related to numerical precision and propose three key optimizations to enhance the efficiency of the traceback algorithm in a secure computation setting: approximating the square root operation in the cosine computation, applying heuristic sample selection (cf.~\Cref{protocol:traceback-heuristic} obliviously, and delaying truncation of values.

\myparagraph{Precision}
Real numbers are typically not represented in floating-point representation in PPML settings
due to the computational cost associated with emulating floating-point operations.
Instead, they are expressed using fixed-point representation.
    Numbers are encoded with a fixed number of fractional bits.
    Specifically, given a fixed-point number $x$ with $f$ fractional bits, its corresponding field element can be represented as $a \leftarrow x \cdot 2^f \in \mathbb{R}$.
    While addition in this representation maps directly to ring addition, multiplication requires an additional truncation step.
    This is because multiplying two fixed-point numbers $a$ and $b$ yields $ab = xy \cdot 2^{2f}$, shifting the result by $f$ bits.

    This representation makes it challenging to maintain numerical precision in the protocol, particularly during cosine distance computation.
    Calculating cosine similarity requires computing both the inner product of two gradients, $\iprod{g_t^{(\mathrm{idx})}}{\widehat{g_t}}$,
    and the product of their magnitudes, $\norm*{g_t^{(\mathrm{idx})}}_2 \cdot \norm*{\widehat{g_t}}_2$.
    When these gradients are small---such as those corresponding to data points on which the model has overfit---the product of their magnitudes can become exceedingly small, leading to a loss in precision because the fixed-point representation cannot accurately represent the result.
    This issue has affects not only the choice of the ring size but also the optimizations applied in the protocol.
    In particular, reordering operations that are typically commutative can inadvertently exacerbate precision errors

\myparagraph{Optimizing Cosine Computation}
    The primary performance bottleneck in the traceback algorithm lies in evaluating the cosine similarity between training and test gradients.
    This bottleneck arises from the reliance on fixed-point division and square root operations, both of which are significantly more expensive in secure computation settings.
    Integer division and square root protocols such as those based on Goldschmidt and Raphson-Newton iterations~\cite{Aly2019-zz} require significantly more communication rounds compared to other operations, increasing both latency and bandwidth usage.
    A natural approach to simplify the square root operation is to rewrite the divisor as a single square root over the product of two squared norms, i.e., $\norm*{g_t^{(\mathrm{idx})}}_2 \norm*{\widehat{g_t}}_2 = \sqrt{\sum_i (g_{t,i}^{(\mathrm{idx})})^2 \cdot \sum_i (\widehat{g_{t,i}})^2}$ leveraging the commutativity of multiplication and square roots (for positive numbers).
    However, this method inadvertently exacerbates precision issues. Specifically, the product of the squared norms
    can become exceedingly small, demanding higher precision to accurately represent the result.

Instead, we optimize the cosine computation by decomposing the division into two separate divisions by the individual gradient magnitudes:
\begin{equation*}
    \begin{split}
        \frac{\iprod{g_t^{(\mathrm{idx})}}{\widehat{g_t}}}{\norm*{g_t^{(\mathrm{idx})}}_2 \norm*{\widehat{g_t}}_2}
        =
        \frac{\iprod{g_t^{(\mathrm{idx})}}{\widehat{g_t}}}{\sqrt{\sum_i (g_{t,i}^{(\mathrm{idx})})^2}}
        \cdot
        \frac{\iprod{g_t^{(\mathrm{idx})}}{\widehat{g_t}}}{\sqrt{\sum_i (\widehat{g_{t,i}})^2}}
    \end{split}
\end{equation*}
This decomposition allows us to compute the cosine similarity using two reciprocal square root operations of the individual squared gradient magnitudes.
The reciprocal square root appears in various deep learning primitives such as softmax, batch normalization, and optimizers such as Adam and AMSGrad and efficient approximation protocols exist~\cite{Lu2020-nl,Keller2022-quantizedtraining}.
Crucially, this approach preserves the same fixed-point precision requirements as the original division operation.
After computing the cosine distances, we select the $\topk{k}$ from the $d$ distance scores,
by obliviously sorting the scores and selecting the first $k$ samples.

\myparagraph{Oblivious Selection}
A challenge with realizing the heuristic protocol in MPC is that the computation must be data-oblivious.
In particular, the top-$l$ samples that we select with the heuristic cannot influence the subsequent computation, such as array accesses based on $\tilde{I}_i$.
A na\"{i}ve transformation to oblivious computation would entail doing the computation for all entries in $D_i$ and combining the result with an indicator bit based on whether we want to select the entry.
However, this removes the efficiency benefits of the algorithm, by requiring the computation of a cosine similarity score for the full input array.

To overcome this issue, we extend the vector that is sorted for $\topki{l}$ to a matrix that includes the data required for computing the final score $\Icosmean$.
We compute $\Icosmean$, for $j \in [\abs{\tilde{I}_i}]$, as
\begin{equation*}
    \sum_{t \in \cT} \eta_t \cdot \idealMul(\secs{\mathbf{d}^{(j)}_t},
    \idealMul(\idealRecip(
    \secs{\text{A}^{(j)}_t}),\idealRecip(
    \secs{\text{B}_t}))).
\end{equation*}
where $\secs{\mathbf{d}^{(j)}_t}$ is the inner product between the two gradients $\secs{\mathbf{d}^{(j)}_t}$
and $\secs{\text{A}^{(j)}_t}$ is the gradient magnitudes of the training sample and $\secs{\text{B}_t}$ is the gradient magnitudes of the test sample
for each checkpoint $t$.
We extend each row of the matrix $\mat{M}$ with $2\cdot \abs{\cT}$ columns to include $\left\{\secs{\mathbf{d}_{t}}\right\}_{t \in \cT}$ and $\left\{ \secs{\text{A}_{t}} \right\}_{t \in \cT}$ as
\begin{equation*}
    \mat{M} = \begin{pmatrix}
                  \secs{\mathbf{d}^{(1)}} & \left\{\secs{\mathbf{d}^{(1)}_{t}}\right\}_{t \in \cT} & \left\{ \secs{\text{A}^{(1)}_{t}} \right\}_{t \in \cT} \\
                  \secs{\mathbf{d}^{(2)}}       & \left\{\secs{\mathbf{d}^{(2)}_{t}}\right\}_{t \in \cT} & \left\{ \secs{\text{A}^{(2)}_{t}} \right\}_{t \in \cT} \\
                  \hdotsfor{3} \\
                  \secs{\mathbf{d}^{(\abs{I_i})}}     & \left\{\secs{\mathbf{d}^{(\abs{I_i})}_{t}}\right\}_{t \in \cT} & \left\{ \secs{\text{A}^{(\abs{{I}_i})}_{t}} \right\}_{t \in \cT}
    \end{pmatrix}
\end{equation*}
resulting in a $\abs{{I}_i} \times (2\abs{\cT} + 1)$-sized matrix that must be sorted. Sorting a matrix instead of a vector incurs some additional overhead in MPC. In particular, the radix-sorting algorithm used in our implementation incurs a linear communication overhead in the number of columns due to the shuffling sub-protocol, but no additional rounds~\cite{Hamada2014-obliviousradixsort}.

\myparagraph{Delayed Truncation}
Fixed-point representation is used in MPC to represent real numbers inside the ring (or field) $\mathbb{R}$ used in MPC protocols.
Given a fixed-point number $x$ with $f$ fractional bits, we can compute the corresponding field element as $a \leftarrow x \cdot 2^f \in \mathbb{R}$.
Multiplying $a$ with a second fixed-point number $a$ represented as $b \leftarrow x \cdot 2^f$ yields $ab = xy \cdot 2^{2f}$.
Therefore, we must truncate the result after multiplication through some truncation protocol modeled by \idealTrunc to obtain a fixed-point number with $f$ fractional bits.
Truncation is the most expensive part of computing fixed-point multiplications, because it is a non-linear operation that requires (a partial) bit-decomposition of the value~\cite{Keller2022-quantizedtraining,catrina2010secure}.

Our optimization is based on the observation that the sorting matrix $\mat{M}$ contains many elements that we will never use because they will not be selected as part of the $\topki{l}$ approximation. Therefore, we can delay the truncation of the elements in $\mat{M}$ until after the selection of the $\topki{l}$ elements. Instead of directly including $\secs{\text{N}^{(j)}_{t}}$ computed by \idealMul, we include $\secs{\tilde{\text{N}}^{(j)}_{t}}$, which is the result of a regular share multiplication using $\idealMulNoTrunc$. After selecting the $\topki{l}$ rows of $\mat{M}$, we truncate the partially fixed-point multiplied $\secs{\tilde{\text{N}}^{(j)}_{t}}$ of those rows. This optimizations saves $(\abs{D_i} - l) \cdot \abs{\cT}$ invocations of \idealTrunc.

\end{myhideenv}

\begin{myhideenv}

\section{Analysis of Unlearning}\label{appx:bad-unlearn}
In this section, we examine unlearning-based traceback from a theoretical perspective to understand in what scenarios it can be expected to detect poisoning.

In \Cref{appx:bad-approx-unlearn}, We examine the basic assumptions made by approximate unlearning, and discuss scenarios where they fail. We pair these findings with experimental results demonstrating the failure modes with concrete attacks. These results demonstrate that there exist attacks, some natural and some adaptive, which evade detection by approximate unlearning, motivating the need for better responsibility scores.

In \Cref{appx:bad-true-unlearn}, we state and prove a theorem characterizing a failure mode of \textit{true} unlearning for user-level detection; that is, in the case when the approximate unlearning routine $\unl(\theta; D'; \Dtr)$ is replaced with retraining from-scratch on the dataset $D' \setminus \Dtr$. This result implies that the vulnerabilities of approximate unlearning cannot be resolved simply by resorting to a more accurate unlearning procedure. In fact, approximate unlearning provides \textit{resilience} to multiple poisoned owners.

\subsection{Approximate Unlearning Failure Modes}\label{appx:bad-approx-unlearn}
Approximate unlearning is a heuristic technique for unlearning user datasets, and lacks formal guarantees about the resulting classifier. In this section, we analyze the properties of the approximate unlearning objective more carefully and do not assume the unlearned classifier necessarily behaves like a leave-one-out model. We show that by understanding what assumptions approximate unlearning makes, we can discover and explore scenarios where it fails.

\myparagraph{Distributional Analysis}
To make analysis tractable, we remove the learning aspect and consider a distributional setting. Let $\mu_c$ and $\mu_p$ be distributions over $\cZ = \cX \times \cY$, representing the clean and poisoned data distributions, respectively. Let $\alpha \in (0, 1)$ be the mixture weights between $\mu_c$ and $\mu_p$, so that the poisoned training distribution is the mixture $\alpha \mu_c + (1 - \alpha) \mu_p.$ Additionally, let $0 < \beta < \min(\alpha, 1 - \alpha)$ be the unlearning fraction representing the fraction of samples to be unlearned. For simplicity, we assume that this weight is equal for both $\mu_c$  and $\mu_p$, which is analogous to the assumption that poisoned and clean user datasets all have the same size. Specifically, we define the approximately unlearned distributions
\begin{align}
    \nu^{\mathrm{unl}}_{p} &= \alpha \mu_c + \beta \mu_p^{\mathrm{unl}} + (1 - \alpha - \beta) \mu_p \\
    \nu^{\mathrm{unl}}_{c} &= (\alpha - \beta) \mu_c + \beta \mu_c^{\mathrm{unl}} + (1 - \alpha) \mu_p,
\end{align}
where $\mu_p^{\mathrm{unl}}$ and $\mu_c^{\mathrm{unl}}$ represent the poisoned and clean distributions with labels replaced by a uniform posterior, respectively.

Our goal is to understand the behavior of $\nu^{\mathrm{unl}}_{p}(y^{\mathrm{atk}} | x^{\mathrm{atk}})$ and $\nu^{\mathrm{unl}}_{c}(y^{\mathrm{atk}} | x^{\mathrm{atk}})$ for some attack sample $(x^{\mathrm{atk}}, y^{\mathrm{atk}}) \in \cZ$. In particular, these quantities are the related to the loss-based influence score used by approximate unlearning. If these quantities are sufficiently close to each other, the unlearning-based score will be less reliable as an indicator of poisoning. We begin by stating a useful fact describing the input-conditioned label posterior for general mixtures:
\begin{lemma}\label{lem:mixture-posterior}
    Let $\mu_1, \ldots, \mu_n$ be distributions over $\cZ$, and let $0 < \alpha_1, \ldots, \alpha_n < 1$ be mixing weights (i.e., $\sum_{i} \alpha_i = 1$). Let $\nu = \sum_i \alpha_i \mu_i$. Then, for all $(x, y) \in \cZ$,
    \begin{equation}\label{eq:mixture-posterior}
        \nu(y | x) = \sum_{i=1}^n \lambda_i \mu_i(y | x)
    \end{equation}
    where
    \begin{equation}
        \lambda_i = \frac{\alpha_i \mu_i(x)}{\sum_{j=1}^n \alpha_j \mu_j(x)}.
    \end{equation}
\end{lemma}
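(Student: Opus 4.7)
The plan is to prove the lemma by a direct application of the definition of conditional probability, together with the standard decomposition of the joint distribution of a mixture into a marginal and a conditional. There is essentially no hard step here; the content is a one-line Bayes-rule style manipulation, and the only care required is to keep the factorization $\mu_i(x,y) = \mu_i(y\mid x)\,\mu_i(x)$ straight and to identify the normalizing denominator correctly.

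First I would write $\nu(y\mid x) = \nu(x,y)/\nu(x)$, which is well-defined on the support of $\nu(x)$. By linearity of expectation applied to the mixture, the joint satisfies
\begin{equation*}
\nu(x,y) \;=\; \sum_{i=1}^n \alpha_i\, \mu_i(x,y) \;=\; \sum_{i=1}^n \alpha_i\, \mu_i(y\mid x)\,\mu_i(x),
\end{equation*}
and marginalizing over $y$ (or equivalently applying the same identity with the trivial conditional) gives $\nu(x) = \sum_{j=1}^n \alpha_j\, \mu_j(x)$.

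Dividing these two expressions yields
\begin{equation*}
\nu(y\mid x) \;=\; \sum_{i=1}^n \frac{\alpha_i\, \mu_i(x)}{\sum_{j=1}^n \alpha_j\, \mu_j(x)}\, \mu_i(y\mid x) \;=\; \sum_{i=1}^n \lambda_i\, \mu_i(y\mid x),
\end{equation*}
which matches the claimed form of $\lambda_i$. The only subtlety is that the factorization step requires $\mu_i(x) > 0$ to define $\mu_i(y\mid x)$; on the event $\{x : \mu_i(x) = 0\}$ the corresponding term in the mixture contributes zero to both numerator and denominator, so the identity continues to hold under the standard convention that $0 \cdot (0/0) = 0$. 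This case distinction is the only place where one has to be mildly careful, but it does not change the proof. Since both sides of \eqref{eq:mixture-posterior} are well-defined whenever $\nu(x) > 0$, the claim follows.
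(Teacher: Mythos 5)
Your proof is correct and is exactly the Bayes'-rule computation the paper has in mind: the paper gives no explicit derivation, remarking only that the lemma is a straightforward consequence of Bayes' Theorem, and your decomposition of $\nu(x,y)$ into $\sum_i \alpha_i\,\mu_i(y\mid x)\,\mu_i(x)$ followed by division by $\nu(x)=\sum_j \alpha_j\,\mu_j(x)$ is precisely that argument. Your added remark about the degenerate case $\mu_i(x)=0$ is a harmless refinement the paper does not bother with.
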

\Cref{lem:mixture-posterior} is a straightforward consequence of Bayes' Theorem, and describes the posterior $\nu(y | x)$ as a convex combination of posteriors $\mu_1(y | x), \ldots, \mu_n(y | x)$. In particular, the posteriors under both $\nu^{\mathrm{unl}}_{p}$ and $\nu^{\mathrm{unl}}_{c}$ are given by
\begin{align}
    &\nu^{\mathrm{unl}}_{p}(y | x) \nonumber \\
    &= \frac{\alpha \mu_c(x) \mu_c(y | x) + \tfrac{\beta}{C} \mu_p(x) + (1 - \alpha - \beta) \mu_p(x) \mu_p(y | x)}{\alpha \mu_c(x) + (1 - \alpha) \mu_p(x)} \label{eq:mixture-posterior-psn} \\
    &\nu^{\mathrm{unl}}_{c}(y | x) \nonumber \\
    &= \frac{(\alpha - \beta) \mu_c(x) \mu_c(y | x) + \tfrac{\beta}{C} \mu_c(x) + (1 - \alpha) \mu_p(x) \mu_p(y | x)}{\alpha \mu_c(x) + (1 - \alpha) \mu_p(x)} \label{eq:mixture-posterior-clean}
\end{align}
where $C$ is the number of classes. We can inspect \Cref{eq:mixture-posterior-psn,eq:mixture-posterior-clean} for certain representative poisoning settings. For instance, for classification tasks with unambiguous labeling, it is likely that $\mu_c(y^{\mathrm{atk}} | x^{\mathrm{atk}}) = 0$, and so the posteriors simplify to
\begin{align*}
    \nu^{\mathrm{unl}}_{p}(y | x) &= \frac{\tfrac{\beta}{C} \mu_p(x) + (1 - \alpha - \beta) \mu_p(x) \mu_p(y | x)}{\alpha \mu_c(x) + (1 - \alpha) \mu_p(x)} \\
    \nu^{\mathrm{unl}}_{c}(y | x) &= \frac{\tfrac{\beta}{C} \mu_c(x) + (1 - \alpha) \mu_p(x) \mu_p(y | x)}{\alpha \mu_c(x) + (1 - \alpha) \mu_p(x)}.
\end{align*}
For backdoor attacks, it is additionally possible that $\mu_c(x) = 0$ if the trigger does not appear in benign samples. In this case, we obtain the more succinct characterization
\begin{align*}
    \nu^{\mathrm{unl}}_{p}(y | x) &= \mu_p(y | x) + \frac{\beta}{1 - \alpha} (1 / C - \mu_p(y | x)) \\
    \nu^{\mathrm{unl}}_{c}(y | x) &= \mu_p(y | x).
\end{align*}
The above relations suggest two conditions under which the loss-based influence score might fail to distinguish between benign and poisoned distributions: first, if the ratio $\beta / (1 - \alpha)$ is small (that is, if too few poisons are unlearned), and second, if the poisoned posterior $\mu_p(y | x)$ is sufficiently close to uniform. Note that in the first case, if $\mu_p(y | x)$ is large (near 1), then even a small decrease in the posterior from unlearning can correspond to a large loss difference. We believe this contributes to the success of approximate unlearning in detecting backdoor attacks, even in cases where poisons are distributed across multiple poisoned datasets. On the other hand, when $\mu_p(y | x)$ is nearly uniform, the differences in loss are less extreme.

\subsection{True Unlearning Failure Modes}\label{appx:bad-true-unlearn}
Given the deficiencies of approximate unlearning demonstrated in \Cref{appx:bad-approx-unlearn}, it may seem tempting to replace it with an unlearning mechanism which more accurately approximates leave-one-out retraining. However, we show that in fact true unlearning exacerbates the ability of the adversary to hide poisons among multiple poisoned parties.

\begin{figure}[htb]
    \centering
    \begin{subfigure}[b]{0.45\linewidth}
        \centering
        \includegraphics[width=\linewidth]{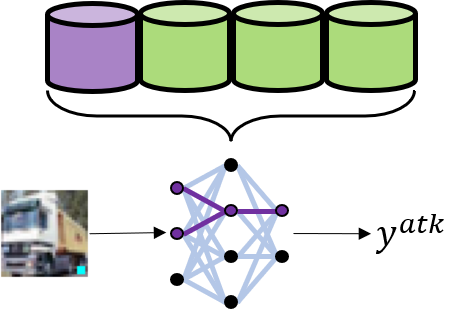}
        \caption{}
        \label{fig:attack_eg_1psn}
    \end{subfigure}
    \hfill
    \begin{subfigure}[b]{0.45\linewidth}
        \centering
        \includegraphics[width=\linewidth]{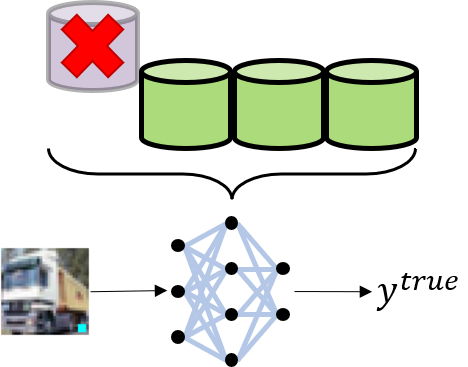}
        \caption{}
        \label{fig:attack_eg_1psn_unl}
    \end{subfigure}

    \vspace{0.5cm}
    
    \begin{subfigure}[b]{0.45\linewidth}
        \centering
        \includegraphics[width=\linewidth]{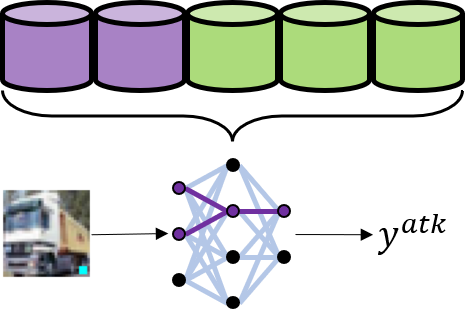}
        \caption{}
        \label{fig:attack_eg_2psn}
    \end{subfigure}
    \hfill
    \begin{subfigure}[b]{0.45\linewidth}
        \centering
        \includegraphics[width=\linewidth]{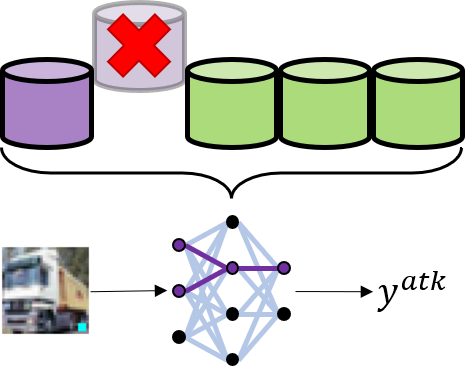}
        \caption{}
        \label{fig:attack_eg_2psn_unl}
    \end{subfigure}

    \caption{Different attack scenarios with concentrated versus distributed poisoning. (\subref{fig:attack_eg_1psn}) A single user dataset contains poisons, causing a malicious classification. (\subref{fig:attack_eg_1psn_unl}) Retraining the model after removing the malicious dataset removes the malicious behavior from the model. (\subref{fig:attack_eg_2psn}) Multiple user datasets contain poisons, causing a malicious misclassification. (\subref{fig:attack_eg_2psn_unl}). Removing only one of many poisoned datasets does not remove the malicious behavior from the retrained model.}
    \label{fig:attack_eg}
\end{figure}

In the case when the auditor produces the unlearned models $\theta_i'$ by retraining from scratch for each user $i$, each time leaving out the user dataset $D_i$, the effect of duplicated poisons across multiple user datasets is exacerbated. We consider a minimal theoretical example demonstrating this possibility. Consider two distributions $\mu_c$ and $\mu_p$ over $\cX \times \cY$, representing the clean and poisoned distributions, respectively. If the distributions $\mu_c$ and $\mu_p$ are compatible with each other, and if the model space $\Theta$ is sufficiently rich to simultaneously learn both distributions, then learning on any nontrivial mixture of those distributions yields the same set of loss-minimizing classifiers. Intuition for this idea is given \Cref{fig:attack_eg}. We formalize this idea below:

\begin{theorem}\label{thm:bad-true-unlearn}
    Let $\mu_c$ and $\mu_p$ be distributions over $\cZ = \cX \times \cY$, and let $\Theta \subseteq \bbR^d$ be a parameter space. For a nonnegative loss function $\cL(\theta; \mu) = \bbE_{(x, y) \sim \mu}[\ell(\theta; x, y)]$, denote
    \begin{align*}
        \Theta_c &:= \argmin_{\theta \in \Theta} \cL(\theta; \mu_c) \subseteq \Theta \\
        \Theta_p &:= \argmin_{\theta \in \Theta} \cL(\theta; \mu_p) \subseteq \Theta.
    \end{align*}
    If $\Theta_c \cap \Theta_p \ne \emptyset$, then for all $\alpha \in (0, 1)$, it holds that
    \begin{align*}
        \argmin_{\theta \in \Theta} \cL(\theta; \alpha \mu_c + (1 - \alpha) \mu_p) = \Theta_c \cap \Theta_p.
    \end{align*}
\end{theorem}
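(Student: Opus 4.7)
The proof plan relies on the fact that expected loss is \emph{linear} in the distribution, so the mixture loss decomposes cleanly and the two minimization problems can be analyzed separately. My first step is to apply linearity of expectation to write
\begin{equation*}
\cL(\theta;\alpha\mu_c+(1-\alpha)\mu_p) = \alpha\cL(\theta;\mu_c) + (1-\alpha)\cL(\theta;\mu_p),
\end{equation*}
valid for every $\theta\in\Theta$. Let $L_c^\star = \min_{\theta\in\Theta}\cL(\theta;\mu_c)$ and $L_p^\star = \min_{\theta\in\Theta}\cL(\theta;\mu_p)$; note that these minima are attained (and finite) precisely because $\Theta_c$ and $\Theta_p$ are nonempty by hypothesis.

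Second, I would establish the pointwise lower bound $\cL(\theta;\alpha\mu_c+(1-\alpha)\mu_p)\ge \alpha L_c^\star + (1-\alpha) L_p^\star$ for every $\theta\in\Theta$, by bounding each of the two summands separately. This gives a uniform lower bound for the mixture loss over $\Theta$. Third, I would show that this bound is actually achieved: since the hypothesis $\Theta_c\cap\Theta_p\ne\emptyset$ supplies some $\theta^\star$ that simultaneously minimizes both $\cL(\cdot;\mu_c)$ and $\cL(\cdot;\mu_p)$, plugging in $\theta^\star$ saturates the bound. Hence $\alpha L_c^\star + (1-\alpha)L_p^\star$ is the mixture loss minimum, and every $\theta\in\Theta_c\cap\Theta_p$ achieves it.

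Fourth, for the reverse inclusion I would argue that any minimizer $\theta^\dagger$ of the mixture loss must lie in $\Theta_c\cap\Theta_p$. This is where the hypothesis $\alpha\in(0,1)$ is essential: since the coefficients $\alpha$ and $1-\alpha$ are both strictly positive, equality in the sum of the two lower bounds forces equality in each one individually. Explicitly, if $\cL(\theta^\dagger;\mu_c)>L_c^\star$ then
\begin{equation*}
\alpha\cL(\theta^\dagger;\mu_c)+(1-\alpha)\cL(\theta^\dagger;\mu_p) > \alpha L_c^\star + (1-\alpha) L_p^\star,
\end{equation*}
contradicting minimality; the symmetric argument handles $\mu_p$. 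Therefore $\theta^\dagger\in\Theta_c$ and $\theta^\dagger\in\Theta_p$, completing the set equality.

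The proof is essentially a one-line observation once linearity of expectation is invoked, and there is no real obstacle. The only subtlety worth flagging is the strict positivity of both mixture weights: if $\alpha\in\{0,1\}$ the claim would degenerate to one of the original argmin sets, and the forcing step in the fourth paragraph would fail. No additional regularity (convexity, compactness, measurability beyond what is needed for the expectations to exist) is required, which is actually what makes the result striking: it says \emph{any} expressive enough hypothesis class that can jointly fit clean and poisoned distributions will preserve the poisoned behavior no matter how the two are mixed, so true leave-one-out unlearning cannot distinguish the poisoned users when poisons are duplicated across datasets.
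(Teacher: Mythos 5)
Your proposal is correct and follows essentially the same route as the paper: decompose the mixture loss by linearity, establish the lower bound $\alpha L_c^\star + (1-\alpha)L_p^\star$, observe it is attained on $\Theta_c \cap \Theta_p$, and then use strict positivity of both weights to force any mixture minimizer into both argmin sets (the paper phrases this last step as an algebraic rearrangement rather than your proof by contradiction, but the content is identical).
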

\begin{proof}
    Fix $\alpha \in (0, 1)$. For notational convenience, denote
    \begin{equation*}
        \nu := \alpha \mu_c + (1 - \alpha) \mu_p.
    \end{equation*}
    Additionally, denote by $L_c$ and $L_c$ the minimum loss achieved by those classifiers in $\Theta_c$ and $\Theta_p$, respectively.
    
    We show inclusions on both sides. For all $\theta' \in \Theta$, the mixture loss can be decomposed as
    \begin{equation}
        \cL(\theta'; \nu) = \alpha \cL(\theta'; \mu_c) + (1 - \alpha)\cL(\theta'; \mu_p) \label{eq:nu-loss}.
    \end{equation}
    Since $\cL(\theta; \mu_c) \ge L_c$ and $\cL(\theta; \mu_p) \ge L_p$ by definition, we also have for all $\theta' \in \Theta$ that
    \begin{equation}\label{eq:nu-loss-ineq}
        \cL(\theta'; \nu) \ge \alpha L_c + (1 - \alpha)L_p.
    \end{equation}
    Moreover, since $\Theta_c \cap \Theta_p \ne \emptyset$, this is in fact the minimum attainable loss on $\nu$, attained by at least those classifiers in $\Theta_c \cap \Theta_p$. That is, we have shown the inclusion
    \begin{equation*}
        \argmin_{\theta \in \Theta} \cL(\theta; \nu) \supseteq \Theta_c \cap \Theta_p.
    \end{equation*}
    
    Now, Suppose $\theta \in \argmin_{\theta \in \Theta} \cL(\theta; \nu)$. We must show that $\theta$ simultaneously minimizes loss on $\mu_c$ and $\mu_p$. Consider the case for $\mu_c$. We have shown that
    \begin{equation*}
        \alpha \cL(\theta; \mu_c) + (1 - \alpha)\cL(\theta; \mu_p) = \alpha L_c + (1 - \alpha) L_p.
    \end{equation*}
    Rearranging, we find that
    \begin{align*}
        \cL(\theta; \mu_c) &= \tfrac{1}{\alpha} \left[\alpha L_c + (1 - \alpha) (L_p - \cL(\theta; \mu_p)) \right] \\
        &\le L_c.
    \end{align*}
    Since $L_c$ minimizes loss on $\mu_c$, it follows that $\theta \in \Theta_c$. Symmetric argument shows that $\theta \in \Theta_p$, and thus
    \begin{equation*}
        \argmin_{\theta \in \Theta} \cL(\theta; \nu) \subseteq \Theta_c \cap \Theta_p. \tag*{\qedhere}
    \end{equation*}
\end{proof}

\myparagraph{Implications of \Cref{thm:bad-true-unlearn}}
The distributions $\mu_c$ and $\mu_p$ in \Cref{thm:bad-true-unlearn} can be replaced with the empirical distributions $D_i$ for each user. Interpreted this way, the theorem describes a simplified setting where each user dataset is one of two possible datasets, $D_c$ or $D_p$. If the necessary assumptions are satisfied, performing true unlearning on any individual dataset that is duplicated across multiple users corresponds to setting a different $\alpha \in (0, 1)$ (namely, $\alpha = \frac{a - 1}{a + b - 1}$, where $a$ and $b$ are the multiplicities of each dataset and we have removed one from those counted by $a$).

The condition that $\Theta_c \cap \Theta_p \ne 0$ requires that the learning tasks imposed by $\mu_c$ and $\mu_p$ are sufficiently compatible with each other, relative to the parameter space $\Theta$. This can be the case, for example, in backdoor attacks, where the supports of poisoned and clean distributions projected to the input space have trivial intersection (e.g., if no clean inputs contain the trigger). In this case, a sufficiently complex model can distinguish from which mixture an input originated and change its decision accordingly.

\end{myhideenv}

\begin{myhideenv}

\section{Cost-Effective Gradient Computation Details}
\label{appx:gradient}

\subsection{Cost Effective Gradient Computation}\label{sec:cost-effective-gradient}
Since training and traceback occur at different times, any inputs required for traceback must either be cached during training or re-supplied at traceback time. In particular, the gradients used in influence score computations can either be computed once during the training procedure and cached for future traceback executions, or recomputed during each traceback execution. Reusing cached gradients reduces runtime and removes the need for data owners to re-share private inputs during traceback. This is especially important in privacy-critical settings where data owners may be required to delete sensitive data when certain conditions are met.

Computing and storing sample-level gradients present two problems. First, computing full parameter gradients at each checkpoint can be expensive, since each checkpoint necessitates a forward and backward pass over the entire training set. Second, the long-term storage of sample-level gradients demands an untenable volume of storage space: the required storage grows linearly with the number of checkpoints $\abs{\cT}$, the training set size $\abs{\Dtr}$, and the dimension of the model $\dim(\Theta)$. To improve the traceback efficiency, we introduce two strategies focused on gradient computation and storage.

\myparagraph{Utilizing Final Layer Gradients}
Prior work has observed that under common initialization and layer normalization techniques, the final layer is responsible for most of the gradient norm variation \cite{katharopoulos_not_2019, hammoudeh_identifying_2022}. Moreover, when computing gradients with the backpropagation algorithm, a layer's gradient does not depend on earlier layers. By using gradients from only the model's later layers, the cost of gradient computation is reduced to essentially that of a single forward pass. We find that using gradients from the final one or two layers of the network generally results in a marginal reduction in traceback effectiveness while significantly improving runtime.

Writing the network's classification function as
\begin{equation*}
    f_\theta(x) = \mathrm{softmax}(\psi_{\theta_\psi}(\varphi_{\theta_\varphi}(x)))
\end{equation*}
for $\theta = [\theta_\varphi \enskip \theta_\psi]$ for parameterized feature extractor $\varphi$ and classification layers $\psi$, we replace all gradients $\nabla_\theta$ in a given gradient similarity score with $\nabla_{\theta_\psi}$. The resulting influence score can be interpreted as the mean gradient alignment at given classification parameters over a sequence of feature extractors.

\myparagraph{Gradient Projection and Caching}
A popular technique for data-oblivious dimensionality reduction is to use random projections to form low-dimensional \textit{sketches} \cite{johnson_extensions_1984}. Recently, Pruthi et al. showed how to use sketches to reduce the storage requirements for training data gradients in gradient-based influence methods \cite{pruthi_estimating_2020}. The idea is to apply a random projection into a low-dimensional vector space in a way which approximately preserves inner products. The projection dimension is controlled by the model trainer and can be chosen according to the desired trade-off between long-term storage constraints and future traceback performance.

We perform random gradient projection by choosing a random matrix $G \in \bbR^{r \times p}$, where $r$ is the desired projection dimension and $p$ is the native gradient dimension, where $\bbE[G^T G] = I_p$. The result induces an unbiased (over the randomness of $G$) estimator of the inner product between any two (fixed) vectors $u, v \in \bbR^p$:
\begin{equation*}
    \bbE[(Gu)^T(Gv)] = \bbE[u^T G^T G v] = u^T v.
\end{equation*}
Following Pruthi et al., we sample $G$ by choosing its entries i.i.d. from $\cN(0, 1/r)$. Additionally, we consider a variation of Pruthi et al.'s gradient projection strategy. Instead of re-using the same projection matrix across all checkpoints, we propose to sample a new projection matrix at each checkpoint. We do this to reduce the probability of obtaining a single low-quality projection for a given downstream misclassification event, at the cost of additional storage space. Despite requiring storage of multiple projection matrices, the overhead does not grow with the number of training samples. Hence, for large enough datasets, using many low-dimensional projections consumes less storage space than using a single high-dimensional projection applied at all checkpoints.

Low-dimensional gradient sketches can be directly \textit{reused} between traceback executions. Thus, we view the computation of projected gradients as a pre-processing step embedded inside the training procedure. At key checkpoints, the training algorithm generates projection matrices $G_t$ and projected gradients $g_t$ of the training loss. These projected gradients are then used for multiple downstream traceback computations.

\end{myhideenv}

\end{appendices}

\end{document}